\documentclass[11pt, a4paper]{article}

\usepackage[english]{babel}
\usepackage[autostyle]{csquotes}
\usepackage[T1]{fontenc}
\usepackage{microtype}
\usepackage[shortlabels]{enumitem}

\usepackage{lmodern}
\usepackage{geometry}

\usepackage{customoperators}
\usepackage{bibliographysettings}

\usepackage[pdftex, bookmarksnumbered,colorlinks=false]{hyperref}
\usepackage{bookmark}

\usepackage[capitalize, nameinlink]{cleveref}
\crefformat{equation}{(#2#1#3)}
\crefrangeformat{equation}{(#3#1#4)--(#5#2#6)}
\crefmultiformat{equation}{(#2#1#3)}%
{ and~(#2#1#3)}{, (#2#1#3)}{, and~(#2#1#3)}

\theoremstyle{plain}
\newtheorem{thm}{Theorem}[section]
\newtheorem{prop}[thm]{Proposition}
\newtheorem{lem}[thm]{Lemma}

\newtheorem{assu}[thm]{Assumption}

\theoremstyle{definition}

\newtheorem*{rem}{Remark}

\counterwithin{thm}{section}

\AddToHook{env/lem/begin}{\crefalias{thm}{lem}}
\AddToHook{env/prop/begin}{\crefalias{thm}{prop}}
\AddToHook{env/cor/begin}{\crefalias{thm}{cor}}
\AddToHook{env/assu/begin}{\crefalias{thm}{assu}}
\AddToHook{env/defn/begin}{\crefalias{thm}{defn}}
\AddToHook{env/eg/begin}{\crefalias{thm}{eg}}
\AddToHook{env/rem/begin}{\crefalias{thm}{rem}}
\AddToHook{env/nota/begin}{\crefalias{thm}{nota}}

\crefname{thm}{Theorem}{Theorems}
\crefname{lem}{Lemma}{Lemmas}
\crefname{prop}{Proposition}{Propositions}
\crefname{cor}{Corollary}{Corollaries}
\crefname{assu}{Assumption}{Assumptions}
\crefname{defn}{Definition}{Definitions}
\crefname{eg}{Example}{Examples}
\crefname{rem}{Remark}{Remarks}
\crefname{nota}{Notation}{Notations}

\newcommand{\ao}{\mathfrak{a}}

\ifpdf
\hypersetup{
  pdfauthor={Lukas Reichmann, Arnaud Triay},
  pdftitle={On the spherical symmetry and finite-range assumptions of the interaction potential in the low energy study of dilute Bose gases},
}
\fi

\title{On the spherical symmetry and finite-range assumptions of the interaction potential in the low energy study of dilute Bose gases}
\author{Lukas Reichmann\footnote{Department of Mathematics, Friedrich-Alexander-Universität Erlangen-Nürnberg, Cauerstrasse 11, 91058 Erlangen, Germany} \and Arnaud Triay\footnote{Department of Mathematics, LMU Munich, Theresienstrasse 39, 80333 Munich, Germany}}
\date{\today}

\begin{document}
\maketitle
\pdfbookmark[section]{Abstract}{abstract} 
\begin{abstract}
We consider a Bose gas on the three-dimensional torus in the Gross--Pitaevskii regime and explain how to remove the assumptions of radiality and compact support on the interaction potential in the proof of Bose--Einstein condensation and convergence of the excitation spectrum to Bogoliubov's prediction. In particular, we sketch the proof of \cite{Reichmann_2025}.
\end{abstract}

%%%%%%%%%%%%%%%%%%%%%%%%%%%%%%%%%%%%%%%%%%%%%%%%%%%%%%%%
\section{Introduction}
The ideal Bose gas has been well understood since the pioneering works of Bose and Einstein in 1924 \cite{Bose_1924,Einstein_1924}. Einstein applied ideas used by Bose for photons to a gas of atoms and later predicted Bose--Einstein condensation (BEC) \cite{Einstein_1925}. 

The rigorous study of interacting Bose gases, on the other hand, remains an active area of research both in mathematics and in physics. In 1947, Bogoliubov introduced his theory for interacting Bose gases \cite{Bogoliubov_1947}, which allowed him to compute the low-lying excitation spectrum of the system and to justify Landau's criterion for superfluidity \cite{Landau_1941}. Namely, he showed that the dispersion relation of excited particles above the condensate behaves linearly at small momenta, a feature which is absent in ideal gases and illustrates the crucial role of interactions in the properties of dilute Bose gases.

The low-density Bose gas has been famously described by Lee, Huang, and Yang \cite{Lee_1957} and Wu \cite{Wu_1959}, who proposed a low-density expansion for its ground state energy
\begin{align*}
        e(\rho) = 4\pi \mathfrak{a} \rho^2 \bigg(1 + \frac{128}{15\sqrt{\pi}} \sqrt{\rho \mathfrak{a}^3} + 8 \left(\frac{4\pi}{3} - \sqrt{3}\right)\rho \mathfrak{a}^3 \log (\rho \mathfrak{a}^3) +\mathcal O(\rho \mathfrak{a}^3) \bigg),
\end{align*}
where $\rho$ is the gas density. A remarkable property of this expansion is its universality: it only depends on the interaction potential through its scattering length $\mathfrak{a}$, and not on the detailed shape of the potential, such as its radial symmetry or finite range. Indeed, the interaction between atoms in experiments arises from their electronic structure: it is usually approximated by a van der Waals attractive tail in $- \abs{x}^{-6}$ at large distances and by a strong repulsive core at short distances, preventing collapse, as for example in the Lennard--Jones potential, which behaves like $+ \abs{x}^{-12}$ near the origin. In practice, the effective interaction strength is even tuned using Feshbach resonances.  One therefore expects this expansion to remain valid for general short-range interaction potentials.

The mathematical justification of this expansion has attracted a lot of attention. Dyson showed, already in 1957, an upper bound matching the leading order \cite{Dyson_1957}, but the lower bound was only proved in 1998 by Lieb and Yngvason \cite{Lieb_1998}. It is worth noting that the mathematical treatment of potentials with signs is very delicate and remains a major open problem. While the case of weakly interacting systems \cite{Seiringer-11,GreSei-13,Lewin-15} can be handled, known results for dilute systems only cover potentials with a shallow negative part \cite{Lee-09,Yin-10b}. For this reason most works restrict to the case of purely repulsive interactions.

While the leading-order effect corresponds to two-body collisions, the second-order Lee--Huang--Yang correction is more subtle. It originates from the collective behavior of the gas and is described by the ground state energy of Bogoliubov's Hamiltonian governing the excitations above the condensate. Its mathematical justification was established by Yin and Yau \cite{Yau_2009}, who proved a matching upper bound using a trial state argument, while the matching lower bound was obtained by Fournais and Solovej \cite{Fournais_2020,Fournais_2023}. There have been several simplifications and extensions: in \cite{Basti_2021} a new proof of the upper bound was given, the formula has been extended to positive temperature in \cite{Seiringer-08,Yin-10,BasBocCenDeu-25,Haberberger_2024,Haberberger_2024upper}, the upper bound has been extended to hard-core potentials \cite{Basti_2026}, and even an upper bound capturing the next order has been proved \cite{Brooks_2025}.

A key observation underlying the analysis of dilute Bose gases is that the relevant low-energy excitations contributing to the Lee--Huang--Yang correction have momentum of order $\sqrt{\rho \mathfrak{a}}$, corresponding to the inverse of the healing length, also called the Gross--Pitaevskii length. Therefore, a large body of mathematical works has been dedicated to understanding the Bose gas at this length scale and have paved the way to a rigorous justification of the Lee--Huang--Yang formula.

A standard model is given by the so-called Gross--Pitaevskii regime, in which the particle number satisfies $N \simeq (\rho \mathfrak{a}^3)^{-1/2}$. In this ultra-dilute regime, the kinetic energy of the system is comparable to its interaction potential energy, which allows one to prove Bose--Einstein condensation and rigorously justify Bogoliubov's theory. For trapped systems, the condensate wave function is described by the Gross--Pitaevskii equation \cite{Gross_1961,Gross_1963,Pitaevskii_1961}, which was rigorously justified by Lieb, Seiringer, and Yngvason in 2000 \cite{Lieb_2000}. For weakly interacting systems, Bogoliubov's prediction was first rigorously justified by Seiringer \cite{Seiringer_2011} on the torus and extended to the trapped case \cite{GreSei-13}. The more singular Gross--Pitaevskii limit was later achieved by Boccato, Brennecke, Cenatiempo, and Schlein \cite{Boccato_2019}. In the Gross--Pitaevskii limit, the trapped case was independently solved by Brennecke, Schlein, and Schraven \cite{Brennecke_2022} and Nam and Triay \cite{Nam_2023}. See also \cite{Hainzl_2022,Brooks_2024} for simplifications of the proof, and \cite{Deuchert_2019,DeuSei-20,BocDeuSto-24,CapDeu-25} for studies of the system at positive temperature. The third-order contribution to the ground state energy of order $\log(N)/N$, corresponding to the Wu term, was derived in \cite{Caraci_2025}. 

A key ingredient in the derivation of Bogoliubov's theory is the Bose--Einstein condensation of low-energy states. It was first proved by Lieb and Seiringer \cite{Lieb_2002,Lieb_2006} for length scales slightly larger than the healing length, then improved with optimal rate in the Gross--Pitaevskii regime in \cite{Boccato_2020}. See also \cite{Nam_2016,Hainzl_2021,Fournais-21,AdhBreSch-21,BreBroCarOld-25,Junge_2026,Chong_2026} for other proofs. Surprisingly, all of these works assume the interaction potential to be radial and compactly supported. In fact, many later results build on BEC as an input, which is the only reason why they assume spherical symmetry of the interaction potential, see e.g.\ \cite{LamTri-25}.

An even stronger motivation to consider non-radial interactions comes from dipolar gases, where the atoms interact via the dipole-dipole interaction
\begin{align*}
K(x) = \frac{1-3\cos(\theta)^2}{\abs{x}^3},
\end{align*}
where $\cos(\theta)= x_3/\abs{x}$, see \cite{Chomaz-23} for a survey of experiments. However, due to its long range, the physics of these systems can no longer be solely described by s-wave scattering. The derivation of the Gross--Pitaevskii functional for such an interaction is still open, see \cite{Triay_2018} for partial results.

As already noted in \cite[Appendix C]{Lieb_2005}, and as done in \cite{BriSol-20,Fournais_2023}, the finite-range assumption can be easily lifted by an approximation argument if the interaction potential decays sufficiently fast.

On the other hand, spherical symmetry is often assumed from the start, already in the definition of the scattering length. Consider the zero-energy scattering equation \cite{Lieb_1998}
\begin{align*}
- 2 \Delta f + V f = 0, \qquad 0\leqslant f (x) \xrightarrow{\abs{x}\to \infty}1,
\end{align*}
which has a unique solution $f$ called the scattering function. When $V$ is radial and compactly supported, $f$ is also radial and since it is harmonic outside the support of $V$, the scattering length can be defined as the number $\ao$ such that
\begin{align*}
f(x) = 1 - \frac{\ao}{\abs{x}}, \quad x \notin \supp V.
\end{align*}
When $V$ is not radial but even, $V(x)= V(-x)$, we can still write
\begin{align*}
f(x) = 1 - \frac{1}{8\pi} \int_{\mathbb{R}^{3}} \frac{V(y)f(y)}{\abs{x-y}} \diff y = 1 - \frac{\ao}{\abs{x}} + \mathcal O\biggl(\frac{1}{\abs{x}^3}\biggr),
\end{align*}
where the dipole contribution in $\mathcal O(\abs{x}^{-2})$ vanishes by symmetry and where now the scattering length is given by 
\begin{align*}
\ao \coloneqq \frac{1}{8\pi} \int_{\mathbb{R}^{3}} V(y) f(y) \diff y.
\end{align*}
Of course, this is so far only a question of definition. %Let us move to the core of the problem. 
A key step in the proof of condensation in \cite{Lieb_2002,Lieb_2006} is the so-called Dyson lemma: assume $V$ is radial and compactly supported, then
\begin{align*}
-2 \Delta + V \geqslant 8\pi\ao U
\end{align*}
for any $0 \leqslant U$ radial with $\supp U \cap \supp V = \emptyset$ and $\int_{\mathbb{R}^{3}} U = 1$. Originally stated in \cite{Lieb_1998} to derive a lower bound for the leading order of the ground state energy, its proof relies crucially on the reduction to a one-dimensional problem. 

We will present a version of the Dyson lemma for non-radial potentials adapted from \cite{Ricaud_2023} (see \cref{lem:Dyson}), where the case of three-body interaction potentials (which are not radial) was studied. The lemma stated in \cite{Ricaud_2023} assumes the interaction potential to be bounded, and we extend it to integrable potentials. 

Let us mention other proofs of condensation that do not rely on the Dyson lemma. The general method is to implement a quadratic approximation of the Hamiltonian. If the interaction potential is sufficiently small, certain cubic and quartic error terms can directly be absorbed by the gap of the kinetic operator to prove condensation. This was done in \cite{Boccato_2018}. The technique was then simplified in \cite{Hainzl_2021} and allows for non-radial potentials. For larger potentials, these cubic and quartic error terms can also be renormalized via unitary transformations \cite{AdhBreSch-21}, or by completion of the square \cite{BreBroCarOld-25}. Another route is to use some localization argument and to propagate the gap of the kinetic operator to smaller subsystems where the cubic and quartic error terms are absorbable \cite{BriSol-20,Fournais_2020,Fournais_2023,Fournais-21,Chong_2026,Junge_2026}. Radiality seems to be always assumed to simplify computations and to use technical estimates on the scattering function, but it does not seem to be essential to the proof.

The article is organized as follows. In the next section, we state our main result \cref{thm:mainthm1}, which is the condensation of low-energy states and the derivation of the excitation spectrum without assuming either spherical symmetry or finite range of the potential. In \cref{sec:truncated_scattering_solution}, we study the scattering problem for non-radial potentials. In \cref{sec:gse}, we recover the first-order lower bound on the energy from the Dyson lemma. \cref{sec:fock} is dedicated to explaining some notations of the Fock space, while in \cref{ch:renorm}, we recall how to diagonalize the Hamiltonian and bound new error terms coming from the finite-range approximation. In \cref{ch:bec}, we explain how to prove optimal condensation and in \cref{ch:proof}, we complete the proof of the convergence of the excitation spectrum. As the proof is based on \cite{Reichmann_2025}, many parts of it, which are now standard, are sketched.

\section{Model and main result}

We consider a Bose gas on the torus in the Gross--Pitaevskii regime, i.e., the scaling of the interaction potential of the bosons is explicitly given by $N^2V(N\cdot)$.
While previous works on the low-energy excitation spectrum of Bose gases in this regime required radially symmetric interaction potentials with compact support, we relax these assumptions to the more physically realistic case of even potentials without compact support that decay sufficiently fast.

Note that similar assumptions on the interaction potential were made in \cite{Fournais_2023} for the proof of a lower bound on the thermodynamic limit of the ground state energy density. More precisely, their proof works under the same assumption \cref{eq:Vdecay} on the decay of the potential and even covers the case of hard-core interactions, but also requires the potential to be radially symmetric. 

Our proof requires approximating the interaction potential by a compactly supported potential and using properties of the scattering solution for non-radial potentials. 
Following the approach in \cite{Nam_2023} and using ideas from \cite{Haberberger_2024} and \cite{Hainzl_2022}, we extend the results obtained in \cite{Hainzl_2022,Boccato_2019} to repulsive, not necessarily compactly supported or radial $L^1$ potentials. 
In the following section, we present the precise setting and the main theorem we intend to prove. 

\subsection{Bose gases in a box}
We consider a Bose gas consisting of $N$ particles in a box $\Lambda=[-1/2,1/2]^3$ with periodic boundary conditions. 
The Hamiltonian acting on $L^2_\textnormal{s}(\Lambda^N)$, the space of square-integrable permutation-symmetric functions of $N$ variables $x_i\in\Lambda$, is given by
\begin{equation}\label{defn:Hamiltonianbox}
	H_N=\sum_{i=1}^N -\Delta_i + \sum_{1\leqslant i<j\leqslant N}V_N(x_i-x_j),
\end{equation}
where $V_N\in L^1(\Lambda)$ is the periodization of $\widetilde{V}_N\coloneqq N^2\widetilde{V}(N\cdot)$ for some non-negative function $\widetilde{V}$, i.e.,
\[
	V_N(x)=\sum_{z\in\Z^3}\widetilde{V}_N(x+z). 
\]
Since $H_N$ is non-negative, it can be defined as self-adjoint using Friedrichs' method. 
We denote the periodization of $\widetilde{V}$ by $V$ and we assume that $\widetilde{V}$ satisfies the following conditions. 
\begin{assu}\label{assu:V}
\begin{enumerate}
\item
$0\leqslant \widetilde{V}\in L^1(\R^3)$.
\item
$\widetilde{V}$ is even, i.e., $\widetilde{V}(x)=\widetilde{V}(-x)$.
\item\label{assu:Vc}
There exist constants $\gamma>1$ and $C$, such that 
for all $R>0$:
\begin{equation}\label{eq:Vdecay}
	\norm*{\widetilde{V}\ind_{B(0,R)^\mathsf{c}}}_{L^1}\leqslant CR^{-\gamma}.
\end{equation}
\end{enumerate}
\end{assu}
\begin{rem}
These conditions are satisfied, for example, if $0\leqslant\widetilde{V}\in L^1(\R^3)$ is even and decays at least like $\abs{x}^{-4-\varepsilon}$ for some $\varepsilon>0$. 
\end{rem}

We define the scattering length $\mathfrak{a}>0$ of $\widetilde{V}$ through
\begin{equation}\label{defn:scattering_length_1}
	8\pi\mathfrak{a}=\inf\Bigl\{\int_{\R^3}(2\abs{\gradient \varphi(x)}^2+\widetilde{V}(x)\abs{1-\varphi(x)}^2)\diff x:\varphi\in\dot{H}^1(\R^3)\Bigr\},
\end{equation}
where $\dot{H}^1(\R^3)\coloneqq\{\varphi\in L^1_\textnormal{loc}(\R^3):\norm{\gradient\varphi}_{L^2}<\infty \text{ and } \lim_{\abs{x}\to\infty}\varphi(x)=0\}$ is the homogeneous Sobolev space.
Here, $\lim_{\abs{x}\to\infty}\varphi(x)=0$ means that $\{x\in\R^3:\abs{\varphi(x)}>a\}$ has finite measure for all $a>0$. 

The main result we want to prove provides information about the low-energy spectrum of $H_N$ and the condensation of low-energy states. 
\begin{thm}\label{thm:mainthm1}
Let $H_N$ be defined as in \cref{defn:Hamiltonianbox} with $\widetilde{V}$ satisfying \cref{assu:V}. Denote by $\lambda_1(H_N)$ the ground state energy of $H_N$. Then, there exists $0<\kappa<1/13$ such that 
\[
	\lambda_1(H_N)=4\pi\mathfrak{a}(N-1)+e_\Lambda\mathfrak{a}^2
	+\frac12\sum_{p\in2\pi\Z^3\setminus\{0\}}\biggl(\sqrt{p^4+16\pi\mathfrak{a}p^2}-p^2-8\pi\mathfrak{a}+\frac{(8\pi\mathfrak{a})^2}{2p^2}\biggr)+\mathcal{O}(N^{-\kappa}),
\]
where
\[
	e_\Lambda=4\Biggl(\int_\Lambda\frac{1}{p^2}\diff p+\sum_{z\in\Z^3\setminus\{0\}}\int_\Lambda\biggl(\frac{1}{(p+z)^2}-\frac{1}{z^2}\biggr)\diff p\Biggr). 
\]
The spectrum of $H_N-\lambda_1(H_N)$ below a threshold $1\leqslant\Theta\leqslant N^{\kappa}$ consists of eigenvalues of the form 
\[
	\sum_{p\in 2\pi\Z^3\setminus\{0\}}n_p\sqrt{p^4+16\pi\mathfrak{a}p^2} + \mathcal{O}(N^{-\kappa}\Theta)
\]
with $n_p\in\N_0$ for all $p\in2\pi\Z^3\setminus\{0\}$, where only finitely many $n_p$ are non-zero.

Moreover, for every normalized sequence of approximate ground states $\Psi_N$ satisfying $\langle\Psi_N, H_N\Psi_N\rangle = \lambda_1(H_N) + \mathcal{O}(\widetilde{\Theta})$, $1\leqslant\widetilde{\Theta}\leqslant N$, we have
\begin{equation}\label{eq:BECrdm_theo}
	\frac{\langle u_0,\gamma_{\Psi_N}u_0\rangle}{N}= 1 + \mathcal O(N^{-1}\widetilde{\Theta}),
\end{equation}
where $u_0\equiv1$ and $\gamma_{\Psi_N}$ is the reduced density matrix
\begin{equation*}
	\gamma_{\Psi_N}(x;y)\coloneqq N\int_{\Lambda^{N-1}}\Psi_N(x,z_2,\dotsc,z_N)\overline{\Psi_N(y,z_2,\dotsc,z_N)}\diff z_2\dotso\diff z_N. 
\end{equation*}
\end{thm}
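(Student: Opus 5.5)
The plan is to reduce to the known radial/compactly supported setting of \cite{Boccato_2019,Hainzl_2022,Nam_2023}, keeping track of the two places where radiality and finite range actually enter: properties of the scattering solution, and the proof of condensation.

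\textbf{Step 1: truncate the potential.} I fix a cutoff $R=N^{\beta}$ for a small $\beta>0$ and split $\widetilde V=\widetilde V\ind_{B(0,R)}+\widetilde V\ind_{B(0,R)^\mathsf{c}}$. By \cref{eq:Vdecay}, the tail contributes at most $N\,\|\widetilde V\ind_{B(0,R)^\mathsf c}\|_{L^1}\lesssim N^{1-\gamma\beta}$ per pair sum in operator norm. More precisely, $\sum_{i<j}$ of the tail has expectation of order $N\cdot CR^{-\gamma}$ in condensed states, up to kinetic corrections. I also need to show that the scattering length $\ao_R$ of the truncated potential satisfies $|\ao-\ao_R|\lesssim R^{-\gamma}$, plus a similar control of $e_\Lambda$ and the Bogoliubov sum, which are Lipschitz in $\ao$. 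This uses the variational characterization \cref{defn:scattering_length_1} together with $0\leqslant f\leqslant 1$.

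\textbf{Step 2: scattering theory for non-radial potentials.} For the truncated even potential I will establish the estimates used in the radial papers: $0\leqslant f\leqslant1$, $|1-f(x)|\lesssim \ao/|x|$ and $|\nabla f(x)|\lesssim |x|^{-2}$ for $|x|\geqslant R$ via the representation $1-f=\frac1{8\pi}|x|^{-1}*(Vf)$, with the dipole term vanishing by evenness, and Fourier bounds on $\widehat{Vf}$. I will also prove the Neumann-truncated version on a ball of radius $\ell$ needed for the renormalization. The non-radial Dyson lemma (\cref{lem:Dyson}) then gives the leading-order lower bound $\lambda_1(H_N)\geqslant 4\pi\ao N(1-o(1))$. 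Combined with a Lieb--Seiringer style argument, this gives a first, non-optimal BEC.

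\textbf{Step 3: condensation and spectrum.} Working in the excitation Fock space, I will perform the quadratic (Bogoliubov) renormalization of \cite{Nam_2023,Hainzl_2022} with the truncated scattering solution. The cubic and quartic errors are bounded as before, and the new terms coming from the tail are bounded using Step 1 and the fact that $\widehat{V}$ is only smooth to order $\gamma$. The optimal BEC bound \cref{eq:BECrdm_theo} follows by the usual localization in the number of excitations, lifting the weak BEC to $\langle\mathcal N_+\rangle\lesssim \widetilde\Theta$. Finally, diagonalizing the quadratic Hamiltonian and applying min-max gives the eigenvalue expansions, with error $N^{-\kappa}\Theta$ after optimizing $\beta$. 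This fixes some $\kappa<1/13$.

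\textbf{Main obstacle.} I expect the main obstacle to be controlling the new error terms from the tail inside the renormalized Hamiltonian, that is, the mismatch between $V_N f_N$ and its truncation in the cubic and quartic terms. The difficulty is that the tail is controlled only in $L^1$ with the slow decay $\gamma>1$. I would first try to absorb these terms using $\|\widetilde V\ind_{B(0,R)^\mathsf c}\|_{L^1}$ against $\mathcal N_+$ and the kinetic energy, and accept a loss of a small power of $N$ in $\kappa$.
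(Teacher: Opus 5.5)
Your overall architecture is the paper's. You truncate $\widetilde V$ at radius $R$, prove the scattering estimates for the truncated even potential via the Green's function representation, and get the leading order and a first BEC from the non-radial Dyson lemma plus the Lieb--Seiringer argument. You then renormalize as in \cite{Nam_2023} with the truncated scattering solution, bound the tail terms, and conclude by localization in $\mathcal N_+$ and min-max. However, the step that carries the quantitative content of the theorem fails as you set it up: the choice $R=N^\beta$ with $\beta$ small.

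The tail produces a \emph{constant} error of size $CN\|\widetilde V\ind_{B(0,R)^\mathsf{c}}\|_{L^1}\le CNR^{-\gamma}$. It arises in two ways:
\begin{itemize}
\item In the paper it appears when bounding $T_c^\ast T_1^\ast Q_{2,\mathrm{err}}T_1T_c$. There the pairing term $\frac12\int N(V_N-V_{R,N})(x-y)(a^\ast_xa^\ast_y+a_xa_y)$ can only be controlled by $\frac18 Q_4+CNR^{-\gamma}$.
\item It is also the generic size of $4\pi(\ao-\ao_R)(N-1)$. This is the mismatch between the $\ao_R$ naturally produced by renormalizing with the truncated scattering solution and the $\ao$ in the statement.
\end{itemize}
Being a c-number, this error cannot be absorbed into $\mathcal N_+$, the kinetic energy or $Q_4$. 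With $R=N^\beta$ it equals $N^{1-\gamma\beta}$, which for small $\beta$ grows like a power of $N$. That swamps the $O(1)$ terms ($e_\Lambda\ao^2$ and the Bogoliubov sum) you need to resolve. So you need $R\gg N^{1/\gamma}$, i.e.\ $R$ close to $N$, not a small power of $N$.

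$R$ cannot be of order $N$ either. The renormalization uses $\omega_{\ell,N}=\chi(\cdot/\ell)\omega(N\cdot)$, which requires $\supp V_{R,N}\subset B(0,R/N)$ to lie inside $B(0,\ell/2)$, i.e.\ $4R/N<\ell$. Meanwhile $\ell$ must tend to zero as a negative power of $N$, because of errors such as $C\ell^{1/2}(\mathcal N+1)$ and $\varepsilon^{-1}C\ell(\mathcal N+1)$. The paper takes $R=N\ell/5$ with $\ell=N^{-2\kappa}\Theta^{-2\vartheta}$, so $NR^{-\gamma}\lesssim N^{1-\gamma(1-2\kappa)}\Theta^{2\gamma\vartheta}$. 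Requiring this to be $O(N^{-\kappa})$ (for $\Theta=1$) gives exactly $\gamma\ge(1+\kappa)/(1-2\kappa)$. The window $N^{1/\gamma}\ll R\lesssim N\ell$ is therefore the real obstacle: it is why \cref{eq:Vdecay} with $\gamma>1$ is assumed, and how $\kappa$ gets tied to $\gamma$.

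Your ``main obstacle'' paragraph misplaces the difficulty. The operator-valued tail terms $H_{2,\mathrm{err}}$ and $Q_{3,\mathrm{err}}$ are handled routinely, by $CR^{-\gamma}(\mathcal N+1)$-type bounds plus small multiples of $Q_4$ and standard renormalization errors. It is the constant that matters.

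Smaller points:
\begin{itemize}
\item The tail is only $L^1$, so the operator-norm bound you first state does not exist.
\item No smoothness of $\widehat V$ is needed.
\item A Neumann problem on a ball is awkward for non-radial $\widetilde V$; the smooth cutoff above is what the paper uses.
\item Evenness of $\omega$ is needed for the dipole cancellation and for $s_1$ to be symmetric. It comes from uniqueness of the minimizer, which for $L^1$ potentials requires an argument you omit (the convexity inequality applied to $\sqrt{\omega^2+\widetilde\omega^2}$).
\item The Lieb--Seiringer BEC argument needs the upper bound $\lambda_1(H_N)\le 4\pi\ao N+C$ as input. The paper obtains it from the trial state $\ind_+^{\leqslant N}T_1T_cT_2\Omega$.
\end{itemize}
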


\begin{rem}
\begin{enumerate}
\item
Let $\gamma>1$ such that $\widetilde{V}$ satisfies \cref{eq:Vdecay} with this $\gamma$. 
\begin{itemize}
\item
If $\gamma\geqslant 14/11$, we can choose $0<\kappa<1/13$ arbitrarily in \cref{thm:mainthm1}. 
\item
If $\gamma\in(1,14/11)$, we can choose $0<\kappa<1/13$ such that 
\[
	\gamma=\frac{1+\kappa}{1-2\kappa}. 
\]
\item
The error for the eigenvalues of $H_N-\lambda_1(H_N)$ can be improved to be of order $\mathcal{O}(N^{-\kappa}\Theta^{1-\vartheta})$ for a threshold $1\leqslant\Theta\leqslant N^{\kappa/(1-\vartheta)}$, where $\kappa$ is chosen as just described. We can choose $0\leqslant\vartheta<1-12\kappa$ arbitrarily if $\gamma\geqslant6/5$ and
\[
	\vartheta=\frac{1-2\kappa}{3}
\]
if $\gamma\in(1,6/5)$.
\end{itemize}
\item
In \cite[equation (5.24)]{Boccato_2019}, the constant $e_\Lambda$ is given by the different formula
\[
	e_\Lambda=2-\lim_{M\to\infty}\sum_{\substack{p\in\Z^3\setminus\{0\} \\ \abs{p_1}, \abs{p_2}, \abs{p_3}\leqslant M}}\frac{4\cos(\abs{p})}{p^2}. 
\]
See \cite{Reichmann_2025} for the details on how to compute the formula for $e_\Lambda$ given in \cref{thm:mainthm1}.  
\item
The result in \cref{thm:mainthm1} was proved in detail in the master's thesis of one of the authors \cite{Reichmann_2025} under the assumptions that Bose--Einstein condensation holds and the leading order of the ground state energy is given by $4\pi\mathfrak{a}N$. In the following, we will explain the main steps of the proof and additionally justify Bose--Einstein condensation and the leading order of the ground state energy. 
\end{enumerate}
\end{rem}

It was proved in \cite{Lieb_2005} that, if $\widetilde{V}$ is \emph{radial} and decreases faster than $\abs{x}^{-3}$, the ground state energy $\lambda_1(H_N)$ of $H_N$ satisfies
\begin{equation}\label{eq:GSE}
	\lim_{N\to\infty}\frac{\lambda_1(H_N)}{N}= 4\pi\mathfrak{a}. 
\end{equation}
See also \cite{Lieb_1998} for an earlier proof in the case that $\widetilde{V}$ is \emph{radial} and \emph{compactly supported}. 
This means that the leading order of the ground state energy of $H_N$ depends only on the scattering length of $V$. 

In \cite{Lieb_2002} it was proved that, in the setting of \cite{Lieb_1998}, the ground state $\Psi_N$ exhibits complete Bose--Einstein condensation in the zero-momentum state $u_0\equiv1$, i.e.,
\begin{equation}\label{eq:BEC}
	\lim_{N\to\infty}\frac{\langle\Psi_N,a^\ast(u_0)a(u_0)\Psi_N\rangle}{N}=1,
\end{equation}
where $a^\ast$ and $a$ are the standard creation and annihilation operators on the bosonic Fock space. This even holds for every normalized sequence $\Psi_N$ of approximate ground states in the sense that $\langle\Psi_N,H_N\Psi_N\rangle=\lambda_1(H_N)+o(N)$.

\cref{thm:mainthm1} shows that for a large class of interaction potentials the ground state energy and the low-energy excitation spectrum are determined only by the scattering length of the interaction potential up to an error that vanishes in the limit $N\to\infty$.
We expect that this universal behaviour of the ground state energy persists in the third-order correction. In particular, we expect that the error estimates in \cref{thm:mainthm1} are not optimal and that the error term in the expansion of the ground state energy is actually of order $\mathcal{O}(\log(N)/N)$ as predicted by \cite{Wu_1959,Hugenholtz_1959,Sawada_1959}.

In comparison to our result, the interaction potential needed to be radial and compactly supported in \cite{Hainzl_2022,Boccato_2019,Caraci_2025} and additionally satisfy $\widetilde{V}\in L^2(\R^3)$ in \cite{Hainzl_2022} and $\widetilde{V}\in L^3(\R^3)$ in \cite{Boccato_2019,Caraci_2025}. 

Our proof can be easily adapted to extend similar results in the inhomogeneous setting to non-radial and non-compactly supported potentials. 

\subsection{Roadmap of the proof}
There are three main steps in the proof.

First, we extend the result from \cite{Ricaud_2023} on properties of the minimizer for the variational problem \cref{defn:scattering_length_1} for non-radial, compactly supported potentials in $L^\infty$ to non-radial, compactly supported potentials in $L^1$ in \cref{lem:scatteringsol} in \cref{sec:truncated_scattering_solution}. This requires adjusting the proof of the uniqueness of the minimizer as well as the proof of the pointwise bounds on the minimizer and its gradient. Additionally, we bound the difference of the scattering lengths of $\widetilde{V}$ and $\widetilde{V}\ind_{B(0,R)}$ for $R\geqslant1$ in \cref{lem:scatteringsol}. 

The second step is to justify the formula $4\pi\mathfrak{a}N$ for the leading order of the ground state energy of $H_N$ and BEC for approximate ground states. The lower bound for the leading order of the ground state energy will be proved in \cref{sec:gse}. The proof of the matching upper bound and BEC will be provided in \cref{ch:bec}. 

The last step is to use the method used in \cite{Nam_2023} to renormalize the Hamiltonian. 
We write 
\[
	H_N=H_{R,N}+H_\textnormal{err}
		=\biggl(\sum_{i=1}^N -\Delta_i + \sum_{1\leqslant i<j\leqslant N}V_{R,N}(x_i-x_j)\biggr)
			+\sum_{1\leqslant i<j\leqslant N}(V_N-V_{R,N})(x_i-x_j),
\]
where $V_{R,N}$ is the periodization of $N^2(\widetilde{V}\ind_{B(0,R)})(N\cdot)$. The parameter $R$ will be chosen to be $N$-dependent and to satisfy $1\ll R\ll N$.

The term $H_{R,N}$ can be renormalized as in \cite{Nam_2023}. First, $H_{R,N}$ is conjugated with the unitary operator $U$ introduced in \cite{Lewin_2015}, which extracts the excitations orthogonal to the condensate. We obtain $UH_{R,N}U^\ast=\ind_+^{\leqslant N}\mathcal{H}_{R,N}\ind_+^{\leqslant N}$, where $\mathcal{H}_{R,N}$ is an operator on the Fock space $\mathcal{F}$ and $\ind_+^{\leqslant N}$ is the projection onto the excitation Fock space $\mathcal{F}_+^{\leqslant N}$. In the next steps, $\mathcal{H}_{R,N}$ is conjugated with the unitary transforms $T_1,T_c$ as in \cite{Nam_2023}. These transforms are defined using the scattering solution of $\widetilde{V}\ind_{B(0,R)}$. Up to some error terms, we obtain a quadratic Hamiltonian, which only depends on the long-range potential $\varepsilon_{\ell,N}$, which we will define at the end of \cref{sec:truncated_scattering_solution}. This quadratic Hamiltonian can then be diagonalized using a standard Bogoliubov transform $T_2$. After justifying that we can replace $N\hat{\varepsilon}_{\ell,N}(p)$ with $8\pi\mathfrak{a}$, we will obtain
\[
	T^\ast_2T^\ast_cT^\ast_1\mathcal{H}_{R,N}T_1T_cT_2\approx E_N+\sq(E_\textnormal{Bog}),
\]
with
\begin{align*}
	\sq(E_\textnormal{Bog})&=\sum_{p\in\Lambda^\ast\setminus\{0\}}\sqrt{p^4+16\pi\mathfrak{a}p^2}a^\ast_pa_p, \\
	E_N&=4\pi\mathfrak{a}(N-1)+e_\Lambda\mathfrak{a}^2
	+\frac12\sum_{p\in\Lambda^\ast\setminus\{0\}}\biggl(\sqrt{p^4+16\pi\mathfrak{a}p^2}-p^2-8\pi\mathfrak{a}+\frac{(8\pi\mathfrak{a})^2}{2p^2}\biggr), \\
	e_\Lambda&=4\Biggl(\int_\Lambda\frac{1}{p^2}\diff p+\sum_{z\in\Z^3\setminus\{0\}}\int_\Lambda\biggl(\frac{1}{(p+z)^2}-\frac{1}{z^2}\biggr)\diff p\Biggr). 
\end{align*}

Conjugating $H_\textnormal{err}$ with $U$ yields four terms, which are error terms after conjugating with the unitary transforms $T_1,T_c,T_2$. More precisely, we have
\[
	UH_\textnormal{err}U^\ast
	=\ind_+^{\leqslant N}\mathcal{H}_\textnormal{err}\ind_+^{\leqslant N}
	\approx\ind_+^{\leqslant N}(H_{2,\textnormal{err}}+Q_{2,\textnormal{err}}+Q_{3,\textnormal{err}}+Q_{4,\textnormal{err}})\ind_+^{\leqslant N},
\]
with
\begin{align*}
H_{2,\textnormal{err}}&= N\int_{\Lambda^2}(V_N-V_{R,N})(x-y)a_x^\ast a_y\diff x\diff y, \\
Q_{2,\textnormal{err}}&=\frac{1}{2}\biggl(N-\numop-\frac12\biggr)\int_{\Lambda^2}(V_N-V_{R,N})(x-y)a_xa_y\diff x\diff y+\hc, \\
Q_{3,\textnormal{err}}&=\sqrt{(N-\numop)_+}\int_{\Lambda^2}(V_N-V_{R,N})(x-y)a^\ast_ya_xa_y\diff x\diff y+\hc, \\
Q_{4,\textnormal{err}}&=\frac12\int_{\Lambda^2}(V_N-V_{R,N})(x-y)a^\ast_x a^\ast_y a_xa_y\diff x\diff y. 
\end{align*}
The term $H_{2,\textnormal{err}}$ can be bounded immediately, while $Q_{3,\textnormal{err}}$ can only be bounded after conjugating with $T_1$ and $Q_{2,\textnormal{err}}$ can only be bounded after conjugating with $T_1$ and $T_c$. All bounds depend on $\norm{V_N-V_{R,N}}_{L^1}$. The bound on $T^\ast_cT^\ast_1Q_{2,\textnormal{err}}T_1T_c$ contains a term of order $\mathcal{O}(N\norm*{\widetilde{V}\ind_{B(0,R)^\mathsf{c}}}_{L^1})$, which is the reason for the assumption \cref{eq:Vdecay}. The term $Q_{4,\textnormal{err}}$ stays the same up to error terms after conjugating with $T_1,T_c,T_2$ and will be bounded in the proof of \cref{thm:mainthm1}. 

Finally, we will conclude the proof of \cref{thm:mainthm1} in \cref{ch:proof}. 

\subsection{Notation}
$C$ always denotes a positive constant that only depends on $\widetilde{V}$ and may vary from line to line. We will often omit the integration variables if there is no ambiguity. For $g$ defined on $\Lambda^2$, we will use the notation $g_x\coloneqq g(x,\cdot)$. We will use the following convention for the Fourier transform
\[
	\hat{g}(p)=\int g(x)e^{-\ii p\cdot x}\diff x,\quad\forall g\in L^1.
\]

%%%%%%%%%%%%%%%%%%%%%%%%%%%%%%%%%%%%%%%%%%%%%%%%%%%%%%%%
\section{Truncated scattering solution for non-radial, compactly supported potentials}\label{sec:truncated_scattering_solution}
It was proved, for example in \cite[Appendix~C]{Lieb_2005}, that \cref{defn:scattering_length_1} has a unique real-valued minimizer $0\leqslant \varphi\leqslant1$, which is spherically symmetric if $0\leqslant\widetilde{V}\in L^1(\R^3)$ is radial and compactly supported. 
A similar result for compactly supported $0\leqslant\widetilde{V}\in L^\infty(\R^d)$ with $d\geqslant3$, which is not necessarily radial, was obtained in \cite[Theorem~6]{Ricaud_2023}. We will adapt the proof strategy in \cite{Ricaud_2023} to prove the following lemma. 

\begin{lem}\label{lem:scatteringsol}
Let $0\leqslant \widetilde{V}\in L^1(\R^3)$ be even and denote $\widetilde{V}_R\coloneqq\widetilde{V}\ind_{B(0,R)}$ for $R\geqslant1$. 
The scattering length $\mathfrak{a}_R$ of $\widetilde{V}_R$ defined through
\begin{equation}\label{defn:scattering_length_2}
	8\pi\mathfrak{a}_R=\inf\Bigl\{\int_{\R^3}(2\abs{\gradient \varphi(x)}^2+\widetilde{V}_R(x)\abs{1-\varphi(x)}^2)\diff x:\varphi\in\dot{H}^1(\R^3)\Bigr\}
\end{equation}
satisfies 
\begin{equation}\label{eq:scattering_length_bound_1}
	8\pi\abs{\mathfrak{a}-\mathfrak{a}_R}\leqslant \norm*{\widetilde{V}-\widetilde{V}_R}_{L^1},
\end{equation}
with $\mathfrak{a}$ as in \cref{defn:scattering_length_1}. 
Moreover, \cref{defn:scattering_length_2} has a unique real-valued minimizer $\omega\in\dot{H}^1(\R^3)$, which is even, solves the scattering equation
\begin{equation}\label{eq:zero_scattering}
	-\Delta\omega=\frac12\widetilde{V}_R(1-\omega)
\end{equation}
in the sense of distributions, and satisfies the pointwise estimates
\[
	0\leqslant\omega(x)\leqslant1,\quad \omega(x)\leqslant\frac{C}{\abs{x}+1}
\]
for $x\in\R^3$, and 
\[
	\abs{\gradient\omega(x)}\leqslant\frac{C}{\abs{x}^2}
\]
for $\abs{x}\geqslant2R$, where $C$ is a constant independent of $R$. Finally, we have
\begin{equation}\label{eq:scat_len}
	8\pi\mathfrak{a}_R=\int_{\R^3}\widetilde{V}_R(1-\omega). 
\end{equation}
We suppress the $R$-dependence of $\omega$ to simplify the notation. 
\end{lem}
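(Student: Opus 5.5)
The plan is to follow the direct method in the calculus of variations, as in \cite[Theorem~6]{Ricaud_2023}, and to replace each use of $\widetilde{V}\in L^\infty$ by a truncation or monotone-approximation argument.

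\emph{Existence, uniqueness, the Euler--Lagrange equation and evenness.} The functional $\mathcal{E}_R(\varphi)=\int 2\abs{\gradient\varphi}^2+\widetilde{V}_R\abs{1-\varphi}^2$ is strictly convex on $\dot{H}^1(\R^3)$. The gradient term is strictly convex modulo constants, and in $\dot{H}^1$ the only constant is $0$, so the functional is strictly convex. It is bounded below and coercive in $\norm{\gradient\varphi}_{L^2}$. By Sobolev's inequality, a minimizing sequence is bounded in $L^6$, so we can extract a subsequence converging weakly in $\dot{H}^1$ and almost everywhere locally. Lower semicontinuity, with Fatou's lemma for the potential term, gives a minimizer. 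Strict convexity gives uniqueness.

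Several properties follow by comparing competitors.
\begin{itemize}
\item Since $\mathcal{E}_R(\operatorname{Re}\varphi)\leqslant\mathcal{E}_R(\varphi)$, the minimizer is real.
\item Since $\widetilde{V}_R$ is even, $\omega(-\cdot)$ is also a minimizer, so $\omega$ is even by uniqueness.
\item Truncation does not increase the energy: $\min(\max(\varphi,0),1)$ has smaller gradient and smaller $\abs{1-\varphi}$. Hence $0\leqslant\omega\leqslant1$.
\end{itemize}
Varying in directions $\eta\in C_c^\infty$ is legitimate even though $\widetilde{V}_R$ is only $L^1$: $\omega$ is bounded, so $\widetilde{V}_R(1-\omega)\eta\in L^1$. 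This variation gives \cref{eq:zero_scattering} in distributions. Testing against $\omega$ itself, justified by approximation with cutoffs and the boundedness of $\omega$, gives $2\norm{\gradient\omega}^2=\int\widetilde{V}_R(1-\omega)\omega$. Inserting this into $\mathcal{E}_R(\omega)$ yields \cref{eq:scat_len}.

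\emph{Representation formula and pointwise bounds.} Since $\widetilde{V}_R(1-\omega)\in L^1$ is compactly supported and $\omega\to0$ at infinity, Liouville's theorem gives
\begin{equation*}
\omega(x)=\frac{1}{8\pi}\int\frac{\widetilde{V}_R(y)(1-\omega(y))}{\abs{x-y}}\diff y.
\end{equation*}
For $\abs{x}\geqslant2R$ this gives $\omega(x)\leqslant 2\cdot 8\pi\mathfrak{a}_R/(8\pi\abs{x})$. Differentiating under the integral gives $\abs{\gradient\omega(x)}\leqslant C\mathfrak{a}_R/\abs{x}^2$. Finally, $\mathfrak{a}_R\leqslant\norm{\widetilde{V}}_{L^1}/(8\pi)$, by taking $\varphi=0$ as a trial function.

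\emph{Main obstacle.} The bound $\omega(x)\leqslant C/(\abs{x}+1)$ must be uniform in $R$ in the region $1\leqslant\abs{x}\leqslant2R$, and there the $L^1$ potential gives no pointwise control. I would argue by monotonicity instead.
\begin{itemize}
\item Comparing minimizers shows that $\omega_R$ is pointwise increasing in the potential. Concretely, $\min(\omega_R,\omega_\infty)$ and $\max(\omega_R,\omega_\infty)$ are competitors, and strict convexity then forces $\omega_R\leqslant\omega_\infty$, where $\omega_\infty$ is the minimizer for the full potential $\widetilde{V}$.
\item The same Green's function representation holds for $\omega_\infty$, with density $\widetilde{V}(1-\omega_\infty)\in L^1$. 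So it suffices to bound $\omega_\infty$.
\item Split the Newton integral into $\abs{y}<\abs{x}/2$ and $\abs{y}\geqslant\abs{x}/2$. The first part is at most $C\norm{\widetilde{V}}_{L^1}/\abs{x}$. The second part is controlled by the tail $\norm{\widetilde{V}\ind_{B(0,\abs{x}/2)^\mathsf{c}}}$ together with the local singularity $1/\abs{x-y}$.
\end{itemize}
This last local singularity is genuinely the hard point. Because $\omega\leqslant1$, we already have $\omega\leqslant C/(\abs{x}+1)$ for $\abs{x}\leqslant1$. For large $\abs{x}$ I would first try to use superharmonicity of $\omega$ together with mean-value averaging over balls. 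If that fails, the fallback is to accept a weaker decay statement that uses \cref{eq:Vdecay}.

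\emph{Scattering-length comparison.} For \cref{eq:scattering_length_bound_1}, note that $\mathfrak{a}_R\leqslant\mathfrak{a}$ because $\widetilde{V}_R\leqslant\widetilde{V}$. Conversely, plug $\omega$ into \cref{defn:scattering_length_1}:
\begin{equation*}
8\pi\mathfrak{a}\leqslant8\pi\mathfrak{a}_R+\int(\widetilde{V}-\widetilde{V}_R)\abs{1-\omega}^2\leqslant8\pi\mathfrak{a}_R+\norm{\widetilde{V}-\widetilde{V}_R}_{L^1},
\end{equation*}
where the last step uses $0\leqslant1-\omega\leqslant1$.
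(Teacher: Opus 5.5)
Apart from one point, your argument is correct. For existence and uniqueness it is more self-contained than the paper's. The paper cites \cite{Ricaud_2023} for existence, and proves uniqueness through positivity of $\omega$ and the equality case of the convexity inequality for $\gradient\sqrt{\omega^2+\widetilde{\omega}^2}$. You get uniqueness at once from strict convexity of $\norm{\gradient\varphi}_{L^2}^2$ on $\dot{H}^1(\R^3)$. Your representation formula, the bounds for $\abs{x}\geqslant2R$, \cref{eq:scat_len} and \cref{eq:scattering_length_bound_1} match the paper.

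The gap is the one you flag yourself: the bound $\omega(x)\leqslant C/(\abs{x}+1)$ with $C$ independent of $R$ on $1\leqslant\abs{x}\leqslant2R$. It cannot be closed, because under the stated hypotheses it is false. Take $\widetilde{V}=\sum_{k\geqslant1}\lambda_k(\ind_{B(x_k,r_k)}+\ind_{B(-x_k,r_k)})$ with $\abs{x_k}=2^k$, $r_k=2^{-20k}$ and $\lambda_k=2^{50k}$.
\begin{itemize}
\item This $\widetilde{V}$ is even, non-negative and integrable, and it satisfies \cref{eq:Vdecay} with $\gamma=10$.
\item For $R\geqslant\abs{x_k}+r_k$ we have $\widetilde{V}_R\geqslant\lambda_k\ind_{B(x_k,r_k)}$. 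Your own $\min/\max$ comparison argument then gives $\omega_R\geqslant\psi_k$, where $\psi_k$ is the scattering solution of this single bump.
\item The explicit radial solution gives $1-\psi_k\leqslant\sinh(t_k/2)/((t_k/2)\cosh t_k)$ on $B(x_k,r_k/2)$, with $t_k=\sqrt{\lambda_k/2}\,r_k=2^{5k}/\sqrt{2}\to\infty$. Hence $\omega_R\geqslant1/2$ there for large $k$.
\end{itemize}
Therefore $\sup_R\norm{(\abs{\cdot}+1)\omega_R}_{L^\infty}=\infty$. The same bumps show that $\omega_\infty$ itself violates the bound, so your reduction to $\omega_\infty$ cannot help.

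Neither repair you suggest can work either. A superharmonic function lies above its ball averages, so mean-value arguments only give lower bounds on $\omega(x)$. \cref{eq:Vdecay} controls the $L^1$ mass of the tail, but not its local concentration, which is what produces the $\abs{x-y}^{-1}$ singularity you identified. The example above satisfies \cref{eq:Vdecay}, so no pointwise decay bound can follow from it.

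The paper's proof has exactly the same hole. After proving $\omega(x)\leqslant\norm{\widetilde{V}}_{L^1}/(4\pi\abs{x})$ for $\abs{x}\geqslant2R$, it asserts the global bound ``since $0\leqslant\omega\leqslant1$''. On $\abs{x}\leqslant2R$ that only gives $\omega(x)\leqslant(2R+1)/(\abs{x}+1)$. What does hold uniformly in $R$ is the following:
\begin{itemize}
\item the estimate for $\abs{x}\geqslant2R$;
\item the gradient bound;
\item the majorant $\omega_R(x)\leqslant\frac{1}{8\pi}\int_{\R^3}\widetilde{V}(y)\abs{x-y}^{-1}\diff y$. This majorant lies in weak $L^3$ with norm at most $C\norm{\widetilde{V}}_{L^1}$, and it yields integrated substitutes such as $\int_{B(0,L)}\omega_R^2\leqslant CL$.
\end{itemize}
So this part of the lemma has to be weakened, not proved: either use an $R$-dependent constant on $\abs{x}\leqslant2R$, or replace it by such integrated bounds.
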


\begin{proof}
The existence of a real-valued minimizer $0\leqslant\omega\leqslant1$ and that it satisfies the scattering equation \cref{eq:zero_scattering} follow from the proof of \cite[Theorem~6]{Ricaud_2023}.

Consider the functionals
\begin{align*}
	\mathscr{E}[\varphi]&=\int_{\R^3}(2\abs{\gradient \varphi(x)}^2+\widetilde{V}_R(x)\abs{1-\varphi(x)}^2)\diff x, \\
	\mathscr{F}[\varphi]&=\int_{\R^3}(2\abs{\gradient \varphi(x)}^2+\widetilde{V}(x)\abs{1-\varphi(x)}^2)\diff x. 
\end{align*}
Due to the inequality $\widetilde{V}_R\leqslant\widetilde{V}$, we have $\mathscr{E}[\varphi]\leqslant\mathscr{F}[\varphi]$ for all $\varphi\in\dot{H}^1(\R^3)$. 
Thus, by the definitions \cref{defn:scattering_length_1} and \cref{defn:scattering_length_2} of $\mathfrak{a}$ and $\mathfrak{a}_R$, respectively, we have $\mathfrak{a}_R\leqslant\mathfrak{a}$. Hence, using $8\pi\mathfrak{a}_R=\mathscr{E}[\omega]$ and the inequality $8\pi\mathfrak{a}\leqslant\mathscr{F}[\omega]$, we obtain 
\[
	8\pi\abs{\mathfrak{a}-\mathfrak{a}_R}
	\leqslant\mathscr{F}[\omega]-\mathscr{E}[\omega]
	=\int_{\R^3}(\widetilde{V}-\widetilde{V}_R)(x)\abs{1-\omega(x)}^2\diff x
	\leqslant\norm*{\widetilde{V}-\widetilde{V}_R}_{L^1}. 
\] 

Now we can prove that $\omega$ is the unique real-valued minimizer of \cref{defn:scattering_length_2}. 
Since we have $0\leqslant\omega\leqslant1$ and $\widetilde{V}_R$ is non-negative, \cref{eq:zero_scattering} implies $\Delta\omega\leqslant0$, i.e., $\omega$ is superharmonic. 
Thus, by \cite[Theorem 9.4]{Lieb_2001}, we either have $\omega>0$ or $\omega=0$. 
	Let us first assume $\omega=0$. 
	By \cref{eq:zero_scattering} this can only be true if $\widetilde{V}_R=0$, in which case the variational problem \cref{defn:scattering_length_2} is trivial and $\omega = 0$ is the unique minimizer.
Let us now assume $\omega>0$. 
Assume there exists another real-valued minimizer $\widetilde{\omega}$ and define 
\[
h\coloneqq\frac{1}{\sqrt{2}}\sqrt{\omega^2+\widetilde{\omega}^2}. 
\]
Observe that $h\in\dot{H}^1(\R^3)$ due to the convexity inequality \cite[Theorem 7.8]{Lieb_2001} and the fact that $\omega,\widetilde{\omega}\in\dot{H}^1(\R^3)$. 
We compute
\begin{align*}
	8\pi\mathfrak{a}_R&\leqslant
	\mathscr{E}[h]
	=\int_{\R^3}(2\abs{\gradient h}^2+\widetilde{V}_R\abs{1-h}^2) \\
	&\leqslant\int_{\R^3}\bigl(\abs{\gradient \omega}^2+\abs*{\gradient\tilde{\omega}}^2
		+\widetilde{V}_R(1-\sqrt{2}\sqrt{\omega^2+\widetilde{\omega}^2}+\frac{1}{2}(\omega^2+\widetilde{\omega}^2))\bigr) \\
	&\leqslant\int_{\R^3}\bigl(\abs{\gradient \omega}^2+\abs*{\gradient\tilde{\omega}}^2
		+\widetilde{V}_R(1-\omega-\widetilde{\omega}+\frac{1}{2}(\omega^2+\widetilde{\omega}^2))\bigr)\\
	&=\frac12\bigl(\mathscr{E}[\omega]+\mathscr{E}[\tilde{\omega}]\bigr)
	=8\pi\mathfrak{a}_R,
\end{align*}
where we used the convexity inequality \cite[Theorem 7.8]{Lieb_2001} for the second inequality and the simple bound
\[
	x+y\leqslant\sqrt{2}\sqrt{x^2+y^2}
\]
for any $x,y\in\R$, for the third inequality. In particular, equality in the convexity inequality holds. Due to \cite[Theorem 7.8]{Lieb_2001} and the fact that $\omega>0$, this implies $\widetilde{\omega}=c\cdot\omega$ for some constant $c$.
Note that both $\omega$ and $\widetilde{\omega}$ solve the scattering equation \cref{eq:zero_scattering}. 
If we had $c\neq1$, this would imply $\widetilde{V}_R=0$, which is not possible, since in this case $\omega=0$ would be the unique minimizer of \cref{defn:scattering_length_2}. 
Therefore, $\omega$ is the unique real-valued minimizer of \cref{defn:scattering_length_2}. 

Since $\widetilde{V}_R$ is even, we observe that $\tilde{\omega}\coloneqq\omega(-\cdot)$ also minimizes \cref{defn:scattering_length_2}. Since $\omega$ is the unique real-valued minimizer, this implies that $\omega$ is an even function. 

Let 
\[
	G_y(x)\coloneqq \frac{1}{4\pi}\frac{1}{\abs{x-y}}
\]
be the Green's function.

Since $\omega$ is a non-negative superharmonic function, by \cite[Theorem~9.6]{Lieb_2001}, $\mu\coloneqq-\Delta\omega$ is a positive Borel measure on $\R^3$ and 
\begin{equation}\label{lem:scatteringsol:proof:omegaGreen}
	\tilde{\omega}(x)\coloneqq\int_{\R^3}G_y(x)\diff\mu(y)
\end{equation}
is finite for a.e.\ $x\in\R^3$ and satisfies $\omega=\tilde{\omega}+C$ a.e.\ for a constant $C\geqslant0$. Both $\omega$ and $\tilde{\omega}$ converge to $0$ as $\abs{x}\to\infty$ and are therefore equal. 
Since $\widetilde{V}_R$ is supported in $B(0,R)$, we have $\mu(A)=\mu(A\cap B(0,R))$ for every Borel measurable set $A$ and thus, for $\abs{x}\geqslant2R$ we obtain
\begin{align*}
	\omega(x)
	&=\frac{1}{4\pi}\int_{\R^3}\frac{1}{\abs{x-y}}\diff\mu(y)
	\leqslant\frac{1}{\abs{x}}\frac{1}{4\pi}\int_{B(0,R)}\frac{\abs{x}}{\abs{x}-\abs{y}}\diff\mu(y) \\
	&\leqslant \frac{1}{\abs{x}}\frac{1}{4\pi}\int_{B(0,R)}\bigl(\widetilde{V}_R(1-\omega)\bigr)(y)\diff y
	\leqslant\frac{1}{\abs{x}}\frac{1}{4\pi}\norm*{\widetilde{V}_R}_{L^1}\leqslant\frac{1}{\abs{x}}\frac{1}{4\pi}\norm*{\widetilde{V}}_{L^1},
\end{align*}
where we used $\widetilde{V}_R(1-\omega)\leqslant \widetilde{V}_R$. This implies for $x\in\R^3$
\[
	\omega(x)\leqslant\frac{C}{\abs{x}+1},
\]
since we have $0\leqslant\omega\leqslant1$. From \cref{lem:scatteringsol:proof:omegaGreen}, we obtain for $\abs{x}\geqslant2R$
\begin{equation}\label{eq:gradientomega}
	\gradient\omega(x)=-\frac{1}{8\pi}\int_{\R^3}\frac{\bigl(\widetilde{V}_R(1-\omega)\bigr)(y)}{\abs{x-y}^3}(x-y)\diff y. 
\end{equation}
Similarly to before, this implies 
\[
	\abs{\gradient\omega(x)}\leqslant\frac{C}{\abs{x}^2}
\]
for $\abs{x}\geqslant2R$. 

Let $0\leqslant\chi\leqslant1$ be smooth, radial, $\chi(x)=1$ if $\abs{x}\leqslant1/2$ and $\chi(x)=0$ if $\abs{x}\geqslant1$. Define $\omega_r\coloneqq\chi(\cdot/r)\omega$. It can be shown (see, e.g., \cite[proof of Lemma~2.81]{Frank_2022}) that $\omega_r\to\omega$ in $\dot{H}^1(\R^3)$. 
Since we have $\omega_r\in H^1(\R^3)$, we can find sequences $\{\varphi_n^{(r)}\}_{n\in\N}\subset C^\infty_c(\R^3)$ such that $\varphi_n^{(r)}\to\omega_r$ in $H^1(\R^3)$. 
Hence, we can rewrite the scattering length
\begin{align*}
	8\pi\mathfrak{a}_R
	&=\int_{\R^3}\bigl(2\abs{\gradient\omega}^2+\widetilde{V}_R\abs{1-\omega}^2\bigr) 
	=\lim_{r\to\infty}\int_{\R^3}\bigl(2\gradient\omega\cdot\gradient\omega_r+\widetilde{V}_R\abs{1-\omega}^2\bigr) \\
	&=\lim_{r,n\to\infty}\int_{\R^3}\bigl(2\gradient\omega\cdot\gradient\varphi_n^{(r)}+\widetilde{V}_R\abs{1-\omega}^2\bigr)
	=\lim_{r,n\to\infty}\int_{\R^3}\bigl(2\omega(-\Delta\varphi_n^{(r)})+\widetilde{V}_R\abs{1-\omega}^2\bigr) \\
	&=\lim_{r,n\to\infty}\int_{\R^3}\bigl(\widetilde{V}_R(1-\omega)\varphi_n^{(r)}+\widetilde{V}_R\abs{1-\omega}^2\bigr)
	=\lim_{r\to\infty}\int_{\R^3}\bigl(\widetilde{V}_R(1-\omega)\omega_r+\widetilde{V}_R\abs{1-\omega}^2\bigr) \\
	&=\int_{\R^3}\bigl(\widetilde{V}_R(1-\omega)\omega+\widetilde{V}_R\abs{1-\omega}^2\bigr)
	=\int_{\R^3}\widetilde{V}_R(1-\omega),
\end{align*}
where we used that $\omega$ minimizes \cref{defn:scattering_length_2} in the first equation
and \cref{eq:zero_scattering} in the fifth equation. 
The second to last equation follows by dominated convergence since $\omega_r\to\omega$ pointwise and $\abs*{\widetilde{V}_R(1-\omega)\omega_r}\leqslant \widetilde{V}_R\in L^1(\R^3)$. 
\end{proof}

Note that by \cref{eq:Vdecay} and \cref{eq:scattering_length_bound_1}, we have
\begin{equation}\label{eq:scattering_length_bound_2}
	\abs{\mathfrak{a}-\mathfrak{a}_R}\leqslant CR^{-\gamma}. 
\end{equation}
We will denote the periodization of $\widetilde{V}_R$ by $V_R$ and the periodization of $\widetilde{V}_{R,N}\coloneqq N^2\widetilde{V}_R(N\cdot)$ by $V_{R,N}$. 

As in \cite[Section~2.2]{Nam_2023}, we want to use a modified version of $\omega$ with a cut-off. For
\[
4RN^{-1}<\ell<1/2,
\] 
we define
\[
	\omega_{\ell,N}(x)=\chi(x/\ell)\omega(Nx),
\]
where $0\leqslant\chi\leqslant1$ is smooth, radial, $\chi(x)=1$ if $\abs{x}\leqslant1/2$ and $\chi(x)=0$ if $\abs{x}\geqslant1$. 
Using the notation $\omega_N=\omega(N\cdot)$ and $\chi_\ell=\chi(\cdot/\ell)$, we obtain
\begin{equation}\label{eq:truncated_zero_scattering}
	-2\Delta\omega_{\ell,N}=\widetilde{V}_{R,N}(1-\omega_N)-\varepsilon_{\ell,N}
\end{equation}
in the sense of distributions, with
\begin{equation}\label{defn:varepsilonellN}
	\varepsilon_{\ell,N}=2\Delta(\omega_{\ell,N}-\omega_N)=4\gradient\omega_N\cdot\gradient\chi_\ell+2\omega_N\Delta\chi_\ell,
\end{equation}
since $\Delta\omega_N\equiv0$ outside of $B(0,RN^{-1})\subset B(0,\ell/2)$.
Note that $\varepsilon_{\ell,N}\in C_c^\infty(\R^3)$, since $\varepsilon_{\ell,N}$ has support in $\{\ell/2\leqslant\abs{x}\leqslant\ell\}$ and $\omega_N$ is harmonic, and hence smooth, outside of $B(0,\ell/2)$. 
Both $\omega_{\ell,N}$ and $\varepsilon_{\ell,N}$ are even functions. 

%%%%%%%%%%%%%%%%%%%%%%%%%%%%%%%%%%%%%%%%%%%%%%%%%%%%%%%%
\section{Ground state energy: lower bound}\label{sec:gse}
We want to justify \cref{eq:GSE} for our chosen potential. In this section, we start by proving the lower bound 
\begin{equation}\label{eq:gselowerbound}
	\liminf_{N\to\infty}\frac{\lambda_1(H_N)}{N}\geqslant4\pi\mathfrak{a}. 
\end{equation}
We postpone the proof of the matching upper bound to the beginning of \cref{ch:bec}. Note that we will use neither \cref{eq:GSE} nor \cref{eq:BEC} until \cref{ch:bec}. 

By \cref{lem:scatteringsol}, we have $\mathfrak{a}_R\to\mathfrak{a}$ as $R\to\infty$. We also have the trivial lower bound
\[
	H_N=\sum_{i=1}^N -\Delta_i + \sum_{1\leqslant i<j\leqslant N}V_N(x_i-x_j)\geqslant\sum_{i=1}^N -\Delta_i + \sum_{1\leqslant i<j\leqslant N}V_{R,N}(x_i-x_j)\eqqcolon H_{R,N}.
\]
It suffices to prove the lower bound \cref{eq:gselowerbound} for $\widetilde{V}$ with compact support. Then we can deduce
\[
	\liminf_{N\to\infty}\frac{\lambda_1(H_N)}{N}\geqslant\liminf_{N\to\infty}\frac{\lambda_1(H_{R,N})}{N}\geqslant4\pi\mathfrak{a}_R\xrightarrow{R\to\infty}4\pi\mathfrak{a}. 
\]
In the remainder of this section, we assume that $\widetilde{V}$ is compactly supported with $\supp\widetilde{V}\subset B(0,R_0)$ for some fixed $R_0>0$. We follow the strategy used in \cite{Lieb_1998,Ricaud_2022}. First, we need to rescale $H_N$ to obtain an operator on $L^2_\textnormal{s}(\Lambda_N^N)$, where $\Lambda_N=[-N/2,N/2]^3$ is the box with periodic boundary conditions. Using the unitary rescaling transform $\mathcal{T}_{N}\Psi=N^{3N/2}\Psi(N\cdot)$, we have
\[
	\frac{1}{N^2}\mathcal{T}_{N}^\ast H_N\mathcal{T}_{N}=\sum_{i=1}^N -\Delta_i + \sum_{1\leqslant i<j\leqslant N}V_{\Lambda_N}(x_i-x_j)\eqqcolon H_{N,N},
\]
with
\[
	V_{\Lambda_N}(x)=\sum_{z\in\Z^3}\widetilde{V}(x+Nz). 
\]
Instead of the box $\Lambda_N=[-N/2,N/2]^3$, we will consider more generally the box $\Lambda_L=[-L/2,L/2]^3$ with periodic boundary conditions and prove the following result, which implies \cref{eq:gselowerbound}. 
\begin{prop}\label{prop:gselower}
Let $0\leqslant\widetilde{V}\in L^1(\R^3)$ be even, compactly supported with $\supp\widetilde{V}\subset B(0,R_0)$ for some $R_0>0$, and let $\mathfrak{a}$ be the scattering length of $\widetilde{V}$, defined as in \cref{defn:scattering_length_1}. Define 
\[
	H_{N,L}\coloneqq\sum_{i=1}^N -\Delta_i + \sum_{1\leqslant i<j\leqslant N}V_{\Lambda_L}(x_i-x_j)
\]
on $L^2_\textnormal{s}(\Lambda_L^N)$ and denote the ground state energy of $H_{N,L}$ by $\lambda_1(H_{N,L})$. Then, we have
\[
	\frac{\lambda_1(H_{N,L})}{N}\geqslant4\pi\rho\mathfrak{a}(1-\mathcal{O}(Y^{1/17}))
\] 
as $Y=\mathfrak{a}^3\rho\to0$, where $\rho=N/L^3$. 
\end{prop}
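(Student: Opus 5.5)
We follow the Lieb--Yngvason strategy \cite{Lieb_1998}. The only place radiality entered there was the Dyson lemma, which we replace by a non-radial version in the spirit of \cite{Ricaud_2023}.

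\emph{Step 1: Dyson lemma.} We first prove the following. Let $0\leqslant U$ have $\supp U\subset\{R_1\leqslant\abs{x}\leqslant R_2\}$ with $R_1>R_0$ and $\int U=1$. Then, for any star-shaped region $B$ containing the origin (in practice a ball or a Voronoi cell), one has, as quadratic forms on $H^1(B)$,
\[
	\int_B\bigl(2\abs{\gradient\psi}^2+\widetilde{V}\abs{\psi}^2\bigr)\geqslant 8\pi\mathfrak{a}(1-\epsilon)\int_B U\abs{\psi}^2 .
\]
Here $\epsilon$ depends on $R_0/R_1$. To prove it, we use the minimizer $\omega$ from \cref{lem:scatteringsol}, taken with $R=R_0$ so that $\omega$ is the scattering solution of $\widetilde{V}$. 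We write $\psi=(1-\omega)g$ and integrate by parts using \cref{eq:zero_scattering}. This gives
\[
	\int 2\abs{\gradient\psi}^2+\widetilde{V}\abs{\psi}^2\geqslant\int 2(1-\omega)^2\abs{\gradient g}^2 ,
\]
up to boundary terms that have a favourable sign on star-shaped domains. Instead of radial reduction, we average over spheres. For $\abs{x}\geqslant R_1$ we use the pointwise bounds $\omega\leqslant C/\abs{x}$ and $\abs{\gradient\omega}\leqslant C/\abs{x}^2$, together with the expansion $\omega(x)=\mathfrak{a}/\abs{x}+\mathcal O(R_0^2/\abs{x}^3)$. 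The latter follows from \cref{lem:scatteringsol:proof:omegaGreen}, \cref{eq:scat_len}, and evenness, which kills the dipole term. With these, the flux of the scattering solution through each sphere of radius $r\geqslant R_1$ is $8\pi\mathfrak{a}(1+\mathcal O(R_0/r))$. A one-dimensional Hardy/Poincaré-type argument along rays, exactly as in Dyson's proof, then produces $8\pi\mathfrak{a}\int U\abs{\psi}^2$.

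This is the main obstacle. In the radial case one solves an ODE on each ray. Here the rays carry an angular weight $(1-\omega(r\theta))^2$, so we must control the angular fluctuations of $\omega$ at distance $r\geqslant R_1$ relative to $R_0$. Since $\widetilde{V}$ is only in $L^1$, we cannot use the $L^\infty$ bounds of \cite{Ricaud_2023} inside the support. We therefore only use information on $\omega$ outside $B(0,R_1)$, where $\omega$ is harmonic, and use positivity inside.

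\emph{Step 2: Many-body reduction.} For each particle $i$, we keep a fraction $\epsilon'$ of the kinetic energy. We apply Step 1 to the remaining part of $-\Delta_i$, restricted to the Voronoi cell of $x_i$ relative to the nearest neighbour $x_{k(i)}$. This gives
\[
	H_{N,L}\geqslant\epsilon'\sum_i-\Delta_i+8\pi\mathfrak{a}(1-\epsilon)\sum_i U(x_i-x_{k(i)}) .
\]
Here $U$ is chosen smooth, supported in the shell $R_1\leqslant\abs{x}\leqslant R_2$, and nearest-neighbour restricted.

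\emph{Step 3: Temple's inequality.} We then apply Temple's inequality to the right-hand side. We use the constant function as trial state and the gap $\epsilon'(2\pi/L)^2$. To have a large enough gap, we first split the box into smaller boxes of side $\ell$ with Neumann conditions, using superadditivity and convexity in the particle number. The expectation of the nearest-neighbour $U$ term in the constant state is $\rho(1-\mathcal O(\rho R_2^3))$, and its variance is controlled by $\norm{U}_\infty\sim R_2^{-3}$.

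\emph{Step 4: Optimizing parameters.} Finally we choose $R_1$, $R_2$, $\epsilon'$ and $\ell$ as powers of $Y$. The competing errors are $R_0/R_1$, $\rho R_2^3$, the Temple error, and the localization error. Balancing them gives a relative error $\mathcal O(Y^{1/17})$, as in \cite{Lieb_1998}.
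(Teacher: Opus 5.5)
Your overall scheme (Dyson lemma, many-body reduction, Temple's inequality in Neumann boxes of side $\ell=\mathfrak{a}Y^{-6/17}$, superadditivity in the particle number) is the Lieb--Yngvason scheme that the paper also follows, and Steps 3--4 are fine. The gap is in Step 1, and therefore in Step 2: the Dyson lemma on arbitrary star-shaped regions, in particular Voronoi cells, is false for non-radial $\widetilde{V}$.

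Here is a counterexample. Take $\widetilde{V}=\ind_{B(ce_1,c/2)}+\ind_{B(-ce_1,c/2)}$, which is even with $R_0=3c/2$. Take $B=\{\abs{x_1}<c/4\}\cap B(0,2R_2)$. This set is convex and contains $0$. It is also the Voronoi cell of a particle at $0$ when two others sit at $\pm\frac{c}{2}e_1$. Since $\widetilde{V}=0$ on $B$, the choice $\psi\equiv1$ gives $0$ on the left but $8\pi\mathfrak{a}(1-\epsilon)\int_B U>0$ on the right.

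This also explains why Dyson's ray-by-ray argument cannot be transplanted. Each ray only sees the one-dimensional problem with potential $\widetilde{V}(r\theta)$. Discarding angular derivatives lowers the variational problem \cref{defn:scattering_length_1}. Hence the angular average of these ray scattering lengths is in general strictly smaller than $\mathfrak{a}$.

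The computation you sketch has a related internal problem. After writing $\psi=(1-\omega)g$, the bulk term $2\int(1-\omega)^2\abs{\nabla g}^2$ vanishes for constant $g$. So no Hardy/Poincaré bound on it can yield $8\pi\mathfrak{a}\int U\abs{\psi}^2$. All of the $8\pi\mathfrak{a}$ sits in the boundary term you discard. Its sign is moreover not guaranteed on faces nearly tangent to rays, since $\nabla\omega$ is not parallel to $x$.

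The working argument does the opposite.
For each $r\in[R_1,R_2]$, integrate by parts on the whole ball $B(0,r)$ and drop the non-negative bulk term.
Keep the boundary term $2\int_{\partial B(0,r)}\abs{\psi}^2\,\partial_r f/f\geqslant 2\int_{\partial B(0,r)}\abs{\psi}^2\,\partial_r f$, where $f=1-\omega$.
Use the pointwise bound $\partial_r f\geqslant\mathfrak{a}r^{-2}(1-CR_0/r)$, which follows from \cref{eq:gradientomega} and \cref{eq:scat_len}. Knowing only the total flux is not enough, since $\abs{\psi}^2$ varies on the sphere.
Finally, average over $r$ against the radial profile of $U$.
This requires $U$ radial and kinetic energy on all of $B(0,R_2)$, which is exactly the form of \cref{lem:Dyson}.

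Accordingly, Step 2 must be changed. Instead of nearest-neighbour Voronoi cells, keep the term $U(x_i-x_j)$ only if no third particle lies within $2R$ of $x_j$ and $x_j\in\Lambda_{\ell,R}$. Then the balls $B(x_j,R)$ used for particle $i$ are disjoint and lie inside the Neumann box, as in \cref{lem:manybodyDyson}. Temple's inequality then applies to this interaction. Its expectation in the constant state is $4\pi\mathfrak{a}n(n-1)\ell^{-3}(1-\mathcal{O}(R/\ell+\rho R^3))$. The Dyson error $CR_0/R$ is harmless, so the Lieb--Yngvason parameter optimization still gives $Y^{1/17}$.
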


\begin{rem}
The bound in \cref{prop:gselower} holds uniformly in $\widetilde{V}$ if $R_0/\mathfrak{a}$ is bounded. 
\end{rem}
For the proof of \cref{prop:gselower}, we divide up $\Lambda_L$ into $M^3$ cubes of side length $\ell=\ell(L,N)>0$, where $M=M(L,N)\in\N$ and $M\ell=L$. We denote these cubes by $(B_i)_{1\leqslant i\leqslant M^3}$. Interactions between particles in different cubes are neglected, which lowers the ground state energy since $\widetilde{V}\geqslant0$. We impose Neumann boundary conditions on each cube, which is appropriate to find a lower bound on the ground state energy. The proof of \cref{prop:gselower} is a straightforward adaptation of the strategy used in \cite{Lieb_1998,Ricaud_2022} and we will only recall the main ingredients of the proof. 

We need a Dyson lemma such as \cite[Lemma~3]{Ricaud_2022} but for $L^1$ potentials instead of $L^\infty$ potentials. Using the properties of the scattering solution $\omega$ from \cref{lem:scatteringsol}, the proof of \cite[Lemma~3]{Ricaud_2022} shows the following lemma. 

\begin{lem}[Dyson lemma for non-radial potentials]\label{lem:Dyson}
Let $0\leqslant\widetilde{V}\in L^1(\R^3)$ be even, compactly supported with $\supp\widetilde{V}\subset B(0,R_0)$ for some $R_0>0$, and let $\mathfrak{a}$ be the scattering length of $\widetilde{V}$, defined as in \cref{defn:scattering_length_1}. Let $0\leqslant U\in L^1(\R^3)$ be radial with $\int_{\R^3}U=1$ and $\supp U\subset\{R_1\leqslant\abs{x}\leqslant R_2\}$. Let $B(0,R_2)\subset\Omega\subset\R^3$ be an open set. Then, we have on $L^2(\Omega)$
\[
	-2\gradient_x\ind_{B(0,R_2)}\gradient_x+\widetilde{V}(x)\geqslant 8\pi\mathfrak{a}\biggl(1-\frac{CR_0}{R_1}\biggr)U(x),
\]
where $C>0$ is a universal constant (independent of $\widetilde{V}$, $U$, $R_0$, $R_1$, $R_2$, $\Omega$). 
\end{lem}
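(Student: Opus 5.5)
We follow the strategy of \cite[Lemma~3]{Ricaud_2022}, which in turn adapts Dyson's original argument, replacing the one-dimensional radial reduction by a ground-state representation built from the scattering solution. First, since $U$ and the left-hand side are local in the relevant region, we may assume $R_1>CR_0$ (otherwise the right-hand side is nonpositive and the claim is trivial since the left-hand side is nonnegative). We also reduce to $\Omega=B(0,R_2)$, because for $\psi\in H^1(\Omega)$ only $\psi|_{B(0,R_2)}$ enters both sides. Let $\omega$ be the minimizer from \cref{lem:scatteringsol} for $\widetilde{V}$ (taking $R\geqslant R_0$, so $\widetilde{V}_R=\widetilde{V}$), and set $f=1-\omega$. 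By \cref{lem:scatteringsol}, $f$ satisfies $0\leqslant f\leqslant 1$ and is even, and
\[
-2\Delta f+\widetilde{V}f=0
\]
holds distributionally. Moreover, $8\pi\mathfrak{a}=\int\widetilde{V}f$.

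The core step is a lower bound for the quadratic form $\int_{B(0,R_2)}(2|\nabla\psi|^2+\widetilde{V}|\psi|^2)$ restricted to a ball $B(0,r)$ with $r\geqslant R_1$. We aim to show
\[
\int_{B(0,r)}\bigl(2|\nabla\psi|^2+\widetilde{V}|\psi|^2\bigr)\geqslant 8\pi\mathfrak{a}\Bigl(1-\frac{CR_0}{r}\Bigr)\frac{1}{4\pi r^2}\int_{\partial B(0,r)}|\psi|^2\,\mathrm{d}\sigma .
\]
Integrating this against the radial profile of $U$ then gives the lemma, because $\int U=1$ and $\int_{B(0,R_2)}\ge\int_{B(0,r)}$ by positivity. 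To prove the boundary inequality we write $\psi=fg$, which is possible since $f>0$ (superharmonicity of $\omega<1$, or directly $f\geqslant 1-C/|x|$ away from the origin). Integrating by parts with the scattering equation gives
\[
\int_{B_r}\bigl(2|\nabla(fg)|^2+\widetilde{V}f^2|g|^2\bigr)=2\int_{B_r}f^2|\nabla g|^2+2\int_{\partial B_r}f\,\partial_n f\,|g|^2 .
\]
For $|x|\geqslant 2R_0$, the Green representation \cref{eq:gradientomega} gives
\[
\partial_n f=\frac{\mathfrak{a}}{r^2}+\mathcal{O}\!\left(\frac{\mathfrak{a}R_0}{r^3}\right),
\]
which uses evenness: the dipole term cancels, and the $\mathcal{O}$-term is controlled by $\|\widetilde{V}f\|_{L^1}=8\pi\mathfrak{a}$. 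We also have $f=1+\mathcal{O}(\mathfrak{a}/r)$ on $\partial B_r$, with $\mathfrak{a}\leqslant CR_0$, the latter because the trial function equal to $1$ on $B(0,R_0)$ gives $\mathfrak{a}\lesssim R_0$. The boundary term is then $\geqslant 2\frac{\mathfrak{a}}{r^2}(1-CR_0/r)\int_{\partial B_r}|\psi|^2$, and we drop the nonnegative bulk term $2\int f^2|\nabla g|^2$. This yields the claimed bound, since $\frac{8\pi\mathfrak{a}}{4\pi r^2}=\frac{2\mathfrak{a}}{r^2}$.

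The main obstacle is justifying this integration by parts for $\widetilde{V}\in L^1$ only. In that setting $f$ is merely in $H^1_{\mathrm{loc}}$ and may vanish or be badly behaved inside $\supp\widetilde{V}$, so the substitution $\psi=fg$ and the distributional identity need care. We would first prove the inequality for $\psi\in C^\infty(\overline{B_r})$. We avoid dividing by $f$ by using instead the identity
\[
2|\nabla\psi|^2=2f^2|\nabla(\psi/f)|^2+2\nabla(|\psi|^2/f)\cdot\nabla f,
\]
and we test the scattering equation against $|\psi|^2/f$, which is legitimate where $f\geqslant c>0$. If positivity of $f$ on $\supp\widetilde{V}$ fails quantitatively, we instead regularize by replacing $f$ with $f+\delta$, or truncate $\widetilde{V}$ to $\min(\widetilde{V},n)$. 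In the latter case we apply the $L^\infty$ result of \cite{Ricaud_2022}, and pass $n\to\infty$ using monotone convergence of the scattering lengths, $\mathfrak{a}(\min(\widetilde{V},n))\uparrow\mathfrak{a}$. This last convergence follows as in the proof of \cref{eq:scattering_length_bound_1}, since these are infima of increasing functionals whose minimizers are bounded in $\dot H^1$. We expect this truncation route to be the cleanest.
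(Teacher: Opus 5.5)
Your proposal is correct. However, the step that actually carries it for $L^1$ potentials, namely truncating to $\widetilde V_n=\min(\widetilde V,n)$, is a genuinely different route from the paper's.

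\textbf{The paper's route.} The paper does not approximate $\widetilde V$ by bounded potentials. It re-runs the argument of \cite[Lemma~3]{Ricaud_2022} directly for $\widetilde V\in L^1$, feeding in the properties of the scattering solution from \cref{lem:scatteringsol}: existence and uniqueness of $0\le\omega\le1$, evenness, the Green representation with its decay and gradient bounds, and $8\pi\mathfrak a=\int\widetilde V(1-\omega)$.

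\textbf{Your route.} You use monotonicity of the form in the potential, $\int\widetilde V|\psi|^2\ge\int\widetilde V_n|\psi|^2$. You apply the bounded lemma to $\widetilde V_n$, which has the same support and is still even, and let $n\to\infty$. The convergence is exactly the argument of \cref{eq:scattering_length_bound_1}:
$8\pi(\mathfrak a-\mathfrak a_n)\le\int(\widetilde V-\widetilde V_n)(1-\omega_n)^2\le\|(\widetilde V-n)_+\|_{L^1}\to0$.
The operative facts here are $0\le1-\omega_n\le1$ and dominated convergence, not the $\dot H^1$ bound you cite.

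\textbf{What each buys.} Your route needs no scattering theory for $L^1$ potentials. It sidesteps the positivity of $1-\omega$, which \cref{lem:scatteringsol} does not provide. It also removes form-domain and density issues, since the bound for $\widetilde V_n$ holds for every $\psi$ in the form domain of the $\widetilde V$-form. Its price is that it needs the bounded lemma in exactly this three-dimensional form, with $C$ independent of $\|\widetilde V\|_{L^\infty}$.

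Your ground-state substitution supplies this. For bounded $\widetilde V$, one has $f=1-\omega>0$ by Harnack's inequality. The bound $\partial_nf\ge\frac{\mathfrak a}{r^2}\frac{1-R_0/r}{(1+R_0/r)^3}$ on $\partial B_r$ follows from the Green representation by bounding the kernel pointwise for $|y|\le R_0$. Then $f\le1$ handles the boundary term. So no dipole cancellation, and hence no evenness, is needed at this step, contrary to what you state. The paper's route, in turn, reuses \cref{lem:scatteringsol}, which it needs anyway for the renormalization.

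\textbf{One correction.} Your first justification of $f>0$ is wrong. Superharmonicity of $\omega$ gives $\omega>0$, not $\omega<1$. The function $f=1-\omega$ is subharmonic, and for $L^1$ potentials it can vanish: e.g.\ $f(x_0)=0$ if $\widetilde V\sim c|x-x_0|^{-2}$ near $x_0$. You flag this yourself afterwards.

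Besides truncation, your $f+\delta$ option also closes the direct route. The reason is that $f+\delta\ge\delta$ is a supersolution, $-2\Delta(f+\delta)+\widetilde V(f+\delta)=\delta\widetilde V\ge0$. So the bulk term keeps its sign, and the boundary factor satisfies $\partial_nf/(f+\delta)\ge\partial_nf/(1+\delta)$.
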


The following lemma is a straightforward adaptation of \cite[Lemma~6]{Ricaud_2022} or \cite[Lemma~8]{Visconti_2026} to non-radial two-body interaction potentials. It follows directly from \cref{lem:Dyson} and the proof of \cite[Lemma~8]{Visconti_2026}. 
\begin{lem}[Many-body Dyson lemma]\label{lem:manybodyDyson}
Let $n\in\N$. Let $\widetilde{V}$ and $U$ be as in \cref{lem:Dyson}, with $\supp U\subset\{R/2\leqslant\abs{x}\leqslant R\}$. Define $\Lambda_{\ell,R}=[-\ell/2+R,\ell/2-R]^3$. 
Then, we have on $L^2_\textnormal{s}(\Lambda_\ell^n)$
\begin{multline*}
	\sum_{i=1}^n-\Delta_i+\frac{1}{2}\sum_{\substack{1\leqslant i,j\leqslant n \\ i\neq j}}\widetilde{V}(x_i-x_j) \\ 
	\geqslant4\pi\mathfrak{a}\biggl(1-\frac{CR_0}{R}\biggr)\sum_{\substack{1\leqslant i,j\leqslant n \\ i\neq j}}U(x_i-x_j)\ind_{\Lambda_{\ell,R}}(x_j)\prod_{\substack{1\leqslant k\leqslant n\\ k\neq i,j}}\ind_{\{\abs{x_j-x_k}\geqslant 2R\}}.
\end{multline*}
\end{lem}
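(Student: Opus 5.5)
The plan is to reduce the many-body statement to the two-body Dyson lemma (\cref{lem:Dyson}), following the proof of \cite[Lemma~8]{Visconti_2026}: split each particle's kinetic energy into Voronoi-type cells, then apply \cref{lem:Dyson} in the relative coordinate of each nearest-neighbour pair. Everything is done on $L^2_\textnormal{s}(\Lambda_\ell^n)$ with Neumann-type quadratic forms, so we may fix all other particles and argue pointwise in their positions.

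\textbf{Step 1 (nearest-neighbour decomposition).} Fix $j$ and the positions $x_k$, $k\neq j$. For each $i\neq j$ let $\mathcal{B}_i$ be the set of $x_j\in\Lambda_\ell$ whose nearest neighbour among $\{x_k\}_{k\neq j}$ is $x_i$. These Voronoi cells partition $\Lambda_\ell$ up to a null set. Then
\[
-\Delta_j\geqslant\sum_{i\neq j}-\nabla_j\ind_{\mathcal{B}_i}(x_j)\nabla_j
\]
as quadratic forms. We also keep only half of $\widetilde{V}(x_i-x_j)$ for each ordered pair, since the sum over ordered pairs with the factor $\frac12$ counts every interaction in exactly this way. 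We may further restrict to the gradient inside $B(x_i,R)$, using $-\nabla\ind_{A}\nabla\geqslant-\nabla\ind_{A'}\nabla$ for $A'\subset A$.

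\textbf{Step 2 (apply \cref{lem:Dyson}).} Suppose $x_j\in\Lambda_{\ell,R}$ and $\abs{x_j-x_k}\geqslant2R$ for all $k\neq i,j$. Then on the ball $B(x_i,R)$, which lies inside $\Lambda_\ell$, the point $x_i$ is the nearest neighbour of $x_j$, so $B(x_i,R)\subset\mathcal{B}_i$. We apply \cref{lem:Dyson} in the variable $y=x_j-x_i$ with $R_1=R/2$, $R_2=R$, and $\Omega$ the translate of the cell. This gives
\[
-\nabla_j\ind_{\mathcal{B}_i}\nabla_j+\tfrac12\widetilde{V}(x_i-x_j)\geqslant4\pi\mathfrak{a}\Bigl(1-\tfrac{CR_0}{R}\Bigr)U(x_i-x_j).
\]
Note the factor $\frac12$ relative to \cref{lem:Dyson}: the kinetic term there is $-2\nabla\ind\nabla$, so we divide the whole inequality by $2$. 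The indicator conditions can be inserted because the right side vanishes otherwise, and the left side is nonnegative.

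\textbf{Step 3 (sum up).} Summing over $j$ and $i\neq j$ and using the symmetry $\widetilde{V}(x)=\widetilde{V}(-x)$ yields the claim.

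The main obstacle is that the cells $\mathcal{B}_i$ depend on the other particles, and \cref{lem:Dyson} is stated on an open set $\Omega$ containing the ball. To handle this, we use the nearest-neighbour condition and the cutoff $\ind_{\Lambda_{\ell,R}}(x_j)$ to guarantee that $B(x_i,R)$ lies inside the relevant region. We also need to check that \cref{lem:Dyson} only uses the gradient on $B(0,R_2)$, so that the form inequality holds fibrewise in the other coordinates. This is exactly the argument of \cite{Visconti_2026}, and radiality of $\widetilde{V}$ plays no role in it.
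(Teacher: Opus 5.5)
\textbf{There is a gap in Step~2: you attach the geometric cut-offs to the moving particle instead of to the centre and the spectators.} Your overall strategy is the paper's own, which invokes the proof of \cite[Lemma~8]{Visconti_2026}: split one particle's kinetic energy over the Voronoi cells of the frozen particles, then apply \cref{lem:Dyson} with $R_1=R/2$, $R_2=R$ in the relative coordinate.

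In your labelling, $x_j$ is the moving variable and $\mathcal{B}_i$ is the cell of $x_i$ among the frozen points $\{x_k\}_{k\neq j}$. To replace $-\nabla_j\ind_{\mathcal{B}_i}\nabla_j$ by $-\nabla_j\ind_{B(x_i,R)}\nabla_j$ and use \cref{lem:Dyson} on $\Omega=\Lambda_\ell^\circ-x_i$, you need $B(x_i,R)\subset\mathcal{B}_i\cap\Lambda_\ell$. That requires $x_i\in\Lambda_{\ell,R}$ and $\abs{x_i-x_k}\geqslant 2R$ for $k\neq i,j$, which are conditions on the centre and spectators, not on $x_j$. Your hypotheses do not imply it. Take $x_i=0$, $x_k=(3R/2,0,0)$, $x_j=(-3R/5,0,0)$. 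Then $\abs{x_j-x_k}\geqslant 2R$, $U(x_i-x_j)$ may be non-zero, and $x_i$ is indeed the nearest neighbour of this particular $x_j$. Yet $(9R/10,0,0)\in B(x_i,R)$ is closer to $x_k$ than to $x_i$. Likewise, $x_j\in\Lambda_{\ell,R}$ does not keep $B(x_i,R)$ inside $\Lambda_\ell$. \cref{lem:Dyson} needs the kinetic energy on the whole ball, not the nearest-neighbour property of one point.

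Relatedly, your sentence ``the indicator conditions can be inserted because the right side vanishes otherwise'' is legitimate only when the indicators are constant in the differentiated variable, i.e.\ depend only on frozen coordinates. You cannot make a case distinction on a function of $x_j$ inside a form inequality involving $\nabla_j$.

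The statement itself tells you the correct assignment. The cut-off $\ind_{\Lambda_{\ell,R}}(x_j)\prod_{k\neq i,j}\ind_{\{\abs{x_j-x_k}\geqslant2R\}}$ does not depend on $x_i$, so $x_i$ is the moving particle and $x_j$ the centre. Fix $\{x_k\}_{k\neq i}$ and let $\mathcal{B}_j$ be their Voronoi cells in $\Lambda_\ell$. When the cut-off equals $1$, any $y$ with $\abs{y-x_j}<R$ satisfies $\abs{y-x_k}\geqslant 2R-\abs{y-x_j}>R>\abs{y-x_j}$ and $y\in\Lambda_\ell^\circ$. Hence $B(x_j,R)\subset\mathcal{B}_j\cap\Lambda_\ell$, and \cref{lem:Dyson}, divided by $2$, applies in $y=x_i-x_j$ on $\Omega=\Lambda_\ell^\circ-x_j\supset B(0,R)$. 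When the cut-off equals $0$, use $\frac12\widetilde{V}\geqslant0$. Summing over $j\neq i$ and then over $i$ gives the claim. Equivalently, keep your labelling, put the cut-offs on $x_i$, and relabel $i\leftrightarrow j$ at the end using $U(-x)=U(x)$. With this correction your argument coincides with the paper's.
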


To conclude the proof of \cref{prop:gselower}, we need a lower bound for the ground state energy of 
\[
	H_{n,\ell}\coloneqq\sum_{i=1}^n -\Delta_i + \sum_{1\leqslant i<j\leqslant n}\widetilde{V}(x_i-x_j)
\]
on $L^2_\textnormal{s}(\Lambda_\ell^n)$, where $\Lambda_\ell=[-\ell/2,\ell/2]^3$ is the box with Neumann boundary conditions and $n\leqslant4\rho\ell^3$. The proof of the following lemma is similar to that of \cite[Proposition~4]{Ricaud_2022} or \cite[Proposition~4]{Visconti_2026} and we skip the details. 

\begin{lem}\label{lem:gseHnllower}
Let $n\leqslant4\rho\ell^3$ and assume that $R_0/\mathfrak{a}$ is bounded. Then, for the choice $\ell=\mathfrak{a}Y^{-6/17}$ the ground state energy $E(n,\ell)$ of $H_{n,\ell}$ satisfies the lower bound 
\begin{equation}\label{eq:Enlbound}
	E(n,\ell)
	\geqslant\frac{4\pi\mathfrak{a}n(n-1)}{\ell^3}(1-\mathcal{O}(Y^{1/17}))
\end{equation}
when $Y\to0$. 
\end{lem}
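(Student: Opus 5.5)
We follow the Lieb--Yngvason strategy in the form of \cite[Proposition~4]{Ricaud_2022}. We split the kinetic energy of each particle as $-\Delta_i=-\varepsilon\Delta_i+(1-\varepsilon)(-\Delta_i)$, with $0<\varepsilon<1$ to be optimized. To the part $(1-\varepsilon)(-\Delta_i)$, together with the interaction, we apply the many-body Dyson lemma \cref{lem:manybodyDyson}. We take $U$ to be the normalized radial indicator of the shell $\{R/2\leqslant\abs{x}\leqslant R\}$, with $R_0\ll R\ll\ell$. This yields the lower bound
\[
H_{n,\ell}\geqslant\sum_i-\varepsilon\Delta_i+(1-\varepsilon)4\pi\mathfrak{a}\Bigl(1-\tfrac{CR_0}{R}\Bigr)W(x_1,\dotsc,x_n),
\]
where $W$ is the multiplication operator on the right-hand side of \cref{lem:manybodyDyson}. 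The factor $1-\varepsilon$ can be absorbed by rescaling $\widetilde{V}$ and $U$, since $\widetilde{V}\geqslant(1-\varepsilon)\widetilde{V}$ and the Dyson lemma is linear in the potential up to scattering length monotonicity. More carefully, we use $\mathfrak{a}((1-\varepsilon)\widetilde{V})\geqslant(1-\varepsilon)\mathfrak{a}$ together with the fact that $R_0/\mathfrak{a}$ is bounded.

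Next, we bound the remaining operator $-\varepsilon\sum_i\Delta_i+c\,W$ from below by first-order perturbation theory in the Neumann ground state. The unperturbed Neumann Laplacian has ground state $\ell^{-3n/2}$ and gap $\varepsilon\pi^2/\ell^2$. Temple's inequality gives
\[
E\geqslant\langle W\rangle_0-\frac{\langle W^2\rangle_0-\langle W\rangle_0^2}{\varepsilon\pi^2\ell^{-2}-\langle W\rangle_0},
\]
with expectations taken in the constant state. Computing $\langle W\rangle_0$ is then explicit. Each pair contributes $\ell^{-3}$, reduced by the boundary factor $\ind_{\Lambda_{\ell,R}}$, which costs $O(R/\ell)$, and by the no-third-particle factor, which costs $O(n R^3/\ell^3)$. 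Hence
\[
4\pi\mathfrak{a}\langle W\rangle_0\geqslant\frac{4\pi\mathfrak{a}n(n-1)}{\ell^3}\bigl(1-C\tfrac{R}{\ell}-C\rho R^3\bigr).
\]
The variance term satisfies $\langle W^2\rangle_0\lesssim n^2\ell^{-6}+n^2\ell^{-3}\norm{U}_\infty\lesssim n^2\ell^{-6}(1+\ell^3/R^3)$. Temple's inequality is applicable as long as $\langle W\rangle_0\ll\varepsilon\ell^{-2}$, i.e.\ $\mathfrak{a}\rho\ell^2\ll\varepsilon$. This holds because $n\leqslant4\rho\ell^3$ and $\mathfrak{a}\rho\ell^2=Y^{1-12/17}=Y^{5/17}$.

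The main obstacle is to balance all the error terms. These are $\varepsilon$, $R_0/R\sim\mathfrak{a}/R$, $R/\ell$, $\rho R^3$, the Temple denominator correction $\mathfrak{a}\rho\ell^2/\varepsilon$, and the variance term $\sim n\ell^{-3}(\ell^3/R^3)/(\varepsilon\ell^{-2})\sim\rho\ell^2\ell^3/(R^3\varepsilon)$ relative to the main term. We choose $\ell=\mathfrak{a}Y^{-6/17}$ and $R=\mathfrak{a}Y^{-\alpha}$ and $\varepsilon=Y^{\beta}$ as powers of $Y$, and optimize the exponents. We expect that this exponent bookkeeping, done as in \cite{Ricaud_2022}, gives every error term of order at most $Y^{1/17}$, yielding \cref{eq:Enlbound}. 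The case $n\leqslant1$ is trivial. For the $L^1$ rather than $L^\infty$ potential, nothing changes, since $\widetilde{V}$ enters only through \cref{lem:manybodyDyson}.
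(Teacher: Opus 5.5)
Your route is the one the paper intends. The paper simply defers to the proof of \cite[Proposition~4]{Ricaud_2022}, which is the Lieb--Yngvason scheme: split off $\varepsilon\sum_i(-\Delta_i)$, apply \cref{lem:manybodyDyson} to the rest, and use Temple's inequality with the Neumann gap $\varepsilon\pi^2\ell^{-2}$. However, two of your estimates are wrong as written, and the bookkeeping that actually produces the rate $Y^{1/17}$ is only announced, not carried out.

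\textbf{Temple applicability.} The gap must be compared with the \emph{total} interaction energy in the box, not with an energy per particle. Write $c=(1-\varepsilon)4\pi\mathfrak{a}(1-CR_0/R)$; then $c\langle W\rangle_0\leqslant 4\pi\mathfrak{a}n(n-1)\ell^{-3}$. The condition is therefore $\mathfrak{a}n^2/\ell\ll\varepsilon$. For $n\leqslant4\rho\ell^3$ this reads $\mathfrak{a}\rho^2\ell^5=Y^{4/17}\ll\varepsilon$, not $\mathfrak{a}\rho\ell^2=Y^{5/17}\ll\varepsilon$. Its only role is to keep the Temple denominator $\geqslant\frac12\varepsilon\pi^2\ell^{-2}$.

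\textbf{Variance.} Your bound on $\langle W^2\rangle_0$ cannot hold, since $\langle W^2\rangle_0\geqslant\langle W\rangle_0^2\sim n^4\ell^{-6}$. The clean estimate goes as follows. For fixed $j$, at most one $i$ contributes to $W$: if $U(x_i-x_j)\neq0$ then $\lvert x_i-x_j\rvert\leqslant R$, and every other term with second index $j$ vanishes through the factor $\ind_{\{\lvert x_j-x_i\rvert\geqslant2R\}}$. Hence $W=\sum_jF_j$ with $0\leqslant F_j\leqslant\|U\|_\infty$, and Cauchy--Schwarz gives $W^2\leqslant n\|U\|_\infty W$. Bounding the variance by $c^2\langle W^2\rangle_0$, the Temple correction relative to the main term is $\lesssim\mathfrak{a}n\ell^2/(\varepsilon R^3)\lesssim\mathfrak{a}\rho\ell^5/(\varepsilon R^3)$. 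This is what you eventually write, except that you dropped the factor $\mathfrak{a}$; without it the expression is not dimensionless.

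\textbf{Parameters.} With $\ell=\mathfrak{a}Y^{-6/17}$, take $R=\mathfrak{a}Y^{-5/17}$ and $\varepsilon=Y^{1/17}$. The error terms are then:
\begin{itemize}
\item $\varepsilon=Y^{1/17}$;
\item $R_0/R\leqslant CY^{5/17}$, which is where boundedness of $R_0/\mathfrak{a}$ enters;
\item $R/\ell=Y^{1/17}$;
\item $\rho R^3=Y^{2/17}$;
\item $\mathfrak{a}\rho^2\ell^5/\varepsilon=Y^{3/17}$, so the denominator condition holds;
\item $\mathfrak{a}\rho\ell^5/(\varepsilon R^3)=Y^{1/17}$.
\end{itemize}
Together these give \cref{eq:Enlbound}.

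\textbf{Minor point.} The factor $1-\varepsilon$ needs no scattering-length monotonicity. Since $\widetilde{V}\geqslant0$, $(1-\varepsilon)\sum_i(-\Delta_i)+\sum_{i<j}\widetilde{V}(x_i-x_j)\geqslant(1-\varepsilon)\bigl(\sum_i(-\Delta_i)+\sum_{i<j}\widetilde{V}(x_i-x_j)\bigr)$, and \cref{lem:manybodyDyson} is applied to the bracket.
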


\begin{rem}
We need to choose $\ell=\mathfrak{a}Y^{-6/17}\ll\mathfrak{a}Y^{-1/2}=(\rho\mathfrak{a})^{-1/2}=\ell_\textnormal{GP}$ much shorter than the Gross--Pitaevskii length scale, so we cannot apply the lower bound for the ground state energy of $H_{n,\ell}$ directly to $H_{N,N}$. 
\end{rem}

%%%%%%%%%%%%%%%%%%%%%%%%%%%%%%%%%%%%%%%%%%%%%%%%%%%%%%%%
\section{Preliminaries}\label{ch:preliminaries}
\subsection{Fock space formalism}\label{sec:fock}
In the following, let $\mathfrak{H}$ be a subspace of $L^2(\Lambda)$. 
We denote the bosonic Fock space by
\[
	\mathcal{F}(\mathfrak{H})\coloneqq\bigoplus_{n=0}^\infty\mathfrak{H}^n, 
	\qquad \mathfrak{H}^n\coloneqq\bigotimes_\textnormal{sym}^n\mathfrak{H},
	\qquad \mathfrak{H}^0\coloneqq\C. 
\]
For $g\in\mathfrak{H}$ we denote the standard creation and annihilation operators by $a^\ast(g)$ and $a(g)$, respectively. Additionally, we will use the operator-valued distributions $a^\ast_x$ and $a_x$ with $x\in\Lambda$ such that
\[
	a^\ast(g)=\int_\Lambda g(x)a^\ast_x\diff x,\quad a(g)=\int_\Lambda \overline{g(x)}a_x\diff x, \quad \forall g\in\mathfrak{H}. 
\]
These operators satisfy the standard canonical commutation relations. The second quantization of a self-adjoint operator $A$ on $\mathfrak{H}$ will be denoted by $\sq(A)$. 
In particular, the number operator $\mathcal{N}\coloneqq\sq(\id)$ can be rewritten as
\[
	\numop=\int_\Lambda a^\ast_xa_x\diff x=\sum_{n=1}^\infty a^\ast_na_n
\] 
for any orthonormal basis $\{u_n\}_{n\in\N}$ of $\mathfrak{H}$. 

\subsection{The excitation Hamiltonian}\label{sec:excitationham}

We denote $\mathcal{F}=\mathcal{F}(L^2(\Lambda))$. Let $\Lambda^\ast=2\pi\Z^3$. We can choose $\{u_p\}_{p\in\Lambda^\ast}$ with $u_p(x)=e^{\ii p\cdot x}$ explicitly as an orthonormal basis of $L^2(\Lambda)$ and we define $a^\sharp_p\coloneqq a^\sharp(u_p)$ with $\sharp\in\{\ast, \cdot\}$. Using the method of second quantization, we can write the Hamiltonian \cref{defn:Hamiltonianbox} as
\begin{equation}\label{defn:Hamiltonianboxsq}
\begin{split}
	H_N&=\sum_{p\in\Lambda^\ast}p^2a^\ast_pa_p+\frac12\sum_{p,q,r\in\Lambda^\ast}\widehat{V}_N(r)a^\ast_{p+r} a^\ast_qa_pa_{q+r}  \\
	&=\int_{\Lambda}a^\ast_x(-\Delta_x)a_x\diff x+\frac12\int_{\Lambda}\int_{\Lambda}V_N(x-y)a^\ast_xa^\ast_ya_xa_y\diff x\diff y. 
\end{split}
\end{equation}
The right-hand side of \cref{defn:Hamiltonianboxsq} is an operator on $\mathcal{F}$, but we will always consider its restriction to the $N$-body sector, which is equal to \cref{defn:Hamiltonianbox}. 
We will use the convention that the indices appearing in creation and annihilation operators are always non-zero. 
Let $Q=\id-\ketbra{u_0}{u_0}$ and
\[
L^2_+(\Lambda)=QL^2(\Lambda),\quad\mathcal{F}_+=\mathcal{F}(L^2_+(\Lambda)),\quad\mathcal{F}_+^{\leqslant N}=\bigoplus_{n=0}^N\bigl(L^2_+(\Lambda)\bigr)^n,\quad\ind_+^{\leqslant N}=\ind_{\mathcal{F}_+^{\leqslant N}}. 
\]
We will use the unitary transform $U\colon L^2_\textnormal{s}(\Lambda^N)\to\mathcal{F}_+^{\leqslant N}$ introduced in \cite{Lewin_2015}. It can be defined by
\[
	U(\Psi)=\bigoplus_{k=0}^N Q^{\otimes k}\biggl(\frac{a_0^{N-k}}{\sqrt{(N-k)!}}\Psi\biggr)
\]
for every $\Psi\in L^2_\textnormal{s}(\Lambda^N)$.
We have the following identities on $\mathcal{F}_+^{\leqslant N}$:\begin{equation}\label{eq:Uidentities}
\begin{split}
Ua^\ast_0a_0U^\ast&=N-\numop, \\
Ua^\ast(f)a_0U^\ast&=a^\ast(f)\sqrt{N-\numop}, \\
Ua^\ast_0a(f)U^\ast&=\sqrt{N-\numop}a(f), \\
Ua^\ast(f)a(g)U^\ast&=a^\ast(f)a(g)
\end{split}
\end{equation}
for all $f,g\in L^2_+(\Lambda)$. 
The proof of the following lemma is similar to that of \cite[Lemma~5]{Nam_2023} and we skip the details. 
\begin{lem}\label{lem:Ubound}
The following operator identity holds on $\mathcal{F}_+^{\leqslant N}$:
\[
	UH_NU^\ast=\ind_+^{\leqslant N}\mathcal{H}\ind_+^{\leqslant N},
\]
where 
\begin{equation}\label{defn:mathcalH}
\mathcal{H}=\frac{N-1}{2}\widehat{V}(0)+\sq(-\Delta)+H_2+Q_2+Q_3+Q_4+\mathcal{E}^{(U)}
\end{equation}
is an operator on the full Fock space $\mathcal{F}$ with 
\begin{align*}
H_2&=N\int_{\Lambda^2}V_{N}(x-y)a_x^\ast a_y\diff x\diff y, \\
Q_2&=\frac{1}{2}\biggl(N-\numop-\frac12\biggr)\int_{\Lambda^2}V_N(x-y)a_xa_y\diff x\diff y+\hc, \\
Q_3&=\sqrt{(N-\numop)_+}\int_{\Lambda^2}V_N(x-y)a^\ast_ya_xa_y\diff x\diff y+\hc, \\
Q_4&=\frac12\int_{\Lambda^2}V_N(x-y)a^\ast_x a^\ast_y a_xa_y\diff x\diff y,
\end{align*}
and $\mathcal{E}^{(U)}$ satisfies the quadratic form estimate on $\mathcal{F}$
\begin{equation}\label{error:U}
	\pm\mathcal{E}^{(U)}\leqslant C\frac{\numop^{2}}{N}+\varepsilon Q_4+\varepsilon^{-1}C\frac{(\numop+1)^2}{N^3}
\end{equation}
for all $\varepsilon>0$. 
\end{lem}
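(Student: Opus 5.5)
I would prove \cref{lem:Ubound} by direct computation: write $H_N$ in second quantization as in \cref{defn:Hamiltonianboxsq}, decompose each field as $a_x = a_0 u_0(x) + a(Q\delta_x)$, where $u_0\equiv1$, and conjugate term by term using \cref{eq:Uidentities}.

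The first step is the kinetic term. Since $-\Delta u_0=0$, we get $U\sq(-\Delta)U^\ast=\sq(-\Delta)$ on $\mathcal{F}_+^{\leqslant N}$.

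For the interaction I would expand $\frac12\sum_{p,q,r}\widehat{V}_N(r)a^\ast_{p+r}a^\ast_qa_pa_{q+r}$ according to how many of the four indices vanish. Momentum conservation excludes exactly three zero indices.
\begin{itemize}
\item Four zero indices give $\frac12\widehat V_N(0)a_0^\ast a_0^\ast a_0a_0$. Using $Ua_0^\ast a_0U^\ast=N-\numop$ and $a_0^{\ast2}a_0^2=a_0^\ast a_0(a_0^\ast a_0-1)$, this becomes $\frac12\widehat V(0)(N-\numop)(N-\numop-1)$, since $\widehat{V}_N(0)=\widehat V(0)/N$.
\item Two zero indices give three kinds of terms. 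The "exchange/direct" terms give $(N-\numop)\sum_{p}(\widehat V_N(p)+\widehat V_N(0))a^\ast_pa_p$. The pairing terms give $\frac12\sum_p\widehat V_N(p)a^\ast_pa^\ast_{-p}a_0a_0+\hc$. Here I would use $Ua_0a_0U^\ast=\sqrt{(N-\numop)(N-\numop+1)}$ placed appropriately.
\item One zero index gives $Q_3$, with $\sqrt{N-\numop}$ coming from \cref{eq:Uidentities}.
\item No zero index gives $Q_4$.
\end{itemize}

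Collecting terms, the constant and the $\numop$-linear parts combine into $\frac{N-1}{2}\widehat V(0)$ plus terms $-\widehat V(0)\numop+\widehat V(0)\numop(\numop+\ldots)/N$. These cancel against the $(N-\numop)\widehat V_N(0)\numop$ part of the direct term, up to $\mathcal{O}(\numop^2/N)$. The exchange part $(N-\numop)\sum_p\widehat V_N(p)a^\ast_pa_p$ equals $H_2$ minus $\numop\sum_p\widehat V_N(p)a_p^\ast a_p$, and the latter is bounded by $C\numop^2/N$ because $|\widehat V_N(p)|\leqslant \|V\|_{L^1}/N$.

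The pairing term is matched to $Q_2$ by expanding
\[
\sqrt{(N-\numop)(N-\numop+1)}=N-\numop+\tfrac12+\mathcal O\bigl((N-\numop+1)^{-1}\bigr)
\]
and commuting the $\numop$-dependence through $a^\ast_pa^\ast_{-p}$. This shifts $\numop$ by $2$, which produces exactly the $-\frac12$ convention in $Q_2$ up to a small remainder. These manipulations also explain the restriction to $\mathcal{F}_+^{\leqslant N}$ and the use of $(N-\numop)_+$.

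The main obstacle is bounding the pairing remainder, an operator of the form $\sum_p \widehat V_N(p)a^\ast_pa^\ast_{-p}\,\mathcal O(N^{-1})+\hc$, together with the $\numop/N$-weighted pieces. For these I cannot use $\|\widehat V_N\|_{\ell^2}$, since $V\in L^1$ only. Instead I would write the remainder in position space as $\int V_N(x-y)a_x^\ast a_y^\ast\,\Theta\,\diff x\diff y$ with $\Theta=\mathcal{O}((\numop+1)/N)$ and apply Cauchy–Schwarz in the form
\[
\pm\bigl(\int V_N a_x^\ast a_y^\ast\Theta+\hc\bigr)\leqslant \varepsilon\int V_N a^\ast_xa^\ast_ya_xa_y+\varepsilon^{-1}\int V_N(x-y)\,\Theta^\ast\Theta.
\]
The last integral equals $\|V_N\|_{L^1}\Theta^\ast\Theta\lesssim N^{-1}(\numop+1)^2/N^2$. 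Since $\int V_N a^\ast_xa^\ast_ya_xa_y=2Q_4$, this gives exactly the two last terms of \cref{error:U}. The remaining errors are bounded by $C\numop^2/N$, which completes \cref{error:U}.
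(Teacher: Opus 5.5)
Your proposal is correct and follows the route the paper has in mind: the direct expansion $a_x=a_0+q_x$ conjugated via \cref{eq:Uidentities}, as in \cite[Lemma~5]{Nam_2023}. Your Cauchy--Schwarz step against $Q_4$, using only $\norm{V_N}_{L^1}=\widehat{V}(0)/N$, is precisely what produces the last two terms of \cref{error:U}.

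Only the pairing bookkeeping needs fixing, since the factor and shift you wrote are slightly off. The term is $a^\ast_pa^\ast_{-p}\sqrt{(N-\numop)(N-\numop-1)}$, with $\sqrt{(N-\numop)(N-\numop-1)}=N-\numop-\tfrac12+\mathcal{O}\bigl((N-\numop)^{-1}\bigr)$ and $(N-\numop)^{-1}\leqslant(\numop+1)/N$ for $\numop\leqslant N-2$. Inserting $\ind(\numop\leqslant N-2)$ into this remainder does not affect the identity on $\mathcal{F}_+^{\leqslant N}$, and it makes the bound hold on all of $\mathcal{F}$.
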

$\mathcal{H}$ is an operator on the full Fock space $\mathcal{F}$. In the following sections, we will restrict it to $\mathcal{F}_+$ to simplify the notation. 

\subsection{General quadratic transforms}\label{sec:general_quadratic_transforms}
It is well-known, see, e.g., \cite[Chapter~9]{Solovej_2014}, that for every real, symmetric Hilbert--Schmidt operator $s$ on $L^2(\Lambda)$, there exists a unitary transformation $T$ on the Fock space $\mathcal{F}$ such that 
\[
	T^\ast a^\ast(g)T=a^\ast(c(g))+a(s(\overline{g})),\quad\forall g\in L^2(\Lambda),
\]
where $c=\sqrt{1+s^2}$. 

We will need the following lemma about the conservation of the particle number.
\begin{lem}\label{quadratic:numop:bound}
The transformation $T$ satisfies 
\[
	T^\ast(\numop+1)^jT\leqslant C_j(1+\norm{s}_\textnormal{HS}^2)^j(\numop+1)^j,\quad\forall j\geqslant1,
\]
where $C_j$ is a constant independent of $s$. 
\end{lem}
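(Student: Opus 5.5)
The plan is to realize $T$ as $T=e^{\mathcal{B}}$ with
\[
	\mathcal{B}=\frac12\int_{\Lambda^2}\bigl(k(x,y)a^\ast_xa^\ast_y-\overline{k(x,y)}a_xa_y\bigr)\diff x\diff y,
\]
where $k$ is the real symmetric Hilbert--Schmidt kernel with $\sinh(k)=s$ and $\cosh(k)=c$ (so $\norm{k}_\textnormal{HS}\leqslant\norm{s}_\textnormal{HS}$, since $\sinh$ is increasing and $\abs{\operatorname{arcsinh}(t)}\leqslant\abs{t}$). Then I would run a Grönwall argument along the path $T_t=e^{t\mathcal{B}}$, $t\in[0,1]$. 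An alternative is to use directly the identity $T^\ast a_xT=a(c_x)+a^\ast(s_x)$ and expand $\numop=\int a^\ast_xa_x$. This gives
\[
	T^\ast\numop T=\int\bigl(a^\ast(c_x)+a(s_x)\bigr)\bigl(a(c_x)+a^\ast(s_x)\bigr)\diff x,
\]
and Cauchy--Schwarz yields $T^\ast\numop T\leqslant 2\sq(c^2)+2\sq(s^2)+2\norm{s}_\textnormal{HS}^2\leqslant C(1+\norm{s}_\textnormal{HS}^2)(\numop+1)$. Here I use $\norm{c}_\textnormal{op}^2=1+\norm{s}_\textnormal{op}^2\leqslant 1+\norm{s}_\textnormal{HS}^2$, and the constant term comes from $\operatorname{Tr}s^2=\norm{s}_\textnormal{HS}^2$. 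This settles $j=1$.

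For general $j$ I would not expand the powers directly, since operator monotonicity fails for $t\mapsto t^j$ with $j\geqslant2$. Instead I would use the Grönwall approach. Set $f_j(t)=\langle T_t\Psi,(\numop+1)^jT_t\Psi\rangle$, so that $f_j'(t)=\langle T_t\Psi,[(\numop+1)^j,\mathcal{B}]T_t\Psi\rangle$ up to sign. Since $\mathcal{B}$ shifts particle number by $\pm2$, the commutator has the form
\[
	\int\bigl(k(x,y)a^\ast_xa^\ast_y\bigr)\bigl((\numop+3)^j-(\numop+1)^j\bigr)+\hc
\]
I would bound it by Cauchy--Schwarz. The key estimate is
\[
	\norm*{\int k(x,y)a_xa_y\,\xi}\leqslant\norm{k}_\textnormal{HS}\norm{(\numop+1)\xi}.
\]
Combined with $(\numop+3)^j-(\numop+1)^j\leqslant C_j(\numop+1)^{j-1}$ and a symmetric splitting of the powers, $(\numop+1)^{(j-1)/2}$ on each side with shifted arguments, this gives $\abs{f_j'(t)}\leqslant C_j\norm{k}_\textnormal{HS}f_j(t)$. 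Grönwall then yields $f_j(1)\leqslant e^{C_j\norm{k}_\textnormal{HS}}f_j(0)$.

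The main obstacle is that this gives an exponential, not polynomial, dependence $e^{C\norm{s}_\textnormal{HS}}$ rather than $(1+\norm{s}_\textnormal{HS}^2)^j$. To get the stated polynomial bound I would instead combine the exact $j=1$ computation with an induction on $j$. The idea is to write $T^\ast(\numop+1)^jT=\bigl(T^\ast(\numop+1)T\bigr)^j$ and note that $A\coloneqq T^\ast(\numop+1)T$ is a quadratic expression $\sq(c^2+s^2)+1+\operatorname{Tr}s^2$ plus pairing terms $\int(sc)(x,y)a^\ast_xa^\ast_y+\hc$. One then shows $A^j\leqslant C_j(1+\norm{s}_\textnormal{HS}^2)^j(\numop+1)^j$ by the standard argument for quadratic Hamiltonians. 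The diagonal part is controlled by $(1+\norm{s}_\textnormal{HS}^2)(\numop+1)$, and the pairing term, with $\norm{sc}_\textnormal{HS}\leqslant\norm{s}_\textnormal{HS}(1+\norm{s}_\textnormal{HS}^2)^{1/2}$, is handled by commuting powers of $\numop$ through it using the same $a a$ bound. Each commutation costs a factor $(1+\norm{s}_\textnormal{HS}^2)$ per power of $\numop$, which after careful bookkeeping gives exactly the $j$-th power. The bookkeeping of the pairing terms is the delicate part; if it fails, the exponential Grönwall bound suffices whenever $\norm{s}_\textnormal{HS}$ is bounded uniformly in $N$, which is the situation where the lemma is applied.
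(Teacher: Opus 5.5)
\textbf{Verdict: there is a genuine gap for $j\geqslant 2$.} The paper does not prove this lemma itself; it cites \cite[Lemma~4]{Nam_2023}, so your proposal has to be judged on its own.

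\textbf{What is correct.} Your $j=1$ argument is correct. From $T^\ast a_xT=a(c_x)+a^\ast(s_x)$ you get $T^\ast\mathcal{N}T\leqslant 2\,\mathrm{d}\Gamma(c^2+s^2)+2\|s\|_{\textnormal{HS}}^2\leqslant C(1+\|s\|_{\textnormal{HS}}^2)(\mathcal{N}+1)$. Your Grönwall argument is also correct, but as you note it only gives $e^{C_j\|s\|_{\textnormal{HS}}}$. That is a weaker lemma than the one stated. It would suffice for the applications in this paper, where $\|s_1\|_{\textnormal{HS}}$ and $\|s_2\|_{\textnormal{HS}}$ stay bounded, but it does not prove the statement.

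\textbf{The gap.} You claim $A^j\leqslant C_j(1+\|s\|_{\textnormal{HS}}^2)^j(\mathcal{N}+1)^j$ for $A=T^\ast(\mathcal{N}+1)T$ ``by the standard argument'' and ``careful bookkeeping'', then hedge on whether it works. The only estimate you actually have on $A$ is the quadratic-form bound $A\leqslant C(1+\|s\|^2_{\textnormal{HS}})(\mathcal{N}+1)$. By your own remark, $t\mapsto t^j$ is not operator monotone for $j\geqslant2$, so this bound cannot simply be raised to the $j$-th power. The missing idea is to pass from forms to norms.

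\textbf{How to close it.} Write $A=D+P+P^\ast$, where $D=\mathrm{d}\Gamma(c^2+s^2)+1+\|s\|_{\textnormal{HS}}^2$ commutes with $\mathcal{N}$ and $P=\int (sc)(x,y)a^\ast_xa^\ast_y$.
\begin{enumerate}
\item[(i)] Use $\|D\eta\|\leqslant C(1+\|s\|_{\textnormal{HS}}^2)\|(\mathcal{N}+1)\eta\|$ and $\|P\eta\|+\|P^\ast\eta\|\leqslant C\|sc\|_{\textnormal{HS}}\|(\mathcal{N}+1)\eta\|$, where $\|sc\|_{\textnormal{HS}}\leqslant 1+\|s\|_{\textnormal{HS}}^2$.
\item[(ii)] Use the pull-through relations $\mathcal{N}P=P(\mathcal{N}+2)$ and $\mathcal{N}P^\ast=P^\ast(\mathcal{N}-2)$. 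They give
\[
\|(\mathcal{N}+1)^kA\eta\|\leqslant C_k(1+\|s\|_{\textnormal{HS}}^2)\|(\mathcal{N}+1)^{k+1}\eta\|
\]
for all $k\geqslant0$, including half-integers.
\item[(iii)] Iterating yields $\|A^m\xi\|\leqslant C_m(1+\|s\|_{\textnormal{HS}}^2)^m\|(\mathcal{N}+1)^m\xi\|$. Squaring gives the form inequality $A^{2m}\leqslant C_m^2(1+\|s\|_{\textnormal{HS}}^2)^{2m}(\mathcal{N}+1)^{2m}$, which is the lemma for even $j$.
\item[(iv)] Odd $j$ needs one more step. Either apply the L\"owner--Heinz inequality ($t\mapsto t^\alpha$ is operator monotone for $0\leqslant\alpha\leqslant1$) with $\alpha=j/(2m)$. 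Or write $\langle\xi,A^{2m+1}\xi\rangle=\langle A^m\xi,A\,A^m\xi\rangle$ and combine the $j=1$ bound with the half-integer version of (ii).
\end{enumerate}
Your phrase about commuting powers of $\mathcal{N}$ through the pairing term points in this direction. Without the norm estimate in (ii) and the odd-$j$ step in (iv), however, the proposal establishes the stated polynomial bound only for $j=1$.
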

The proof can be found, for example, in \cite[Lemma~4]{Nam_2023}. 

%%%%%%%%%%%%%%%%%%%%%%%%%%%%%%%%%%%%%%%%%%%%%%%%%%%%%%%%
\section{Renormalization}\label{ch:renorm}
\subsection{Definition of the first quadratic transform}\label{sec:q1}

Similarly to \cite[Section~3]{Haberberger_2024}, we define $s_1$, $\tilde{s}_1\colon\Lambda^2\to\R$ by
\begin{equation}\label{quadratic:defn:s}
	s_1=Q^{\otimes2}\tilde{s}_1,\quad \tilde{s}_1(x,y)=-\sum_{z\in\Z^3}N\omega_{\ell,N}(x-y+z),
\end{equation}
with $\omega_{\ell,N}$ defined as in \cref{sec:truncated_scattering_solution}. We will also denote the operator with integral kernel $s_1(x,y)$ by $s_1$. 

We want to use a quadratic transform similar to that in \cite[Section~4]{Nam_2023}. 
Since $s_1(x,y)$ can be shown to be in $L^2(\Lambda^2)$, $s_1$ is a Hilbert Schmidt operator on $L^2(\Lambda)$. Also, notice that $s_1(x,y)$ is real and symmetric. Hence, as remarked in \cref{sec:general_quadratic_transforms}, we can define the first quadratic transform $T_1$, that satisfies
\begin{equation}\label{eq:T1action}
	T_1^\ast a^\ast(g)T_1=a^\ast(c_1(g))+a(s_1(\overline{g})),\quad\forall g\in L^2(\Lambda). 
\end{equation}
This operator is a unitary transform on $\mathcal{F}_+$ since $s_1$ is an operator on $L^2_+(\Lambda)$. 

\subsection{Definition of the cubic transform}\label{sec:c}

We follow the strategy used in \cite[Section~5]{Haberberger_2024} and \cite[Section~5]{Nam_2023}. 
Let $\theta\in C^\infty(\R_{\geqslant0},[0,1])$ such that $\theta(x)=1$ for $x\leqslant1/2$ and $\theta(x)=0$ for $x\geqslant1$. We define $\theta_M=\theta(\numop/M)$ for $1\leqslant M\leqslant \ell N$. 

Recall $Q=\id-\ketbra{u_0}{u_0}$ and let $Q(x,y)$ be its integral kernel. We define
\[
	K_c^\ast=\frac{1}{\sqrt{N}}\int_\Lambda q^\ast_xa^\ast(s_{1,x})q_x\diff x,\qquad K_c=\frac{1}{\sqrt{N}}\int_\Lambda q^\ast_xa(s_{1,x})q_x\diff x,
\]
where 
\[
	q_x=\int_\Lambda Q(x,y)a_y\diff y=a(Q_x). 
\]
We use $q_x$ instead of $a_x$ in the definitions of $K_c^\ast$ and $K_c$ to ensure they leave $\mathcal{F}_+$ invariant. Since $q_x\ket{\xi}=a_x\ket{\xi}$ for all $\xi\in\mathcal{F}_+$, $q_x$ may be replaced by $a_x$ in all normal ordered expressions on $\mathcal{F}_+$.

We define the cubic transform 
\[
	T_c=\exp\biggl(\frac{\theta_M}{\sqrt{N}}\int_\Lambda q^\ast_xa^\ast(s_{1,x})q_x\diff x-\hc\biggr)=\exp(\theta_MK^\ast_c-K_c\theta_M). 
\]

We will need the following bound on the particle number. The proof is the same as the one of \cite[Lemma~19]{Nam_2023}. 

\begin{lem}\label{cubic:numop:bound}
Assume that $4RN^{-1}<\ell<1/2$, $M\leqslant N$, and $t\in[-1,1]$. On $\mathcal{F}$ we have
\[
	T_c^{-t}(\numop+1)^jT_c^t\leqslant C_j(\numop+1)^j,\quad\forall j\geqslant1,
\]
where $C_j$ is a constant independent of $N$. 
\end{lem}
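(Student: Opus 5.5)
We prove the bound by a Grönwall argument in the flow parameter $t$, following \cite[Lemma~19]{Nam_2023}. Fix $\xi\in\mathcal{F}$ in a suitable dense domain, $j\geqslant1$, and set
\[
	f(t)\coloneqq\langle\xi,T_c^{-t}(\numop+1)^jT_c^t\xi\rangle,\qquad T_c^t=\exp\bigl(t(\theta_MK_c^\ast-K_c\theta_M)\bigr).
\]
Then
\[
	f'(t)=\langle\xi,T_c^{-t}\,[(\numop+1)^j,\theta_MK_c^\ast-K_c\theta_M]\,T_c^t\xi\rangle .
\]
The goal is the differential inequality $\abs{f'(t)}\leqslant C_jf(t)$ for $t\in[-1,1]$. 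Grönwall's inequality then gives $f(t)\leqslant e^{C_j}f(0)$, which is the claim.

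The first step is to compute the commutator. The operator $K_c^\ast$ is cubic and creates exactly one particle: it has two creators and one annihilator. Hence $\numop K_c^\ast=K_c^\ast(\numop+1)$, so
\[
	[(\numop+1)^j,\theta_MK_c^\ast]=\theta_MK_c^\ast\bigl((\numop+2)^j-(\numop+1)^j\bigr),
\]
and similarly for the adjoint term. Since $(\numop+2)^j-(\numop+1)^j\leqslant C_j(\numop+1)^{j-1}$, it suffices to control
\[
	\abs{\langle\eta,\theta_MK_c^\ast(\numop+1)^{j-1}\eta\rangle}
\]
by $C\langle\eta,(\numop+1)^j\eta\rangle$, with $\eta=T_c^t\xi$. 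To do this, distribute the weights symmetrically as $(\numop+1)^{(j-1)/2}$ on each side, shifted appropriately using the pull-through formula, and reduce to the basic estimate
\[
	\norm{(\numop+1)^{-1/2}K_c(\numop+1)^{-1/2}\theta_M}\lesssim\frac{\sqrt{M}}{\sqrt{N}}\norm{s_1}_{\mathrm{HS}} .
\]

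The second step is the cubic bound itself. For $\phi,\psi\in\mathcal{F}_+$, Cauchy--Schwarz gives
\[
	\Bigl|\frac{1}{\sqrt N}\int\langle\phi,q_x^\ast a^\ast(s_{1,x})q_x\psi\rangle\,\diff x\Bigr|
	\leqslant\frac{1}{\sqrt N}\Bigl(\int\norm{a(s_{1,x})q_x\phi}^2\Bigr)^{1/2}\Bigl(\int\norm{q_x\psi}^2\Bigr)^{1/2}.
\]
We then use $\norm{a(s_{1,x})q_x\phi}\leqslant\norm{s_{1,x}}_{L^2}\norm{\numop^{1/2}q_x\phi}$ together with $\sup_x\norm{s_{1,x}}_{L^2}^2=\norm{s_1}_{\mathrm{HS}}^2$, which holds by translation invariance. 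This yields
\[
	\frac{C\norm{s_1}_{\mathrm{HS}}}{\sqrt N}\,\norm{\numop\phi}\,\norm{\numop^{1/2}\psi}.
\]
On the range of $\theta_M$ we have $\numop\leqslant M$, so $\numop\leqslant\sqrt{M}\,\numop^{1/2}$ and the extra factor of $\numop^{1/2}$ is converted into $\sqrt M$. Combined with $\norm{s_1}_{\mathrm{HS}}\leqslant C\norm{N\omega_{\ell,N}}_{L^2}\leqslant C\sqrt{N\ell}$, the total prefactor is $C\sqrt{M\ell}$. In \cite{Nam_2023} one has $M\ell\lesssim1$ ($M\leqslant\ell^{-1}$ there), so this is $\mathcal{O}(1)$. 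Here I expect the stated hypothesis $M\leqslant\ell N$ to feed in through the finer Hilbert--Schmidt bound $\sup_x\norm{s_{1,x}}^2\leqslant CN\ell$ together with the $1/\sqrt N$ prefactor, exactly as in \cite{Nam_2023}. The only new input relative to that paper is that the bounds on $\omega_{\ell,N}$ must come from \cref{lem:scatteringsol}: $\omega\leqslant C(\abs{x}+1)^{-1}$ uniformly in $R$ suffices to give $\norm{N\omega_{\ell,N}}_{L^2}^2\leqslant CN\ell$.

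The main obstacle is the bookkeeping in this last step. The weights $(\numop+1)^{(j-1)/2}$ and the cutoff $\theta_M$ have to be commuted past $K_c$ so that exactly one power of $\numop^{1/2}$ is absorbed by $\theta_M$. Commuting $\theta_M$ with $K_c$ shifts its argument only to $\theta((\numop\pm1)/M)$, which is harmless. Once this is organized as in \cite[Lemma~19]{Nam_2023}, the constant depends only on $j$ and the uniform constant $C$ of \cref{lem:scatteringsol}. In particular it is independent of $N$, $R$ and $\ell$, which completes the Grönwall argument.
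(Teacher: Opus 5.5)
\textbf{Gap in the key estimate.} Your overall strategy matches \cite[Lemma~19]{Nam_2023}, to which the paper defers: Grönwall in $t$, the pull-through identity $\mathcal{N}K_c^\ast=K_c^\ast(\mathcal{N}+1)$, Cauchy--Schwarz on the cubic operator, and letting $\theta_M$ absorb one factor of $\mathcal{N}^{1/2}$. Your abstract bound $\|(\mathcal{N}+1)^{-1/2}K_c(\mathcal{N}+1)^{-1/2}\theta_M\|\lesssim\sqrt{M/N}\,\|s_1\|_{\mathrm{HS}}$ is also correct.

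The gap is in the one quantitative input. You claim $\|N\omega_{\ell,N}\|_{L^2}^2\leqslant CN\ell$, which gives you the prefactor $C\sqrt{M\ell}$. That quantity is not bounded under the lemma's hypothesis $M\leqslant N$. It is not bounded under the standing assumption $M\leqslant\ell N$ either, which you misquote as the lemma's hypothesis. Nor is it bounded in the regime the paper actually uses: in \cref{ch:proof} one takes $M=\delta_\beta N$ and $\ell=N^{-2\kappa}\Theta^{-2\vartheta}$, so $M\ell\to\infty$. Your closing remark that the hypothesis should ``feed in exactly as in \cite{Nam_2023}'' repeats the same wrong bound and supplies no argument. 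As written, the inequality $|f'(t)|\leqslant C_jf(t)$ is therefore not established with an $N$-independent constant.

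\textbf{The fix.} It is a one-line computation. $\omega_{\ell,N}$ is supported in $\{|x|\leqslant\ell\}$, and $\omega(y)\leqslant C(|y|+1)^{-1}$ uniformly in $R$ by \cref{lem:scatteringsol}. Hence $N\omega_{\ell,N}(x)\leqslant CN(N|x|+1)^{-1}\leqslant C|x|^{-1}$, and
\[
\sup_x\|s_{1,x}\|_{L^2}^2\leqslant\|N\omega_{\ell,N}\|_{L^2}^2\leqslant C\int_{|x|\leqslant\ell}\frac{\diff x}{|x|^2}=4\pi C\ell .
\]
The first inequality holds because $s_1=Q\tilde{s}_1Q$ has translation-invariant kernel, and because $\ell<1/2$ means the periodization has no overlaps in $\Lambda$. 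So $\|s_1\|_{\mathrm{HS}}\leqslant C\sqrt{\ell}$, not $C\sqrt{N\ell}$.

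Your cubic estimate then gives
\[
|\langle\phi,\theta_MK_c^\ast\psi\rangle|\leqslant C\sqrt{\ell M/N}\,\|\mathcal{N}^{1/2}\phi\|\,\|\mathcal{N}^{1/2}\psi\| .
\]
Here $\sqrt{\ell M/N}\leqslant\sqrt{\ell}\leqslant 1$ precisely because $M\leqslant N$ and $\ell<1/2$; this is where the hypothesis $M\leqslant N$ enters. With this correction, your remaining steps (symmetric distribution of $(\mathcal{N}+1)^{(j-1)/2}$ via pull-through, then Grönwall on $[-1,1]$) go through with an $N$-independent $C_j$.
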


We also need the following lemma, which is similar to \cite[Lemma~5.3]{Haberberger_2024} or \cite[Lemma~20]{Nam_2023}. 

\begin{lem}\label{cubic:lem:DeltaQ4}
Assume that $4RN^{-1}<\ell<1/2$, $MN^{-1}\leqslant\ell$, $NR^{-\gamma}\leqslant C$, and $t\in[0,1]$. We have on $\mathcal{F}_+$
\begin{align}
	T_c^{-t}\sq(-\Delta)T_c^{t}&\leqslant C(\sq(-\Delta)+Q_4+1), \label{cubic:lem:DeltaQ4:Deltabound} \\
	T_c^{-t}Q_4T_c^{t}&\leqslant C(Q_4+\numop+1+MN^{-1}\sq(-\Delta)). \label{cubic:lem:DeltaQ4:Q4bound} 
\end{align}
\end{lem}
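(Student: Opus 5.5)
The plan is a Grönwall argument in $t$, following \cite[Lemma~20]{Nam_2023}. For $\xi\in\mathcal{F}_+$ I set $f_A(t)=\langle\xi,T_c^{-t}AT_c^t\xi\rangle$ for $A\in\{\sq(-\Delta),Q_4\}$. Then
\[
\partial_t f_A(t)=\langle\xi,T_c^{-t}[A,\theta_MK_c^\ast-K_c\theta_M]T_c^t\xi\rangle .
\]
The goal is to bound $\pm[A,\theta_MK_c^\ast-\hc]$ by a combination of $\sq(-\Delta)$, $Q_4$, $\numop+1$ and $MN^{-1}\sq(-\Delta)$. Grönwall on the pair $(f_{\sq(-\Delta)},f_{Q_4})$ will then close, once \cref{cubic:numop:bound} is used to control $T_c^{-t}(\numop+1)T_c^t\leqslant C(\numop+1)$.

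For the kinetic commutator, I compute in momentum space with the kernel of $s_1$, namely $-N\widehat{\omega}_{\ell,N}(p)$ up to the $Q$ projections. The result is a cubic term
\[
\frac{1}{\sqrt N}\sum p^2\,\widehat{s}_1(p)\,a^\ast_{p}a^\ast_{q-p}a_q\theta_M+\hc,
\]
plus terms where the Laplacian hits the outer operators $q_x^\ast,q_x$. By \cref{eq:truncated_zero_scattering}, $p^2 N\widehat{\omega}_{\ell,N}(p)$ equals $\tfrac{N}{2}\bigl(\widehat{V_{R,N}(1-\omega_N)}-\widehat{\varepsilon}_{\ell,N}\bigr)(p)$. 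This splits the main term into two pieces:
\begin{itemize}
\item a $V_{R,N}$ piece, bounded by Cauchy--Schwarz as $\leqslant \delta Q_4+C\delta^{-1}\norm{V_{R,N}}_{L^1}\,\numop$. Here $V_{R,N}\leqslant V_N$, so the $Q_4$ term built from $V_{R,N}$ is dominated by $Q_4$.
\item an $\varepsilon_{\ell,N}$ piece, bounded by $C\bigl(\ell^{-3/2}N^{-1/2}\ldots\bigr)(\numop+1)$ using $\norm{\widehat{\varepsilon}_{\ell,N}}_\infty\leqslant CN^{-1}$ and the estimates on $\omega$ from \cref{lem:scatteringsol}.
\end{itemize}
The terms where $\Delta$ hits the outer $a$'s are handled by writing them with $\nabla$ on both sides. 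Using $\norm{s_{1,x}}_{L^2}\leqslant C\ell^{1/2}$ (from $\omega\leqslant C/(\abs{x}+1)$) and $\theta_M$ to cap $\numop\leqslant M$, they are bounded by $C\sqrt{M/N}\,\norm{\ldots}$, hence by $\sq(-\Delta)+\numop$ when $MN^{-1}\leqslant\ell$. Commutators with $\theta_M$ only shift $\numop$ by one and cost $O(M^{-1})$ factors, so they are harmless.

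For $Q_4$, I expand the commutator $[Q_4,K_c^\ast]$ with the CCR. It produces a quintic-type term
\[
N^{-1/2}\int V_N(x-y)\,s_1(\cdot)\,a^\ast a^\ast a^\ast a\, a ,
\]
which I bound by Cauchy--Schwarz as
\[
\leqslant Q_4+CN^{-1}\norm{s_1}^2_{\rm op}\ldots\,\numop\,(\ldots) .
\]
The factor $\theta_M$ turns the extra $\numop$ into $M$. Combined with $\norm{\nabla s_{1,x}}$ bounds and Sobolev, this gives the term $MN^{-1}\sq(-\Delta)$. The main obstacle is precisely here: $V_N$ is now only $L^1$, not compactly supported and non-radial. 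Estimates such as $\norm{V_N\, s}$ that used $L^3$ or support in \cite{Nam_2023} must instead be done with $\norm{V}_{L^1}$. I would split $V_N=V_{R,N}+(V_N-V_{R,N})$. The tail part is controlled with $\norm{V_N-V_{R,N}}_{L^1}\leqslant CN^{-1}R^{-\gamma}$ via \cref{eq:Vdecay}, and the hypothesis $NR^{-\gamma}\leqslant C$ makes it $O(N^{-2})$, so absorbable.

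Combining the two commutator bounds gives
\[
\pm\partial_t(f_{\sq(-\Delta)}+f_{Q_4})\leqslant C\bigl(f_{\sq(-\Delta)}+f_{Q_4}+\langle T_c^{-t}(\numop+1)T_c^t\rangle\bigr),
\]
with $MN^{-1}\leqslant\ell<1$. Grönwall then yields \cref{cubic:lem:DeltaQ4:Deltabound}. Running the $Q_4$ estimate alone, with the kinetic term fed back in, yields \cref{cubic:lem:DeltaQ4:Q4bound}.
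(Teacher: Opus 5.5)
Your Gr\"onwall scheme is the natural one and matches the references the paper gives instead of a proof. A joint Gr\"onwall for $\sq(-\Delta)+Q_4$ gives \cref{cubic:lem:DeltaQ4:Deltabound} (using $\numop\leqslant(2\pi)^{-2}\sq(-\Delta)$ on $\mathcal{F}_+$). Re-running the $Q_4$ estimate with that bound inserted then gives \cref{cubic:lem:DeltaQ4:Q4bound}.

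This only closes if both commutator bounds hold with constants independent of $N$, $M$ and $\ell$, and your estimate of the $\varepsilon_{\ell,N}$ piece of the kinetic commutator does not. That piece, $N^{-1/2}\sum_{p,q}N\hat{\varepsilon}_{\ell,N}(p)\,a^\ast_{p+q}a^\ast_{-p}a_q\theta_M+\hc$, is cubic. Bounding it by $\numop+1$ alone needs an extra $\numop^{1/2}$, which the cutoff turns into $M^{1/2}$. With $\norm{N\varepsilon_{\ell,N}}_{L^2}\sim\ell^{-3/2}$ the coefficient is $\ell^{-3/2}(M/N)^{1/2}$, and $MN^{-1}\leqslant\ell$ only bounds this by $\ell^{-1}$. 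Gr\"onwall then gives $C(\sq(-\Delta)+Q_4+\ell^{-1}(\numop+1))$. That is not the claim: $\ell^{-1}\numop$ is not dominated by $C\sq(-\Delta)$, and the lemma is used with $\ell=N^{-2\kappa}\Theta^{-2\vartheta}\to0$.

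The fix is to pay with kinetic energy. Since $N\ell/2>2R$, \cref{lem:scatteringsol} gives $\abs{N\varepsilon_{\ell,N}}\leqslant C\ell^{-3}\ind_{\{\ell/2\leqslant\abs{x}\leqslant\ell\}}$, hence $\sum_{p\neq0}p^{-2}\abs{N\hat{\varepsilon}_{\ell,N}(p)}^2\leqslant C\ell^{-1}$. Cauchy--Schwarz with weights $p^{2}$ and $p^{-2}$ then yields $\pm(\cdots)\leqslant C\sqrt{M/(N\ell)}\,(\sq(-\Delta)+\numop)\leqslant C(\sq(-\Delta)+\numop)$. This is where $MN^{-1}\leqslant\ell$ is really needed. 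In the cross term, $\norm{s_{1,x}}_{L^2}\leqslant C\ell^{1/2}$ already gives $C(M\ell/N)^{1/2}\sq(-\Delta)$, which is harmless for any $M\leqslant N$.

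Your treatment of the $Q_4$ commutator is also under-specified.
\begin{enumerate}
\item Besides the quintic term, $[Q_4,K_c^\ast]$ contains the cubic term $N^{-1/2}\int V_N(x-y)s_1(x,y)a^\ast_xa^\ast_ya_x$, which you omit. It is harmless: it is $\leqslant\delta Q_4+C\delta^{-1}\numop$ by $\abs{s_1}\leqslant CN$ and $N\norm{V_N}_{L^1}\leqslant C$.
\item The quintic term is $N^{-1/2}\int V_N(x-y)\,a^\ast_xa^\ast_y\bigl(\int s_1(z,x)a^\ast_za_z\diff z\bigr)a_y$. After Cauchy--Schwarz against $Q_4$ you are left with the second quantization of \emph{multiplication} by $s_1(\cdot,x)$, so $\norm{s_1}_{\mathrm{op}}$ is not the relevant quantity. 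Using $\norm{s_1}_{L^\infty}\leqslant CN$ there produces $C\delta^{-1}M^2\numop$.
\item What works is $\abs{s_1(z,x)}\leqslant C\abs{z-x}^{-1}+C$ together with Hardy's inequality, $\bigl(\int s_1(z,x)a^\ast_za_z\diff z\bigr)^2\leqslant C\numop(\sq(-\Delta)+\numop)$ uniformly in $x$. With $\norm{V_N}_{L^1}\leqslant C/N$ and the cutoff this gives $\delta Q_4+C\delta^{-1}(M/N)^2(\sq(-\Delta)+\numop)$.
\end{enumerate}

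Finally, in these estimates the potential itself enters only through $V_{R,N}\leqslant V_N$ and $N\norm{V_N}_{L^1}\leqslant C$. So the tail of $V$ is not the obstacle here, and splitting $V_N$ is unnecessary. Conversely, $\norm{V_N-V_{R,N}}_{L^1}=\mathcal{O}(N^{-2})$ does not by itself make the tail absorbable: with $L^\infty$ bounds on $s_1$ the tail quintic term still costs $M^2N^{-1}\numop$.
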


\subsection{Definition of the second quadratic transform}\label{sec:q2}

In this section, we will define a transform similar to that in \cite[Section~6]{Haberberger_2024}. As remarked in \cref{sec:general_quadratic_transforms}, by choosing
\[
	s_2=\sum_{p\neq0}\sinh(\tau_p)\ketbra{u_p}{u_p},
\] 
we can define a quadratic transform on $\mathcal{F}_+$
\[
T_2=\exp\biggl(\frac12\sum_{p\neq0}\tau_p(a^\ast_pa^\ast_{-p}-a_pa_{-p})\biggr),
\] 
that satisfies 
\begin{equation}\label{eq:T2action}
	T^\ast_2a^\ast_pT_2=\cosh(\tau_p)a^\ast_p+\sinh(\tau_p)a_{-p},
\end{equation}
where 
\begin{align*}
	\tau_p&=\sinh^{-1}(\nu_p),\quad \nu_p=-\sgn(B_p)\sqrt{\frac12\Biggl(\frac{A_p}{\sqrt{A_p^2-B_p^2}}-1\Biggr)}, \\
	A_p&=p^2+N\hat{\varepsilon}_{\ell,N}(p),\quad B_p=N\hat{\varepsilon}_{\ell,N}(p),
\end{align*}
for $p\in\Lambda^\ast\setminus\{0\}$, since one can show that $\sum_p\abs{\nu_p}^2<\infty$, which is equivalent to $s_2$ being a Hilbert--Schmidt operator. 

We will need the following bounds on the number operator, the kinetic term $\sq(-\Delta)$, and the term $Q_4$. 
\begin{lem}\label{diagonal:lem:bounds}
Assume that $4RN^{-1}<\ell<1/2$, $MN^{-1}\leqslant\ell$, and $\ell$ is small enough. We have
\begin{align}
	T^{-t}_2(\numop+1)^jT^t_2&\leqslant C_j(\numop+1)^j, \label{diagonal:lem:bounds:numop} \\
	T^\ast_2\sq(-\Delta)T_2&\leqslant C(\sq(-\Delta)+\ell^{-1}), \label{diagonal:lem:bounds:kinetic} \\
	T^\ast_2Q_4T_2&\leqslant CQ_4+CN^{-1}\ell^{-2}+CN^{-1}(\numop+1)^2, \label{diagonal:lem:bounds:Q4} 
\end{align}
for all $j\geqslant1$ and $t\in[0,1]$. 
\end{lem}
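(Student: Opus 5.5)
The plan rests on three ingredients: explicit control of $\nu_p$, Lemma~\ref{quadratic:numop:bound}, and the explicit action \cref{eq:T2action}. The first task is to bound $N\hat{\varepsilon}_{\ell,N}(p)$. From \cref{defn:varepsilonellN}, $\varepsilon_{\ell,N}$ is supported in $\{\ell/2\leqslant\abs{x}\leqslant\ell\}$. Lemma~\ref{lem:scatteringsol} gives $\omega_N(x)\leqslant C/(N\abs{x})$ and $\abs{\gradient\omega_N(x)}\leqslant C/(N\abs{x}^2)$ there, because $\abs{Nx}\geqslant N\ell/2\geqslant 2R$. Hence
\[
N\abs{\varepsilon_{\ell,N}}\leqslant C\ell^{-3}\ind_{\{\ell/2\leqslant\abs{x}\leqslant\ell\}},
\]
so $\abs{N\hat{\varepsilon}_{\ell,N}(p)}\leqslant C$ uniformly in $p$. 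We also need decay: writing $\varepsilon_{\ell,N}=2\Delta(\omega_{\ell,N}-\omega_N)$ and integrating by parts gives $N\hat{\varepsilon}_{\ell,N}(p)=-2p^2N\widehat{(\omega_{\ell,N}-\omega_N)}(p)$. The same pointwise bounds then yield $\abs{N\hat{\varepsilon}_{\ell,N}(p)}\leqslant C(\ell\abs{p})^{-2}$ or better. Since $\varepsilon_{\ell,N}$ is smooth on the annulus, one more integration by parts against $\gradient\chi_\ell$ and $\Delta\chi_\ell$ should improve this further. With $A_p=p^2+B_p$ and $p^2\geqslant4\pi^2$, for $\ell$ small enough we get $A_p\geqslant p^2/2$, and $\abs{\nu_p}\leqslant C\abs{B_p}/p^2$ because $A_p^2-B_p^2=p^2(p^2+2B_p)$. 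It follows that $\sum_p\abs{\nu_p}^2\leqslant C$, and that $\sum_p p^2\nu_p^2$ is bounded by $C\sum_p\min(1,(\ell\abs{p})^{-4})p^{-2}\leqslant C\ell^{-1}$, which is the source of the $\ell^{-1}$ in \cref{diagonal:lem:bounds:kinetic}.

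\cref{diagonal:lem:bounds:numop} then follows directly from Lemma~\ref{quadratic:numop:bound} applied to $T_2^t$, which is the quadratic transform with parameters $t\tau_p$, together with $\norm{s_2}_{\textnormal{HS}}^2=\sum_p\nu_p^2\leqslant C$.

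For \cref{diagonal:lem:bounds:kinetic}, we expand using \cref{eq:T2action}:
\[
T_2^\ast a_p^\ast a_pT_2=\cosh^2(\tau_p)a_p^\ast a_p+\sinh^2(\tau_p)a_{-p}a_{-p}^\ast+\cosh(\tau_p)\sinh(\tau_p)(a_p^\ast a_{-p}^\ast+\hc).
\]
Cauchy--Schwarz bounds the off-diagonal term by $p^2(a_p^\ast a_p+a_{-p}^\ast a_{-p})+p^2\sinh^2(\tau_p)$. Summing over $p$ with weight $p^2$ gives $C\sq(-\Delta)+C\sum_pp^2\nu_p^2\leqslant C(\sq(-\Delta)+\ell^{-1})$.

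For \cref{diagonal:lem:bounds:Q4}, we write $Q_4=\frac12\int V_N(x-y)a_x^\ast a_y^\ast a_xa_y$ and use the position-space action $T_2^\ast a_xT_2=a(c_{2,x})+a^\ast(s_{2,x})$ with $c_2=1+(c_2-1)$. We expand the quartic expression and normal order it. The leading term $a_x^\ast a_y^\ast a_xa_y$ reproduces $Q_4$. Terms in which at least one factor is replaced by $s_2$ or $c_2-1$ are treated by Cauchy--Schwarz, bounding them by $\varepsilon Q_4$ plus terms of the form $\varepsilon^{-1}\int V_N(x-y)\,\abs{s_{2,x}\ \textnormal{stuff}}$. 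These are controlled via $\norm{V_N}_{L^1}\leqslant C/N$ and $\sup_x\norm{s_{2,x}}_{L^2}^2=\norm{s_2}_{\textnormal{HS}}^2$, giving contributions $CN^{-1}(\numop+1)^2$. The contractions produced by normal ordering yield constants $\int V_N(x-y)\abs{s_2(x,y)}^2\diff x\diff y$ and $\int V_N(x-y)s_2(x,y)\diff x\diff y$. The latter is bounded by $\norm{V_N}_{L^1}\norm{s_2}_{L^\infty}$, where $\norm{s_2}_{L^\infty}\leqslant\sum_p\abs{\nu_p}\leqslant C\sum_p\min(p^{-2},\ell^{-2}p^{-4})\leqslant C\ell^{-1}$. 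The former is bounded by $\norm{V_N}_{L^1}\norm{s_2}_{L^\infty}^2\leqslant CN^{-1}\ell^{-2}$, which matches the claimed $CN^{-1}\ell^{-2}$ term.

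The main obstacle is the decay estimate on $\hat{\varepsilon}_{\ell,N}$. Without radiality, we only have the pointwise bounds of Lemma~\ref{lem:scatteringsol} outside $B(0,2R)$ and no explicit form of $\omega$. I would handle this by working with the Fourier representation $N\hat{\varepsilon}_{\ell,N}(p)=-2p^2N\widehat{(\omega_{\ell,N}-\omega_N)}(p)$, exploiting that $\omega_N$ is harmonic on the support of $1-\chi_\ell$, and using higher derivative bounds on the harmonic function $\omega_N$ in $\abs{x}\geqslant\ell/2$. These follow from \cref{eq:gradientomega} by differentiating under the integral, which gives $\abs{\partial^\alpha\omega_N}\leqslant CN^{-1}\abs{x}^{-1-\abs{\alpha}}$.
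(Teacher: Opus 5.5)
Your proposal is correct and follows essentially the same route as the paper: the paper obtains \cref{diagonal:lem:bounds:numop} from \cref{quadratic:numop:bound} and defers \cref{diagonal:lem:bounds:kinetic,diagonal:lem:bounds:Q4} to the explicit computations in the cited works. You reconstruct those computations: the expansion of $T_2^\ast a_p^\ast a_p T_2$, and the expansion of the quartic term followed by Cauchy--Schwarz using only $\|V_N\|_{L^1}\leqslant C/N$. The non-radial input is the bound $|N\hat{\varepsilon}_{\ell,N}(p)|\leqslant C\min(1,(\ell|p|)^{-2})$, which comes from the uniform derivative estimates on $\omega$ outside $B(0,2R)$.

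One small fix is needed. To get $|\nu_p|\leqslant C|B_p|/p^2$ you need $p^2+2B_p\geqslant cp^2$ (for instance $B_p\geqslant -p^2/4$), and $A_p\geqslant p^2/2$ alone does not give this. The same small-$\ell$ argument does: $B_p\approx 8\pi\mathfrak{a}_R>0$ for $\ell|p|\ll 1$, and $|B_p|\leqslant C\ll p^2$ otherwise.
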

\begin{proof}
The first bound \cref{diagonal:lem:bounds:numop} follows from \cref{quadratic:numop:bound}. 

The proof of the second bound \cref{diagonal:lem:bounds:kinetic} is similar to that of \cite[Lemma~6.2, (6.7)]{Haberberger_2024}. 

The proof of the third bound \cref{diagonal:lem:bounds:Q4} is similar to that of \cite[Lemma~18, (92)]{Hainzl_2022}. 
\end{proof}

\subsection{Bounding the new error terms}

Recall $\mathcal{H}=\frac{N-1}{2}\widehat{V}(0)+\sq(-\Delta)+H_2+Q_2+Q_3+Q_4+\mathcal{E}^{(U)}$ from \cref{defn:mathcalH} and write $\mathcal{H}=\mathcal{H}_{R,N}+\mathcal{H}_\textnormal{err}$ with $\mathcal{H}_\textnormal{err}=H_{2,\textnormal{err}}+Q_{2,\textnormal{err}}+Q_{3,\textnormal{err}}+Q_{4,\textnormal{err}}$, where
\begin{align*}
H_{2,\textnormal{err}}=H_2-H_{2,R}&= N\int_{\Lambda^2}(V_N-V_{R,N})(x-y)a_x^\ast a_y\diff x\diff y, \\
Q_{2,\textnormal{err}}=Q_2-Q_{2,R}&=\frac{1}{2}\biggl(N-\numop-\frac12\biggr)\int_{\Lambda^2}(V_N-V_{R,N})(x-y)a_xa_y\diff x\diff y+\hc, \\
Q_{3,\textnormal{err}}=Q_3-Q_{3,R}&=\sqrt{(N-\numop)_+}\int_{\Lambda^2}(V_N-V_{R,N})(x-y)a^\ast_ya_xa_y\diff x\diff y+\hc, \\
Q_{4,\textnormal{err}}=Q_4-Q_{4,R}&=\frac12\int_{\Lambda^2}(V_N-V_{R,N})(x-y)a^\ast_x a^\ast_y a_xa_y\diff x\diff y. 
\end{align*}
The term $\mathcal{H}_{R,N}$ can be dealt with as in \cite{Nam_2023,Haberberger_2024} using the transforms $T_1,T_c,T_2$ defined in \cref{sec:q1,,sec:c,,sec:q2} and properties of the scattering solution $\omega$ for the compactly supported potential $V_R$ from \cref{lem:scatteringsol}. Assume that $4RN^{-1}<\ell<1/2$, $MN^{-1}\leqslant\ell$, and $\ell$ is small enough. 
After conjugating with $T_1,T_c,T_2$, we obtain
\begin{equation}\label{eq:diagonal:RNbound}
	T^\ast_2T^\ast_cT^\ast_1\mathcal{H}_{R,N}T_1T_cT_2=E_N+\sq(E_\textnormal{Bog})+T^\ast_2Q_{4,R}T_2+\mathcal{E}_{R,N}
\end{equation}
on $\mathcal{F}_+$, with 
\begin{align*}
	\sq(E_\textnormal{Bog})&=\sum_{p\in\Lambda^\ast\setminus\{0\}}\sqrt{p^4+16\pi\mathfrak{a}p^2}a^\ast_pa_p, \\
	E_N&=4\pi\mathfrak{a}(N-1)+e_\Lambda\mathfrak{a}^2
	+\frac12\sum_{p\in\Lambda^\ast\setminus\{0\}}\biggl(\sqrt{p^4+16\pi\mathfrak{a}p^2}-p^2-8\pi\mathfrak{a}+\frac{(8\pi\mathfrak{a})^2}{2p^2}\biggr), \\
	e_\Lambda&=4\Biggl(\int_\Lambda\frac{1}{p^2}\diff p+\sum_{z\in\Z^3\setminus\{0\}}\int_\Lambda\biggl(\frac{1}{(p+z)^2}-\frac{1}{z^2}\biggr)\diff p\Biggr),
\end{align*}
and on $\mathcal{F}_+^{\leqslant N}$
\begin{equation*}
\begin{split}
	\pm\mathcal{E}_{R,N}
	&\leqslant CM^{1/2}N^{-1/2}\bigl(\sq(-\Delta)+T^\ast_2Q_4T_2+\ell^{-1}(\numop+1)\bigr) \\ 
	&\quad+C\ell^2\sq(-\Delta)+C\ell^4T^\ast_2Q_4T_2+C\ell^{1/2}(\numop+1)+C\ell^{1/2}\frac{(\numop+1)^{3/2}}{N^{1/2}} \\
	&\quad+\varepsilon(T^\ast_2Q_4T_2+\numop+1)+\varepsilon^{-1}C\biggl(\ell(\numop+1)+\frac{(\numop+1)^2}{N}+C_j\frac{(\numop+1)^{j+1}}{M^j}\biggr)
\end{split}
\end{equation*}
for all $j\geqslant1$ and $0<\varepsilon\leqslant1$.

The following lemma deals with the error term $\mathcal{H}_\textnormal{err}$. 

\begin{lem}\label{lem:newerror}
Assume that $4RN^{-1}<\ell<1/2$, $MN^{-1}\leqslant\ell$, and $\ell$ is small enough. On $\mathcal{F}_+^{\leqslant N}$ we have
\begin{align}
	\pm T^\ast_2T^\ast_cT^\ast_1H_{2,\textnormal{err}}T_1T_cT_2&\leqslant CR^{-\gamma}(\numop+1), \label{lem:newerror:H1} \\
	\begin{split}
	\pm T^\ast_2T^\ast_cT^\ast_1Q_{2,\textnormal{err}}T_1T_cT_2&\leqslant CNR^{-\gamma}+C\ell^{1/2}(\numop+1)+\frac{1}{4}T^\ast_2Q_4T_2 \\
	&\quad+\varepsilon(T^\ast_2Q_4T_2+\numop+1+MN^{-1}\sq(-\Delta)) \\ 
	&\quad+\varepsilon^{-1}C\biggl(\ell R^{-\gamma}(\numop+1)+\frac{(\numop+1)^2}{N}\biggr),
	\end{split}  \label{lem:newerror:Q2} \\
	\begin{split}
	\pm T^\ast_2T^\ast_cT^\ast_1Q_{3,\textnormal{err}}T_1T_cT_2&\leqslant C\ell^{1/2}\frac{(\numop+1)^{3/2}}{N^{1/2}} \\
	&\quad+\varepsilon (T^\ast_2Q_4T_2+\numop+1+MN^{-1}\sq(-\Delta)) \\
	&\quad+\varepsilon^{-1}C\biggl((R^{-\gamma}+\ell)(\numop+1)+\frac{(\numop+1)^2}{N}\biggr),
	\end{split}  \label{lem:newerror:Q3} \\
	T^\ast_2T^\ast_cT^\ast_1Q_{4,\textnormal{err}}T_1T_cT_2&=T^\ast_2Q_{4,\textnormal{err}}T_2+\mathcal{E}^{(Q_4)}, \label{lem:newerror:Q4}
\end{align}
with 
\begin{align*}
	\pm\mathcal{E}^{(Q_4)}&\leqslant CNR^{-\gamma}+C\ell^{1/2}(\numop+1)+CMN^{-1}(\sq(-\Delta)+T^\ast_2Q_4T_2+\numop+\ell^{-1}) \\
	&\quad+\varepsilon(T^\ast_2Q_4T_2+\numop+1)+\varepsilon^{-1}CR^{-\gamma}(\numop+1)
\end{align*}
for all $0<\varepsilon\leqslant1$.
\end{lem}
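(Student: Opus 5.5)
We bound the four error terms in turn. The only input on the potential is $\norm{V_N-V_{R,N}}_{L^1(\Lambda)}=N^{-1}\norm{\widetilde{V}\ind_{B(0,R)^\mathsf{c}}}_{L^1}\leqslant CN^{-1}R^{-\gamma}$, which follows from \cref{eq:Vdecay} and the scaling $\widetilde{V}_N=N^2\widetilde{V}(N\cdot)$. We combine it with the particle number bounds \cref{quadratic:numop:bound}, \cref{cubic:numop:bound} and \cref{diagonal:lem:bounds:numop}, and with the kinetic and $Q_4$ bounds \cref{cubic:lem:DeltaQ4} and \cref{diagonal:lem:bounds}.

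\textbf{The terms $H_{2,\textnormal{err}}$ and $Q_{4,\textnormal{err}}$.} For $H_{2,\textnormal{err}}$ we use Young's inequality: the operator with kernel $N(V_N-V_{R,N})(x-y)$ has norm at most $N\norm{V_N-V_{R,N}}_{L^1}\leqslant CR^{-\gamma}$. Hence $\pm H_{2,\textnormal{err}}\leqslant CR^{-\gamma}\numop$. The transforms $T_1,T_c,T_2$ preserve $\numop+1$ up to constants, since $\norm{s_1}_\textnormal{HS}$ and $\norm{s_2}_\textnormal{HS}$ are bounded. This gives \cref{lem:newerror:H1}.

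For \cref{lem:newerror:Q4}, note that $Q_{4,\textnormal{err}}\geqslant0$ is a piece of $Q_4$. We expand $T_1^\ast Q_{4,\textnormal{err}}T_1$ using \cref{eq:T1action}, exactly as is done for $Q_4$ in \cite{Nam_2023}. The resulting constant term is
\[
\tfrac12\int(V_N-V_{R,N})(x-y)\abs{s_1(x,y)}^2\leqslant CN^2\norm{V_N-V_{R,N}}_{L^1}\leqslant CNR^{-\gamma},
\]
using $\abs{N\omega_{\ell,N}}\leqslant N$. The quadratic pairing terms are bounded by Cauchy--Schwarz, giving $\varepsilon Q_{4,\textnormal{err}}+\varepsilon^{-1}CR^{-\gamma}(\numop+1)$. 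The commutator with $T_c$ is handled by Duhamel's formula, reproducing the proof of \cref{cubic:lem:DeltaQ4:Q4bound}. Its output, $CMN^{-1}(\sq(-\Delta)+Q_4+\dots)$, is then conjugated by $T_2$ using \cref{diagonal:lem:bounds}. Finally, $Q_{4,\textnormal{err}}\leqslant Q_4$ lets us absorb everything into $T_2^\ast Q_4T_2$.

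\textbf{The term $Q_{3,\textnormal{err}}$.} After conjugation by $T_1$, the leading part of $Q_{3,\textnormal{err}}$ stays cubic. We bound it by Cauchy--Schwarz:
\[
\pm Q_{3,\textnormal{err}}\leqslant\varepsilon Q_{4,\textnormal{err}}+\varepsilon^{-1}N\norm{V_N-V_{R,N}}_{L^1}(\numop+1)\leqslant\varepsilon Q_4+\varepsilon^{-1}CR^{-\gamma}(\numop+1).
\]
The additional terms produced by $T_1$ carry a factor $s_1$. We estimate them with $\norm{s_{1,x}}_{L^2}\leqslant C\ell^{1/2}$, which gives the contributions $C\ell^{1/2}(\numop+1)^{3/2}N^{-1/2}$ and $\varepsilon^{-1}C\ell(\numop+1)$. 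Conjugation by $T_c$ and then $T_2$ only changes these bounds by constants, using \cref{cubic:numop:bound}, \cref{cubic:lem:DeltaQ4} and \cref{diagonal:lem:bounds}. The resulting $Q_4$ terms appear as $Q_4+MN^{-1}\sq(-\Delta)$.

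\textbf{The term $Q_{2,\textnormal{err}}$, the main obstacle.} Here $N\int(V_N-V_{R,N})a_xa_y$ is not small on its own, because its operator norm involves $\norm{N(V_N-V_{R,N})}_{L^2}$, which is uncontrolled for $L^1$ potentials. The plan follows the treatment of $Q_2$ in \cite{Nam_2023}. Conjugating with $T_1$ produces the constant
\[
N\int(V_N-V_{R,N})s_1\approx -N^2\int(V_N-V_{R,N})\omega_{\ell,N},
\]
which is bounded by $CNR^{-\gamma}$. It also produces a commutator with the cubic piece of $T_c$, through $[Q_{2,\textnormal{err}},\theta_MK_c^\ast-\hc]$. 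These combine so that the dangerous pairing term $N\int(V_N-V_{R,N})(x-y)a^\ast_x a^\ast_y$ is cancelled against, or written as, a completed square with the quartic term. Writing $a^\ast_xa^\ast_y(a_xa_y+N^{-1/2}\dots)$ and using positivity of $V_N-V_{R,N}$, the square consumes $\frac14 Q_4$ plus an error of order $N\int(V_N-V_{R,N})=\mathcal{O}(NR^{-\gamma})$. The leftovers are controlled as in the $Q_3$ step: the $s_1$ factors give $\ell^{1/2}(\numop+1)$ and $\varepsilon^{-1}\ell R^{-\gamma}(\numop+1)$, and the factor $(N-\numop)$ versus $N$ gives $\varepsilon^{-1}(\numop+1)^2/N$. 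We expect getting exactly the $\frac14 T_2^\ast Q_4T_2$ and $CNR^{-\gamma}$ structure in this step to be the delicate part. The last conjugation by $T_2$ again only uses \cref{diagonal:lem:bounds}.
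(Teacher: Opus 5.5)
Your proposal follows the paper's proof essentially term by term: the $L^1$ bound $N\|V_N-V_{R,N}\|_{L^1}\leqslant CR^{-\gamma}$ together with the number bounds for $H_{2,\textnormal{err}}$, the $T_1$-expansions of $Q_{3,\textnormal{err}}$ and $Q_{4,\textnormal{err}}$ handled as for the untruncated terms, and, for $Q_{2,\textnormal{err}}$, the $T_1$-constant bounded via $\|s_1\|_{L^\infty}\leqslant CN$ followed by a Duhamel expansion in $T_c$. The one point to straighten out is that the $Q_{2,\textnormal{err}}$ step is not delicate, involves no cancellation, and does not use the $T_c$ commutator to complete the square: the paper bounds the leftover pairing term directly, $\pm\frac12\int N(V_N-V_{R,N})(a^\ast_xa^\ast_y+a_xa_y)\leqslant\frac18Q_4+CN^2\|V_N-V_{R,N}\|_{L^1}\leqslant\frac18Q_4+CNR^{-\gamma}$ using $0\leqslant V_N-V_{R,N}\leqslant V_N$ (the constant is $N^2\int$, not $N\int$), and it estimates the commutator with $\theta_MK_c^\ast-K_c\theta_M$ separately by $CM^{1/2}N^{1/2}R^{-\gamma}+\frac{\varepsilon}{8}Q_4+\varepsilon^{-1}C\ell R^{-\gamma}(\mathcal{N}+1)$.
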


\begin{proof}
The first bound \cref{lem:newerror:H1} follows from the simple estimate
\[
	\pm H_{2,\textnormal{err}}\leqslant C\norm{V-V_R}_{L^1}\numop\leqslant CR^{-\gamma}\numop
\]
and \cref{quadratic:numop:bound,cubic:numop:bound}. 

As in the proof of \cite[Lemma~10]{Nam_2023}, we obtain
\begin{equation}\label{expansionT1Q2witherror}
\begin{split}
T^\ast_1Q_{2,\textnormal{err}}T_1&=\frac{1}{2}\int_{\Lambda^2}N(V_N-V_{R,N})(x-y)(a^\ast_xa^\ast_y+a_xa_y) \\
	&\quad+(N-T^\ast_1\numop T_1-1/2)\int_{\Lambda^2}(V_N-V_{R,N})(x-y)s_1(x,y)+\mathcal{E}_1,
\end{split}
\end{equation}
with
\[
	\pm\mathcal{E}_1\leqslant C\ell^{1/2}(\numop+1)+C\ell^{1/2}N^{-1}(\numop+1)^2+\varepsilon Q_4+\varepsilon^{-1}CN^{-1}(\numop+1)^2,\quad\forall\varepsilon>0. 
\]
To bound the second term on the right-hand side of \cref{expansionT1Q2witherror}, we use $\norm{s_1}_{L^\infty}\leqslant CN$, the bound $\pm(N-T^\ast_1\numop T_1-1/2)\leqslant CN$ on $\mathcal{F}_+^{\leqslant N}$, and $\norm{V-V_R}_{L^1}\leqslant CR^{-\gamma}$. Hence, we have
\[
	T^\ast_1Q_{2,\textnormal{err}}T_1=\frac{1}{2}\int_{\Lambda^2}N(V_N-V_{R,N})(x-y)(a^\ast_xa^\ast_y+a_xa_y)
	+\mathcal{E}_2\eqqcolon Q_{2,\textnormal{err}}^{(T_1)}+\mathcal{E}_2,
\]
with 
\[
	\pm\mathcal{E}_2\leqslant CNR^{-\gamma}+C\ell^{1/2}(\numop+1)+C\ell^{1/2}N^{-1}(\numop+1)^2+\varepsilon Q_4+\varepsilon^{-1}CN^{-1}(\numop+1)^2,\quad\forall\varepsilon>0. 
\]
We can bound $T^\ast_2T^\ast_c\mathcal{E}_2T_cT_2$ using \cref{quadratic:numop:bound,,cubic:numop:bound,,cubic:lem:DeltaQ4,,diagonal:lem:bounds}.

By the Duhamel formula, we have
\[
	T^\ast_cQ_{2,\textnormal{err}}^{(T_1)}T_c=Q_{2,\textnormal{err}}^{(T_1)}
	+\int_0^1T_c^{-t}[Q_{2,\textnormal{err}}^{(T_1)},\theta_MK^\ast_c-K_c\theta_M]T_c^t\diff t. 
\]
Using $V_N-V_{R,N}\leqslant V_N$, we obtain
\[
	\pm Q_{2,\textnormal{err}}^{(T_1)}
	\leqslant \frac18Q_4+C\norm*{N^2(V_N-V_{R,N})}_{L^1}\leqslant \frac18Q_4+CNR^{-\gamma}. 
\]
Similarly to the proofs of \cite[Lemma~5.8]{Haberberger_2024} and \cite[Lemma~25]{Nam_2023}, we can bound 
\begin{equation*}
	\pm[Q_{2,\textnormal{err}}^{(T_1)},\theta_MK^\ast_c-K_c\theta_M]
	\leqslant CM^{1/2}N^{1/2}R^{-\gamma}+\frac{\varepsilon}{8}Q_4+\varepsilon^{-1}C\ell R^{-\gamma}(\numop+1)
\end{equation*}
for all $0<\varepsilon\leqslant1$. 
Now we can use \cref{cubic:numop:bound,,cubic:lem:DeltaQ4,,diagonal:lem:bounds} to conclude the proof of \cref{lem:newerror:Q2}. 

The bound \cref{lem:newerror:Q3} can be obtained similarly using the analysis in \cite{Nam_2023}. This bound is simpler because it only requires computing $T^\ast_1Q_{3,\textnormal{err}}T_1$ and we skip the details. 

For the proof of \cref{lem:newerror:Q4} we can proceed as for $Q_{4,R}$, which is similar to the analysis in \cite{Nam_2023,Haberberger_2024}. The quadratic term appearing after computing $T^\ast_1Q_{4,\textnormal{err}}T_1$ can be dealt with exactly as $Q_{2,\textnormal{err}}^{(T_1)}$. We skip the details of the proof.  
\end{proof}

Combining \cref{eq:diagonal:RNbound} with \cref{lem:newerror}, we obtain 
\begin{equation}\label{eq:diagonal:Hbound}
	T^\ast_2T^\ast_cT^\ast_1\mathcal{H}T_1T_cT_2=E_N+\sq(E_\textnormal{Bog})+T^\ast_2Q_{4}T_2+\mathcal{E}^{(T_2)},
\end{equation}
where $\mathcal{E}^{(T_2)}$ satisfies on $\mathcal{F}_+^{\leqslant N}$
\begin{equation}\label{eq:diagonal:lem:main:errorestimate}
\begin{split}
	\pm\mathcal{E}^{(T_2)}
	&\leqslant CNR^{-\gamma}+CM^{1/2}N^{-1/2}\bigl(\sq(-\Delta)+T^\ast_2Q_4T_2+\ell^{-1}(\numop+1)\bigr) \\
	&\quad+C\ell^2\sq(-\Delta)+C\ell^4T^\ast_2Q_4T_2+C\ell^{1/2}(\numop+1)+C\ell^{1/2}\frac{(\numop+1)^{3/2}}{N^{1/2}} \\
	&\quad+\frac14T^\ast_2Q_4T_2+\varepsilon(T^\ast_2Q_4T_2+\numop+1) \\
	&\quad+\varepsilon^{-1}C\biggl((R^{-\gamma}+\ell)(\numop+1)+\frac{(\numop+1)^2}{N}+C_j\frac{(\numop+1)^{j+1}}{M^j}\biggr)
\end{split}
\end{equation}
for all $j\geqslant1$ and $0<\varepsilon\leqslant1$.

%%%%%%%%%%%%%%%%%%%%%%%%%%%%%%%%%%%%%%%%%%%%%%%%%%%%%%%%
\section{Optimal BEC}\label{ch:bec}

\subsection{Ground state energy: upper bound}\label{sec:gseup}
In this section, we want to finish the proof of \cref{eq:GSE}. We proved the lower bound \cref{eq:gselowerbound} in \cref{sec:gse} and it remains to prove the corresponding upper bound
\begin{equation}\label{eq:gseupperbound}
	\limsup_{N\to\infty}\frac{\lambda_1(H_N)}{N}\leqslant4\pi\mathfrak{a}. 
\end{equation}
Recall that $UH_NU^\ast=\ind_+^{\leqslant N}\mathcal{H}\ind_+^{\leqslant N}$, and by \cref{eq:diagonal:Hbound}, we have
\[
	T^\ast_2T^\ast_cT^\ast_1\mathcal{H}T_1T_cT_2=E_N+\sq(E_\textnormal{Bog})+T^\ast_2Q_4T_2+\mathcal{E}^{(T_2)}
\]
on $\mathcal{F}_+$. Let $\Omega=(1,0,0,\dotsc)\in\mathcal{F}_+$ be the vacuum. Using \cref{quadratic:numop:bound,cubic:numop:bound}, we obtain for all $j\geqslant1$
\begin{align}
	\langle T_1T_cT_2\Omega,\numop^j T_1T_cT_2\Omega\rangle
	\leqslant C \langle\Omega,(\numop+1)^j\Omega\rangle
	\leqslant C, 
	\label{eq:est_Nj_omega}
\end{align}
where, of course, the constants $C$ depend on $j$.
Note that $T_1T_cT_2$ is a unitary operator on $\mathcal{F}_+$, so we need to consider $\ind_+^{\leqslant N}T_1T_cT_2\Omega \in \mathcal{F}_+^{\leqslant N}$, which satisfies
\begin{equation*}
\| \ind_+^{\leqslant N}T_1T_cT_2\Omega \|^2 = 1 + \mathcal O(N^{-j})
\end{equation*}
for all $j\geqslant1$. This can be deduced easily from \cref{eq:est_Nj_omega}. Now we estimate the off-diagonal part of $\mathcal H$. Using that $\mathcal H \geqslant0$, we find on $\mathcal F^{\leqslant N}_+$
\begin{align}
\mathds{1}^{\leqslant N}_+ \mathcal H \mathds{1}^{\leqslant N}_+ &- \mathcal H \nonumber  \\
	&\leqslant - \mathds{1}^{\leqslant N}_+ \mathcal H \mathds{1}^{> N}_+ + \hc \nonumber  \\
	&= - \frac{1}{2} \mathds{1}^{\leqslant N}_+ \left( \sqrt{((N-\mathcal N)(N-1-\mathcal N))_+}\int_{\Lambda^2}V_N(x-y)a_xa_y \right)\mathds{1}^{> N}_+ + \hc  \nonumber \\
	&\quad - \mathds{1}^{\leqslant N}_+ \sqrt{(N-\numop)_+}\int_{\Lambda^2}V_N(x-y)a^\ast_ya_xa_y\diff x\diff y) \mathds{1}^{>N}_+ + \hc  \nonumber \\
	&\leqslant \eta Q_4 + \eta^{-1} C N \widehat{V}(0) \frac{\mathcal N^{j}}{N^{j}} \label{eq:off_diag_HN},
\end{align}
where we denoted $\mathds{1}^{\leqslant N}_+ = \mathds{1}_{\mathcal F^{\leqslant N}_+} = \id - \mathds{1}^{> N}_+$ and used the Chebyshev inequality $\mathds{1}^{>N}_+ \leqslant \mathcal N^j/N^j$. 

Let us gather some bounds on the expectation of $Q_4$ and $\mathcal{E}^{(T_2)}$ before concluding. Using \cref{cubic:lem:DeltaQ4,diagonal:lem:bounds} and \cite[Lemma 3.10]{Reichmann_2025}, we find the simple bound
\begin{align*}
\langle T_1T_cT_2\Omega, Q_4 T_1T_cT_2\Omega\rangle \leqslant C N
\end{align*}
and from \cref{diagonal:lem:bounds} only, we obtain 
\[
	\langle T_2\Omega, Q_4T_2\Omega\rangle
	\leqslant CN^{-1}\ell^{-2}+CN^{-1}.
\]
From \cref{eq:diagonal:lem:main:errorestimate}, we obtain
\begin{align*}
	\langle\Omega,\mathcal{E}^{(T_2)}\Omega\rangle
	&\leqslant CNR^{-\gamma}
		+CM^{1/2}N^{-1/2}(\langle\Omega,T^\ast_2Q_4T_2\Omega\rangle+\ell^{-1})
		+\ell^4\langle\Omega,T^\ast_2Q_4T_2\Omega\rangle
		+C\ell^{1/2} \\
		&\quad+C\ell^{1/2}N^{-1/2}
		+C\langle\Omega,T^\ast_2Q_4T_2\Omega\rangle
		+\varepsilon(\langle\Omega,T^\ast_2Q_4T_2\Omega\rangle+1) \\
		&\quad+\varepsilon^{-1}C(R^{-\gamma}+\ell+N^{-1}+M^{-1}). 
\end{align*}
By choosing $R=N\ell/5$, $M=\ell N$, $\varepsilon=1$, and $\ell=1/4$, we obtain the bound
\[
	\langle T_2\Omega, Q_4T_2\Omega\rangle+\langle\Omega,\mathcal{E}^{(T_2)}\Omega\rangle
	\leqslant C. 
\]

Finally, combining all the above, after optimizing over $\eta > 0$ in \cref{eq:off_diag_HN}, we find that
\begin{align*}
\lambda_1(H_N) 
	&\leqslant \| \ind_+^{\leqslant N}T_1T_cT_2\Omega \|^{-2} \left(\langle \mathcal H\rangle_{T_1T_cT_2\Omega} + C \langle Q_4 \rangle_{T_1T_cT_2\Omega}^{1/2}\,  N^{-(j-1)/2}\right)  \\
	&\leqslant\left(E_N + \langle Q_4\rangle_{T_2\Omega}+\langle\mathcal{E}^{(T_2)}\rangle_\Omega + C \langle Q_4 \rangle_{T_1T_cT_2\Omega}^{1/2}\,  N^{-(j-1)/2}\right) (1+ \mathcal O(N^{-j})) \\
	&\leqslant E_N + C
\end{align*}
for $j=3$. Using the definition of $E_N$, we conclude
\[
	\lambda_1(H_N)\leqslant4\pi\mathfrak{a}N+C,
\]
which implies \cref{eq:gseupperbound}. 

\subsection{Bose--Einstein condensation}\label{sec:bec}
We start by proving \cref{eq:BEC} for every normalized sequence of approximate ground states $\Psi_N$ of $H_N$, i.e., $\langle\Psi_N,H_N\Psi_N\rangle=\lambda_1(H_N)+o(N)=4\pi\mathfrak{a}N+o(N)$. It is easy to check that this is equivalent to proving the following lemma. 
\begin{lem}\label{bec:lem:prelim}
For every normalized sequence of approximate ground states $\Psi_N$ of $H_N$, we have
\begin{equation}\label{eq:BECrdm}
	\lim_{N\to\infty}\frac{\langle u_0,\gamma_{\Psi_N}u_0\rangle}{N}=1,
\end{equation}
where $u_0\equiv1$ and $\gamma_{\Psi_N}$ is the reduced density matrix
\[
	\gamma_{\Psi_N}(x;y)\coloneqq N\int_{\Lambda^{N-1}}\Psi_N(x,z_2,\dotsc,z_N)\overline{\Psi_N(y,z_2,\dotsc,z_N)}\diff z_2\dotso\diff z_N. 
\]
\end{lem}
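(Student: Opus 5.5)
We will prove \cref{eq:BECrdm} with the classical perturbation argument of Lieb and Seiringer \cite{Lieb_2002}: we perturb the Hamiltonian by a one-body operator that penalizes the condensate, and we control how the leading-order energy depends on the perturbation. For $\delta\geqslant0$, set
\[
	H_N^{(\delta)}\coloneqq H_N-\delta\sum_{i=1}^N \bigl(\ind-\ketbra{u_0}{u_0}\bigr)_i,
\]
so that $\langle\Psi_N,H_N^{(\delta)}\Psi_N\rangle=\langle\Psi_N,H_N\Psi_N\rangle-\delta\bigl(N-\langle u_0,\gamma_{\Psi_N}u_0\rangle\bigr)$. For approximate ground states, $\langle\Psi_N,H_N\Psi_N\rangle=4\pi\mathfrak{a}N+o(N)$ by \cref{eq:GSE}, which follows from \cref{eq:gselowerbound,eq:gseupperbound}. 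Therefore
\[
	\frac{N-\langle u_0,\gamma_{\Psi_N}u_0\rangle}{N}\leqslant\frac{4\pi\mathfrak{a}-\lambda_1(H_N^{(\delta)})/N+o(1)}{\delta}.
\]
The whole proof then reduces to a lower bound $\liminf_N\lambda_1(H_N^{(\delta)})/N\geqslant4\pi\mathfrak{a}-o_{\delta\to0}(\delta)$. Equivalently, $e(\delta)\coloneqq\liminf_N\lambda_1(H_N^{(\delta)})/N$ should be differentiable at $\delta=0$ with derivative $0$. Letting first $N\to\infty$ and then $\delta\to0$ gives \cref{eq:BECrdm}.

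To obtain this lower bound we repeat the proof of \cref{prop:gselower}, with the perturbation carried along. First we reduce to compact support via $V_N\geqslant V_{R,N}$ and $\mathfrak{a}_R\to\mathfrak{a}$ (\cref{lem:scatteringsol}); the reduction works because the perturbation is independent of $V$. Next we rescale to the box $\Lambda_N$, where the perturbation becomes $-\delta N^{-2}\sum_i(\ind-P_0)_i$ in the rescaled units. As in \cite{Lieb_2002}, we split the kinetic energy $-\Delta=-\varepsilon\Delta-(1-\varepsilon)\Delta$. Since $-\Delta\geqslant(2\pi)^2(\ind-P_0)$ on the unit torus, the part $-\varepsilon\Delta$ absorbs the perturbation once $\delta\leqslant 4\pi^2\varepsilon$, leaving $(1-\varepsilon)$ times the unperturbed problem minus lower-order terms. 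This naive route, however, only gives $e(\delta)\geqslant(1-\varepsilon)4\pi\mathfrak{a}$ with $\varepsilon\sim\delta$, which loses order $\delta$ and is useless.

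The correct route, following \cite{Lieb_2002,Lieb_2006}, is to localize into boxes of size $\ell\ll$ Gross--Pitaevskii length and run the Dyson lemma argument, here \cref{lem:Dyson,lem:manybodyDyson}, with only the high-momentum part of the kinetic energy. The low-momentum part (momenta below a cutoff $K\ll\ell^{-1}$) is kept to control the perturbation. The key input is a generalized Poincaré inequality on the small boxes, combined with the fact that the many-body Dyson lemma consumes only the kinetic energy restricted to momenta above the cutoff. This yields $\lambda_1(H_N^{(\delta)})\geqslant4\pi\mathfrak{a}N(1-C\delta^{1+\alpha}-o(1))$ for some $\alpha>0$.

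The main obstacle is exactly this step: adapting the Lieb--Seiringer decomposition of the kinetic energy to non-radial potentials. Radiality entered \cite{Lieb_2002} only through the Dyson lemma. We will therefore substitute \cref{lem:Dyson} and verify that its error $CR_0/R_1$, together with the localization errors of \cref{lem:gseHnllower}, remains $o(1)$ uniformly in $\delta$.

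An alternative, which we would try first because it reuses the machinery already built, goes through the renormalized Hamiltonian. Conjugating by $U$, the perturbation becomes $-\delta\numop$. From \cref{eq:diagonal:Hbound}, the bound $\sq(E_\textnormal{Bog})\geqslant c\,\numop$, and the error bound \cref{eq:diagonal:lem:main:errorestimate}, together with \cref{quadratic:numop:bound,cubic:numop:bound}, one would get $UH_NU^\ast\geqslant E_N+c\,T\numop T^\ast-C$ on $\mathcal{F}_+^{\leqslant N}$, where $T=T_1T_cT_2$.

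That route, however, requires a priori control of $\numop^2/N$ and of $MN^{-1}\sq(-\Delta)$, i.e.\ condensation itself. So we would use it only afterwards, in the optimal-rate statement \cref{eq:BECrdm_theo}, and take the Lieb--Seiringer perturbation argument as the proof of \cref{bec:lem:prelim}.
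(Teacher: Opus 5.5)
Your reduction is correct. Since $\langle\Psi_N,H_N\Psi_N\rangle\geqslant\lambda_1(H_N^{(\delta)})+\delta\bigl(N-\langle u_0,\gamma_{\Psi_N}u_0\rangle\bigr)$ and $\langle\Psi_N,H_N\Psi_N\rangle=4\pi\mathfrak{a}N+o(N)$, it suffices to show $\liminf_N\lambda_1(H_N^{(\delta)})/N\geqslant4\pi\mathfrak{a}-o(\delta)$; this is not circular, because the upper bound of \cref{sec:gseup} uses no condensation. But that lower bound is the whole content of the lemma, and you only assert it. The mechanism you sketch does not hold together, for two reasons. First, the Dyson lemmas available here, \cref{lem:Dyson,lem:manybodyDyson}, localize the kinetic energy in \emph{position} space, in a ball of radius $R_2$ around another particle, not in momentum space. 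A Dyson lemma that consumes only the kinetic energy above a momentum cutoff is a different statement. You neither prove it for non-radial potentials nor reconcile it with the box localization: the form $\|\chi(p)\nabla\psi\|^2$ of a non-local multiplier does not split into Neumann forms on boxes. Second, $-\delta\sum_i(1-P_0)_i$ is a global operator on the unit torus. A Poincaré inequality on boxes of side $\ell\ll1$ controls deviations from box averages, not from the global mean. Moreover, after Neumann localization into such boxes there are no modes with $0<|p|\lesssim K\ll\ell^{-1}$ left to pay for the perturbation. So the claimed bound $\lambda_1(H_N^{(\delta)})\geqslant4\pi\mathfrak{a}N(1-C\delta^{1+\alpha}-o(1))$ is unsupported.

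The missing idea is that the split must be made in position space. One separates the set where some other particle lies within distance $N^{-7/17}$ from its complement $\Omega_{\mathbf z}$. Because \cref{lem:Dyson} only uses the kinetic energy inside $B(0,R_2)$, the full leading energy $4\pi\mathfrak{a}N$ is already produced by the kinetic energy near other particles together with the interaction (after truncating to $\widetilde V_{R_0}$ and using $\mathfrak{a}_{R_0}\to\mathfrak{a}$). This is \cref{bec:lem:loc}: $\int_{\Lambda^{N-1}}\int_{\Omega_{\mathbf z}}|\nabla_x\Psi_N|^2\to0$. Next, apply the \emph{global} inequality \cref{bec:lem:pc} on $\Lambda$ to $\Psi_N(\cdot,\mathbf z)$ minus its mean, using $|\Lambda\setminus\Omega_{\mathbf z}|\leqslant CN^{-4/17}$ and total kinetic energy $O(N)$. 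This gives $N-\langle u_0,\gamma_{\Psi_N}u_0\rangle=o(N)$ directly, which is the route the paper takes following \cite{Lieb_2002}. With these two ingredients your perturbation argument also closes: the kinetic energy in $\Omega_{\mathbf z}$ pays for $\delta(1-P_0)$, giving $\liminf_N\lambda_1(H_N^{(\delta)})/N\geqslant4\pi\mathfrak{a}$ for small $\delta$. At that point, however, the perturbation is no longer needed.
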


To prove this result, we follow the strategy in \cite{Lieb_2002} and \cite[Section~5.1]{Lieb_2005}. The main ingredients of the proof are the generalized Poincaré inequality from \cite{Lieb_2002} and a localization of the energy, meaning that the kinetic energy is concentrated in a small subset of the position space, where at least two particles are close to each other.  

Let us recall the generalized Poincaré inequality for $L^2(\Lambda)$, the proof of which can be found in \cite{Lieb_2002}. 
\begin{lem}[generalized Poincaré inequality]\label{bec:lem:pc}
For all subsets $\Omega\subset\Lambda$ and all $f\in H^1(\Lambda)$ with $\int_\Lambda f=0$, we have the inequality
\[
	\int_\Lambda \abs{f(x)}^2\diff x\leqslant C\biggl(\int_\Omega\abs{\gradient f(x)}^2\diff x+\abs*{\Lambda\setminus\Omega}^{2/3}\int_\Lambda\abs{\gradient f(x)}^2\diff x\biggr),
\]
where $C>0$ is some constant. 
\end{lem}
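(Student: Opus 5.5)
Since the statement only concerns the unit torus $\Lambda$ and involves no interaction, I would prove it by a cube decomposition combined with the ordinary Poincaré inequality on cubes. We may assume $\abs{\Lambda\setminus\Omega}$ is small, say $\abs{\Lambda\setminus\Omega}\leqslant c_0$ for a small universal $c_0$. Otherwise $\abs{\Lambda\setminus\Omega}^{2/3}\geqslant c_0^{2/3}$, and the claim follows from the standard Poincaré inequality on the torus, $\int_\Lambda\abs{f}^2\leqslant(2\pi)^{-2}\int_\Lambda\abs{\gradient f}^2$, which holds because $\int_\Lambda f=0$.

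\textbf{Decomposition.} In the main case, choose $K\in\N$ with $K^{-3}\approx\abs{\Lambda\setminus\Omega}^{1/2}$, i.e.\ a side length $\delta=1/K$ with $\delta\approx\abs{\Lambda\setminus\Omega}^{1/6}$. Split $\Lambda$ into $K^3$ cubes $B_k$ of side $\delta$. Call a cube \emph{good} if $\abs{B_k\setminus\Omega}\leqslant\frac12\abs{B_k}$ and \emph{bad} otherwise. By Chebyshev, the number of bad cubes is at most $2\abs{\Lambda\setminus\Omega}/\delta^3$. This is small compared to $K^3$ but need not vanish, so I would instead use a weighted local-average argument.

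\textbf{Local and global pieces.} Write $f=(f-\bar f)+\bar f$, where $\bar f$ is the function that is constant on each cube, equal to the cube average $f_k$. By the Poincaré inequality on each cube,
\[
\int_\Lambda\abs{f-\bar f}^2\leqslant C\delta^2\int_\Lambda\abs{\gradient f}^2.
\]
Since $\delta^2\approx\abs{\Lambda\setminus\Omega}^{1/3}$, this is not yet of the claimed size. To repair it I would use a Poincaré--Sobolev bound instead: on each cube, $\|f-f_k\|_{L^6(B_k)}\leqslant C\|\gradient f\|_{L^2(B_k)}$ with a scale-invariant constant. Combined with Hölder on $B_k\setminus\Omega$, this gives
\[
\int_{B_k\setminus\Omega}\abs{f-f_k}^2\leqslant C\abs{B_k\setminus\Omega}^{2/3}\int_{B_k}\abs{\gradient f}^2 .
\]

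This suggests a simpler route that avoids cubes altogether. Write $\int_\Lambda\abs{f}^2\leqslant C\int_\Lambda\abs{\gradient f}^2$ and split the gradient integral over $\Omega$ and $\Lambda\setminus\Omega$. The gradient on $\Lambda\setminus\Omega$ cannot be controlled directly, however, so I would instead split $f$ itself.

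\textbf{A Sobolev-based argument.} Let $g=f$ and set $A=\Lambda\setminus\Omega$. By Hölder and the Sobolev inequality on the torus for mean-zero functions,
\[
\int_A\abs{f}^2\leqslant\abs{A}^{2/3}\|f\|_{L^6}^2\leqslant C\abs{A}^{2/3}\int_\Lambda\abs{\gradient f}^2 .
\]
It remains to bound $\int_\Omega\abs{f}^2$ by $\int_\Omega\abs{\gradient f}^2$ plus the same error. That step is a Poincaré inequality on $\Omega$, which fails for general $\Omega$ (for instance, thin walls can disconnect it). This is the main obstacle, and it is exactly what the cube decomposition addresses.

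\textbf{Fixing the obstacle.} The plan is therefore to apply the above with an extension. Harmonically extend $f|_\Omega$ into $A$, or more simply follow Lieb--Seiringer: pick $h$ minimizing $\int_\Lambda\abs{\gradient h}^2$ subject to $h=f$ on $\Omega$. Then
\[
\int_\Lambda\abs{\gradient h}^2\leqslant\int_\Omega\abs{\gradient f}^2+\text{(control via cubes)},
\]
and I would run the Sobolev--Hölder argument on both $h$ and $f-h$. I expect the extension and gluing estimate to be the hardest part. My fallback is to cite \cite{Lieb_2002} directly, since the statement is quoted from there; the proof is contained in that reference, so invoking it is legitimate.
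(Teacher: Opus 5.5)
Citing \cite{Lieb_2002} as a fallback is exactly what the paper does; it gives no argument of its own. The self-contained argument you sketch, however, does not close, and the extension route you settle on cannot work.

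The missing idea is to apply Hölder to the gradient rather than to $f$. In three dimensions $W^{1,6/5}(\Lambda)\hookrightarrow L^2(\Lambda)$, so for mean-zero $f$ the Sobolev--Poincaré inequality gives
\[
\norm{f}_{L^2(\Lambda)}\leqslant C\norm{\gradient f}_{L^{6/5}(\Lambda)}\leqslant C\bigl(\norm{\gradient f}_{L^{6/5}(\Omega)}+\norm{\gradient f}_{L^{6/5}(\Lambda\setminus\Omega)}\bigr)\leqslant C\bigl(\norm{\gradient f}_{L^{2}(\Omega)}+\abs{\Lambda\setminus\Omega}^{1/3}\norm{\gradient f}_{L^{2}(\Lambda)}\bigr).
\]
The last step uses Hölder with $\frac56=\frac12+\frac13$ and $\abs{\Omega}\leqslant1$. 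Squaring gives the lemma, and this is the two-line proof in \cite{Lieb_2002}. Localizing the gradient is harmless because the $L^{6/5}$ norm splits over disjoint sets. So no Poincaré inequality on $\Omega$, no cube decomposition and no extension are needed. The obstacle you identified arises only because you localized $f$ itself through $L^6$.

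Your extension step also aims at a false inequality. Take $\Lambda\setminus\Omega$ to be two parallel slabs of width $t$, which split $\Omega$ into two components. Let $f=\pm1$ on these components, linear across the slabs. Then the Dirichlet-energy minimizer $h$ with $h=f$ on $\Omega$ is $f$ itself, so $\int_\Lambda\abs{\gradient h}^2=8/t$. On the other hand, $\int_\Omega\abs{\gradient f}^2=0$ and $\abs{\Lambda\setminus\Omega}^{2/3}\int_\Lambda\abs{\gradient f}^2\sim t^{-1/3}$.

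Hence no bound of the form $\int_\Lambda\abs{\gradient h}^2\leqslant\int_\Omega\abs{\gradient f}^2+C\abs{\Lambda\setminus\Omega}^{2/3}\int_\Lambda\abs{\gradient f}^2$ can hold. Controlling $h$ through its $L^2$ Dirichlet energy only returns the ordinary Poincaré bound and loses the crucial factor $\abs{\Lambda\setminus\Omega}^{2/3}$. The $L^2$ energy is the wrong quantity to localize; the $L^{6/5}$ norm of $\gradient f$ is the right one.
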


The second ingredient of the proof of \cref{bec:lem:prelim} is the following lemma. 
\begin{lem}\label{bec:lem:loc}
Let $\Psi_N$ be a normalized sequence of approximate ground states of $H_N$ and define
\[
	\Omega_{\mathbf{z}}=\{x\in\Lambda:\min_{k\geqslant2}\abs{x-z_k}>N^{-7/17}\}
\]
for $\mathbf{z}=(z_2,\dotsc,z_N)\in\Lambda^{N-1}$. Then, we have
\begin{equation}\label{bec:lem:loc:eq}
	\lim_{N\to\infty}\int_{\Lambda^{N-1}}\int_{\Omega_{\mathbf{z}}}\abs{\gradient_x\Psi_N(x,\mathbf{z})}^2\diff x\diff\mathbf{z}=0. 
\end{equation}
\end{lem}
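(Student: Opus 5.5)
We follow the strategy of \cite{Lieb_2002} and \cite[Section~5.1]{Lieb_2005}: a perturbation argument in the kinetic energy, combined with the ground state energy asymptotics \cref{eq:GSE}, which is now established by \cref{eq:gselowerbound,eq:gseupperbound}. For $\delta\in(0,1)$ we introduce the modified Hamiltonian
\[
	H_N^{(\delta)}=H_N-\delta\sum_{i=1}^N\ind_{\Omega^{(i)}}(x)\,(-\Delta_i)\,\ind_{\Omega^{(i)}}(x),
\]
where the $i$-th particle's kinetic energy is reduced by a factor $1-\delta$ only on the set $\Omega^{(i)}$ where it is farther than $N^{-7/17}$ from all other particles. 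More precisely, the quadratic form is $\sum_i\int(\abs{\gradient_i\Psi}^2-\delta\ind_{\Omega_{\mathbf{z}_i}}\abs{\gradient_i\Psi}^2)$. By symmetry of $\Psi_N$, the quantity in \cref{bec:lem:loc:eq} equals $N^{-1}$ times the total removed kinetic energy divided by $\delta$, so
\[
	\delta N\int\int_{\Omega_{\mathbf{z}}}\abs{\gradient_x\Psi_N}^2=\langle\Psi_N,(H_N-H_N^{(\delta)})\Psi_N\rangle\leqslant\langle\Psi_N,H_N\Psi_N\rangle-\lambda_1(H_N^{(\delta)}).
\]
Hence it suffices to show $\liminf_N\lambda_1(H_N^{(\delta)})/N\geqslant4\pi\mathfrak{a}(1-f(\delta))$ with $f(\delta)\to0$ as $\delta\to0$. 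Indeed, the left-hand side is then at most $4\pi\mathfrak{a}Nf(\delta)+o(N)$. Dividing by $\delta N$ would leave $f(\delta)/\delta$, so we actually need $f(\delta)=o(\delta)$, or at least $f(\delta)\leqslant C\delta^2$. As in \cite{Lieb_2002}, one shows in fact that the lower bound holds with $f(\delta)$ arbitrarily small for fixed $\delta$, up to $o(1)$ corrections.

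To get the lower bound on $\lambda_1(H_N^{(\delta)})$, we redo the proof of \cref{prop:gselower}, after rescaling to $\Lambda_N$. First reduce to compactly supported $\widetilde V_R$, using $V_N\geqslant V_{R,N}$ and $\mathfrak{a}_R\to\mathfrak{a}$ from \cref{lem:scatteringsol}. Then split $\Lambda_N$ into boxes of side $\ell=\mathfrak{a}Y^{-6/17}$ with Neumann conditions; in unrescaled units this is the scale $N^{-7/17}$ appearing in $\Omega_{\mathbf z}$. In each box, apply \cref{lem:manybodyDyson}. The key point is that in the Dyson lemma the kinetic energy used is only $-2\gradient\ind_{B(0,R_2)}\gradient$ in the relative coordinate, i.e.\ only near other particles, with $R_2=R\leqslant N^{-7/17}$ in original units. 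The reduced kinetic energy $(1-\delta)$ on $\Omega^{(i)}$ is then still sufficient, since the full kinetic energy remains available inside the balls of radius $R$ around other particles. A fraction $\varepsilon$ of kinetic energy, which controls the Temple-type inequality in \cref{lem:gseHnllower}, must also survive. There we bound $(1-\delta)$ times the full kinetic energy from below, losing only a factor $1-\delta$ in the Neumann gap term, which enters the error and not the main $4\pi\mathfrak{a}n(n-1)/\ell^3$ term. So the main term is unaffected, and the errors are $\mathcal O(Y^{1/17})$ plus a $\delta$-dependent loss in subleading terms. Finally, the superadditivity/convexity argument distributing particles among boxes proceeds exactly as in \cref{prop:gselower}.

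The main obstacle is this middle step: checking that the Temple inequality in \cref{lem:gseHnllower}, which uses the Neumann gap of the box Laplacian, still works when the kinetic energy on the complement of the neighbourhoods is weakened. The choice of exponents $6/17$ versus $7/17$ must leave room for the Dyson radius $R$ to be smaller than $N^{-7/17}$. The alternative route is to use a fraction $(1-\delta)/2$ of the Laplacian for the Temple bound. This costs a constant factor in the gap, which only worsens the $\mathcal O(Y^{1/17})$ error by a $\delta$-dependent constant, and so still gives $\liminf\lambda_1(H_N^{(\delta)})/N\geqslant4\pi\mathfrak{a}$ for every fixed $\delta$. Combined with the upper bound \cref{eq:gseupperbound}, this yields $\delta\limsup_N\int\int_{\Omega_{\mathbf z}}\abs{\gradient_x\Psi_N}^2=0$, hence \cref{bec:lem:loc:eq}.
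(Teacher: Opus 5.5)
Your proposal is correct and is essentially the paper's argument, which defers to \cite[Lemma~1]{Lieb_2002}: truncate to $\widetilde V_{R_0}$ and rerun the Lieb--Yngvason lower bound, using that the many-body Dyson lemma only consumes kinetic energy within the Dyson radius of other particles while Temple's inequality only needs a vanishing fraction $\varepsilon$ (for you $\varepsilon(1-\delta)$) of the full Laplacian; your comparison with $\lambda_1(H_N^{(\delta)})$ for fixed $\delta$ is an operator-level repackaging of the paper's inequality, which instead removes all kinetic energy in $\Omega_{\mathbf z}$ and absorbs the Temple fraction via the a priori bound $\langle\Psi_N,\sum_i-\Delta_i\Psi_N\rangle\leqslant CN$. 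There are two slips: the box side $\mathfrak a Y^{-6/17}$ corresponds to $N^{-5/17}$ in original units, not $N^{-7/17}$ (which is the Dyson radius $\mathfrak a Y^{-5/17}$ up to a constant), and your ``alternative route'' of giving a fixed fraction $(1-\delta)/2$ of the full Laplacian to Temple would also take kinetic energy from the near region and lower the main term $4\pi\mathfrak a$ by a fixed factor, so only your first bookkeeping (Temple with $\varepsilon(1-\delta)$, $\varepsilon\to0$) works.
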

With these two lemmas, \cref{bec:lem:prelim} can be shown. The proof is similar to that of \cite[Theorem~1]{Lieb_2002} or \cite[Theorem~5.1]{Lieb_2005} and we skip the details. 

It remains to prove \cref{bec:lem:loc}. 
\begin{proof}[Proof of \cref{bec:lem:loc}]
We want to show that 
\begin{equation}\label{bec:lem:loc:proof:eq1}
\begin{split}
	\MoveEqLeft[3] \frac{\langle\Psi_N,H_N\Psi_N\rangle}{N}-\int_{\Lambda^{N-1}}\int_{\Omega_{\mathbf{z}}}\abs{\gradient_x\Psi_N(x,\mathbf{z})}^2\diff x \\
	&=\int_{\Lambda^{N-1}}\biggl(\int_{\Lambda\setminus\Omega_\mathbf{z}}\abs{\gradient_x\Psi_N(x,\mathbf{z})}^2\diff x
	+\frac{1}{2}\int_{\Lambda}\sum_{k\geqslant2}V_N(x-z_k)\abs{\Psi_N(x,\mathbf{z})}^2\diff x\biggr)\diff\mathbf{z} \\
	&\geqslant 4\pi\mathfrak{a}-o(1). 
\end{split}
\end{equation}
Then, \cref{bec:lem:loc:eq} follows from the assumption $\langle\Psi_N,H_N\Psi_N\rangle=4\pi\mathfrak{a}N+o(N)$. Using the notation
\[
	\Omega_i=\{(x_1,\dotsc,x_N)\in\Lambda^N:\min_{k\neq i}\abs{x_i-x_k}>N^{-7/17}\}
\]
and the fact that $\Psi_N$ is symmetric, we can rewrite \cref{bec:lem:loc:proof:eq1}:
\begin{equation*}
	\sum_{i=1}^N\int_{\Lambda^N\setminus\Omega_i}\abs{\gradient_i\Psi_N}^2+\sum_{1\leqslant i<j\leqslant N}\int_{\Lambda^N}V_N(x_i-x_j)\abs{\Psi_N}^2
	\geqslant 4\pi\mathfrak{a}N-o(N). 
\end{equation*}
It is sufficient to show 
\begin{equation}\label{bec:lem:loc:proof:eq2}
	\sum_{i=1}^N\int_{\Lambda^N\setminus\Omega_i}\abs{\gradient_i\Psi_N}^2+\sum_{1\leqslant i<j\leqslant N}\int_{\Lambda^N}V_{R_0,N}(x_i-x_j)\abs{\Psi_N}^2
	\geqslant 4\pi\mathfrak{a}_{R_0}N-o(N). 
\end{equation}
for fixed $R_0>0$, since this implies
\begin{align*}
	\MoveEqLeft[3] \liminf_{N\to\infty}\frac{1}{N}\biggl(\sum_{i=1}^N\int_{\Lambda^N\setminus\Omega_i}\abs{\gradient_i\Psi_N}^2+\sum_{1\leqslant i<j\leqslant N}\int_{\Lambda^N}V_N(x_i-x_j)\abs{\Psi_N}^2\biggr) \\
	&\geqslant\liminf_{N\to\infty}\frac{1}{N}\biggl(\sum_{i=1}^N\int_{\Lambda^N\setminus\Omega_i}\abs{\gradient_i\Psi_N}^2+\sum_{1\leqslant i<j\leqslant N}\int_{\Lambda^N}V_{R_0,N}(x_i-x_j)\abs{\Psi_N}^2\biggr) \\
	&\geqslant 4\pi\mathfrak{a}_{R_0}\xrightarrow{R_0\to\infty}4\pi\mathfrak{a}. 
\end{align*}
The rest of the proof is similar to the proof of \cite[Lemma~1]{Lieb_2002} or \cite[Lemma~5.2]{Lieb_2005} and we skip the details. 
\end{proof}

\begin{rem}
Note that we did not need \cref{assu:V} \ref{assu:Vc} for this proof, since $\widetilde{V}\in L^1(\R^3)$ already implies $\mathfrak{a}_{R_0}\to\mathfrak{a}$. 
\end{rem}

The proof of the following lemma requires that every normalized sequence of approximate ground states $\Psi_N$, in the sense that $\langle\Psi_N,H_N\Psi_N\rangle=\lambda_1(H_N)+o(N)$, satisfies $\langle\Psi_N,H_N\Psi_N\rangle=4\pi\mathfrak{a}N+o(N)$, which we proved in \cref{sec:gse,sec:gseup}, and that \cref{eq:BEC} holds for every such sequence $\Psi_N$, which we proved in \cref{bec:lem:prelim}. 
It is similar to that of \cite[Lemma~31]{Nam_2023} and \cite[Proposition~20]{Hainzl_2022} and we skip the details. 

\begin{lem}\label{bec:lem:main}
On the truncated Fock space $\mathcal{F}_+^{\leqslant N}$, we have
\[
	UH_NU^\ast=\ind_+^{\leqslant N}\mathcal{H}\ind_+^{\leqslant N}\geqslant 4\pi\mathfrak{a}N+C^{-1}\numop-C,
\]
for some constant $C>0$ independent of $N$. 
\end{lem}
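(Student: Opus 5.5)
We follow the strategy of \cite[Lemma~31]{Nam_2023} and \cite[Proposition~20]{Hainzl_2022}. The plan is to first get a weak, $o(N)$-type condensation estimate from the qualitative BEC of \cref{bec:lem:prelim}, and then upgrade it to the linear bound $C^{-1}\numop$ with an IMS localization in the particle number combined with the renormalized identity \cref{eq:diagonal:Hbound}.

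\emph{Step 1 (qualitative lower bound).} We argue by contradiction from \cref{bec:lem:prelim} and the energy asymptotics \cref{eq:GSE}. Suppose there were $\delta>0$ and states $\Phi_N\in\mathcal{F}_+^{\leqslant N}$ with
\[
\langle \Phi_N,\ind_+^{\leqslant N}\mathcal{H}\ind_+^{\leqslant N}\Phi_N\rangle\leqslant 4\pi\mathfrak{a}N+o(N)
\quad\text{and}\quad
\langle\Phi_N,\numop\Phi_N\rangle\geqslant\delta N.
\]
Then $U^\ast\Phi_N$ would be approximate ground states that are not condensed, which contradicts \cref{bec:lem:prelim}. From this, together with the spectral theorem, we get for every $\delta>0$ an operator bound of the form
\[
\ind_+^{\leqslant N}\mathcal{H}\ind_+^{\leqslant N}\geqslant 4\pi\mathfrak{a}N-o(N)+c_\delta N\,\ind\{\numop\geqslant\delta N\}.
\]
In the non-radial case we must make sure that \cref{bec:lem:prelim} holds uniformly, i.e.\ for all sequences with energy $4\pi\mathfrak{a}N+o(N)$. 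This is exactly what \cref{sec:gse,sec:gseup,sec:bec} provide.

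\emph{Step 2 (localization in $\numop$).} Pick smooth $f^2+g^2=1$ with $f$ supported in $\{\numop\leqslant 2\delta N\}$ and $g$ supported in $\{\numop\geqslant\delta N\}$. Apply the IMS formula
\[
\mathcal{H}=f\mathcal{H}f+g\mathcal{H}g-\tfrac12\bigl([f,[f,\mathcal{H}]]+[g,[g,\mathcal{H}]]\bigr).
\]
Only $Q_2$ and $Q_3$, together with their $\mathrm{err}$ parts, change the particle number, and they change it by at most $2$. So the double commutators are bounded by $C(\delta N)^{-2}\bigl(\abs{\langle Q_2\rangle}+\abs{\langle Q_3\rangle}\bigr)$. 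Using $Q_4\geqslant0$ and $\norm{V}_{L^1}<\infty$, this is at most $C\delta^{-2}N^{-1}\bigl(\mathcal{H}+CN\bigr)$, i.e.\ of order $O(1)$ relative to the energy. On the range of $g$, Step 1 gives a gain $c_\delta N\geqslant C^{-1}\numop$, and this absorbs everything.

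\emph{Step 3 (the range of $f$, main obstacle).} On the range of $f$ we have $\numop\leqslant 2\delta N$. Here we use \cref{eq:diagonal:Hbound}:
\[
\mathcal{H}=T_1T_cT_2\bigl(E_N+\sq(E_\textnormal{Bog})+T_2^\ast Q_4T_2+\mathcal{E}^{(T_2)}\bigr)T_2^\ast T_c^\ast T_1^\ast,
\]
with the parameters chosen as in \cref{sec:gseup}: $\ell$ fixed and small, $R=N\ell/5$, $M=\ell N$. We also use $\sq(E_\textnormal{Bog})\geqslant c(\numop+\sq(-\Delta))$. Every term of \cref{eq:diagonal:lem:main:errorestimate} is then controlled by
\[
\bigl(\tfrac14+\varepsilon+C\ell^{1/2}\bigr)\bigl(T_2^\ast Q_4T_2+\numop+\sq(-\Delta)\bigr)+C,
\]
except the terms $(\numop+1)^2/N$ and $(\numop+1)^{j+1}/M^j$. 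Those are the main difficulty, and this is precisely why the localization is needed. After conjugating back, and using \cref{quadratic:numop:bound,cubic:numop:bound} to compare $\numop$ before and after the transforms, these terms are bounded by $C\delta\,(\numop+1)$ on the range of $f$.

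Choosing $\delta$, $\varepsilon$ and $\ell$ small, the remaining errors are absorbed into $c\numop$, and we obtain $f\mathcal{H}f\geqslant f\bigl(E_N+C^{-1}\numop-C\bigr)f$. One subtlety is that the transforms do not commute exactly with $f$. We handle this as in \cite{Nam_2023}, by localizing on a slightly larger region and using the moment bounds of \cref{quadratic:numop:bound,cubic:numop:bound} (Chebyshev in $\numop$). Since $E_N=4\pi\mathfrak{a}N+O(1)$, combining this with Step 2 yields the claim. Throughout, the new error terms coming from the non-compact support are those of \cref{lem:newerror}. With $R\sim N$ and $\gamma>1$ we have $NR^{-\gamma}\leqslant C$, so they are harmless.
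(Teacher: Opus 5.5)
Your architecture is the route of \cite[Lemma~31]{Nam_2023} and \cite[Proposition~20]{Hainzl_2022}, to which the paper defers. It has three parts: a contradiction with \cref{bec:lem:prelim} on states supported in $\{\numop\geqslant\delta N\}$, IMS localization in $\numop$ at scale $\delta N$ with errors controlled by the off-diagonal parts, and \cref{eq:diagonal:Hbound} on $\{\numop\leqslant 2\delta N\}$. However, Step~3 does not close with the parameters you fix. With the cubic cut-off $M=\ell N$ taken from \cref{sec:gseup}, the term $CM^{1/2}N^{-1/2}\ell^{-1}(\numop+1)$ of \cref{eq:diagonal:lem:main:errorestimate} equals $C\ell^{-1/2}(\numop+1)$, which grows as $\ell\to0$. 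So your claim is false: it is not true that every term except $(\numop+1)^2/N$ and $(\numop+1)^{j+1}/M^j$ is bounded by $(\frac14+\varepsilon+C\ell^{1/2})(T_2^\ast Q_4T_2+\numop+\sq(-\Delta))+C$. This term cannot be absorbed by the gap $\sq(E_\textnormal{Bog})\geqslant 4\pi^2\numop$. Keeping $\ell$ of order one, as in \cref{sec:gseup}, does not help either. That choice only had to give an $O(1)$ upper bound. Here it leaves $C\ell^{1/2}(\numop+1)$ and $\varepsilon^{-1}C\ell(\numop+1)$ with uncontrolled constants. The latter term also forces $\ell\ll\varepsilon$, which your bound ignores.

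The repair is to decouple $M$ from $\ell$, which is allowed because only $MN^{-1}\leqslant\ell$ is required. Fix $\varepsilon$ small, then $\ell\ll\varepsilon$, then $M=\delta_\beta N$ with $\delta_\beta^{1/2}\ell^{-1}$ small. Only after that choose the localization threshold $\delta\ll\min\{\varepsilon,\delta_\beta\}$. On the range of $f$ you can then use $\mathcal{U}(\numop+1)^k\mathcal{U}^\ast\leqslant C_k(\numop+1)^k$, $\mathcal{U}(\numop+1)\mathcal{U}^\ast\geqslant C^{-1}(\numop+1)$ and $[f,\numop]=0$. With these, the moment terms $\varepsilon^{-1}(\numop+1)^2/N$ and $\varepsilon^{-1}(\numop+1)^{j+1}/M^j$ cost at most $C\varepsilon^{-1}\bigl(\delta+(\delta/\delta_\beta)^j\bigr)(\numop+1)$, which is absorbable. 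This is the same decoupling as in \cref{ch:proof}, where $M=\delta_\beta N$ is chosen independently of $\ell$.

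A smaller point concerns Step~1. The global operator inequality with $c_\delta N\,\ind\{\numop\geqslant\delta N\}$ does not follow from the spectral theorem, because $\mathcal{H}$ does not commute with $\numop$. What the contradiction argument actually gives, applied to $g\Phi/\norm{g\Phi}$, is $g\mathcal{H}g\geqslant(4\pi\mathfrak{a}+c_\delta)Ng^2$ for $N\geqslant N_\delta$. That localized bound is all that Step~2 uses.
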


%%%%%%%%%%%%%%%%%%%%%%%%%%%%%%%%%%%%%%%%%%%%%%%%%%%%%%%%
\section{Proof of \texorpdfstring{\cref{thm:mainthm1}}{Theorem 2.2}}\label{ch:proof}

How to obtain optimal BEC \cref{eq:BECrdm_theo} was explained in the previous section; we will now focus on the excitation spectrum. We will use the notations $\mathcal{U}=T_1T_cT_2$, $\widetilde{H}_N=H_N-E_N$ and $\widetilde{\mathcal{H}}=\mathcal{H}-E_N$ with $E_N$ as in \cref{eq:diagonal:RNbound}. The main result in this section is the following lemma.

\begin{lem}\label{ch:proof:lem:main}
Let $0<\kappa<1/13$ such that \cref{eq:Vdecay} is satisfied for 
\[
	\gamma=\frac{1+\kappa}{1-2\kappa}. 
\]
The eigenvalues $\lambda_1(\widetilde{H}_N)\leqslant\lambda_2(\widetilde{H}_N)\leqslant\dots\leqslant\Theta$, with $1\leqslant\Theta\leqslant N^{\kappa/(1-\vartheta)}$, of $\widetilde{H}_N$ on $L^2_\textnormal{s}(\Lambda^N)$ satisfy
\[
	\lambda_L(\widetilde{H}_N)=\lambda_L(\sq(E_\textnormal{Bog}))+\mathcal{O}( N^{-\kappa}\Theta^{1-\vartheta}),
\]
where $0\leqslant\vartheta<1-12\kappa$ is arbitrarily chosen for $\gamma\in[6/5,14/11)$ and 
\[
	\vartheta=\frac{1-2\kappa}{3}
\]
for $\gamma\in(1,6/5)$. 
Here, $\sq(E_\textnormal{Bog})$ is understood as an operator on $\mathcal{F}_+$. 
\end{lem}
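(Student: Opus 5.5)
The plan is to use the min-max principle in both directions, comparing $\widetilde{H}_N$ through the unitaries $U$ and $\mathcal{U}=T_1T_cT_2$ with $\sq(E_\textnormal{Bog})$ on $\mathcal{F}_+$. The main input is \cref{eq:diagonal:Hbound}, namely $\mathcal{U}^\ast\widetilde{\mathcal{H}}\mathcal{U}=\sq(E_\textnormal{Bog})+T_2^\ast Q_4T_2+\mathcal{E}^{(T_2)}$. The error bound \cref{eq:diagonal:lem:main:errorestimate} is controlled in terms of $\numop$, $\sq(-\Delta)$ and $T_2^\ast Q_4T_2$, and all three must be bounded on low-energy states. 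I will choose $\ell=N^{-\alpha}$, $M=N^{\beta}$ and $R=N^{1-2\kappa}$ up to constants. The constraint $NR^{-\gamma}\leqslant N^{-\kappa}$ is what produces the relation $\gamma=(1+\kappa)/(1-2\kappa)$. The exponents $\alpha,\beta,\varepsilon$ are then optimized, and I expect this to give exactly the ranges $\kappa<1/13$ and $\vartheta$ stated.

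\emph{A priori bounds.} Let $\Phi$ be in the spectral subspace of $\widetilde{H}_N$ below $\Theta$, and set $\xi=\mathcal{U}^\ast U\Phi$. By \cref{bec:lem:main} we have $\langle \numop\rangle_{U\Phi}\leqslant C\Theta$. Higher moments $\langle(\numop+1)^{k}\rangle$ should be obtained by the usual localization argument, using spectral projections and commuting $\numop$ with $\mathcal{H}$ as in \cite{Nam_2023,Hainzl_2022}. Transporting these through $\mathcal{U}$ with \cref{quadratic:numop:bound,cubic:numop:bound,diagonal:lem:bounds} gives $\langle(\numop+1)^k\rangle_\xi\leqslant C_k\Theta^k$.

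Next I insert \cref{eq:diagonal:Hbound} with a fixed small $\varepsilon$, absorb the $\frac14T_2^\ast Q_4T_2$ term, and use $\sq(E_\textnormal{Bog})\geqslant\sq(-\Delta)$. This yields $\langle\sq(-\Delta)+T_2^\ast Q_4T_2\rangle_\xi\leqslant C(\Theta+NR^{-\gamma}+\ell^{-1}M^{1/2}N^{-1/2}\Theta+\dots)$.

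\emph{Upper bound.} For the upper bound I take trial states $\xi$ in the span of the first $L$ eigenvectors of $\sq(E_\textnormal{Bog})$, which are eigenstates of $\numop$ with momenta bounded by $\sqrt\Theta$. I then map them to $\ind_+^{\leqslant N}\mathcal{U}\xi$ and control the truncation using \cref{eq:off_diag_HN}. On these states $T_2^\ast Q_4T_2$ is small: by \cref{diagonal:lem:bounds} it is at most $CN^{-1}\ell^{-2}+CN^{-1}\Theta^2$. The error $\mathcal{E}^{(T_2)}$ is then $\mathcal{O}(N^{-\kappa}\Theta^{1-\vartheta})$ once the parameters are chosen. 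Min-max gives $\lambda_L(\widetilde{H}_N)\leqslant\lambda_L(\sq(E_\textnormal{Bog}))+\mathcal{O}(N^{-\kappa}\Theta^{1-\vartheta})$.

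\emph{Lower bound.} For the lower bound I drop $T_2^\ast Q_4T_2\geqslant0$ after using a fraction of it to absorb the $Q_4$-parts of the error. The error is bounded by the a priori estimates restricted to the spectral subspace below $\Theta$. Min-max applied to the space $\mathcal{U}^\ast U(\text{span of the first }L\text{ eigenvectors})$ then gives the reverse inequality.

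\emph{Main obstacle.} The difficult step is the bookkeeping of exponents. On the one hand, the terms $CM^{1/2}N^{-1/2}\ell^{-1}(\numop+1)$, $\varepsilon^{-1}C_j(\numop+1)^{j+1}M^{-j}$, $\varepsilon^{-1}\ell(\numop+1)$ and $C\ell^{1/2}(\numop+1)$ must all be $\lesssim N^{-\kappa}\Theta^{1-\vartheta}$. On the other hand, $NR^{-\gamma}$ must be $\leqslant N^{-\kappa}$ while still satisfying $R<\ell N/4$. The term with $\varepsilon$ has to be optimized as $\varepsilon\sim(\ell\Theta)^{1/2}$, which produces the powers of $\Theta$. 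I would first fix $R$ from the decay condition, then balance $\ell$ against $M$. I expect the regime split at $\gamma=6/5$ to arise from which of the constraints $R\leqslant \ell N$ or the $\ell^{1/2}$ term is binding.
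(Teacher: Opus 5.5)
\textbf{There is a genuine gap: your parameter strategy cannot give the claimed error.} Your architecture matches the scheme the paper imports from \cite{Nam_2023,Hainzl_2022}: min-max in both directions through $U$ and $\mathcal{U}$, Bogoliubov eigenstates mapped to $\ind_+^{\leqslant N}\mathcal{U}\xi$ for the upper bound, and absorbing the $Q_4$-parts for the lower bound. But with your parameters you cannot obtain $N^{-\kappa}\Theta^{1-\vartheta}$ with $\vartheta>0$ uniformly in $\Theta$. You fix $R=N^{1-2\kappa}$ independently of $\Theta$. Everything is derived under the standing assumption $4RN^{-1}<\ell$, so this forces $\ell>4N^{-2\kappa}$. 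Then the term $C\ell^{1/2}(\numop+1)$ in \cref{eq:diagonal:lem:main:errorestimate} carries a fixed constant rather than $\varepsilon$. On states with $\numop\sim\Theta$ it is therefore of order at least $N^{-\kappa}\Theta$, both for your trial states and in the lower bound.

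The choice of $\varepsilon$ cannot rescue this. The terms $\varepsilon(\numop+1)$ and $\varepsilon^{-1}\ell(\numop+1)$ are both linear in $\numop$. Their optimum is $\varepsilon\sim\ell^{1/2}$, not $(\ell\Theta)^{1/2}$, and it again gives $\ell^{1/2}\Theta$, so no power of $\Theta$ is gained. Your scheme therefore yields at best $\vartheta=0$. That suffices for the $\mathcal{O}(N^{-\kappa}\Theta)$ statement in \cref{thm:mainthm1}, but not for this lemma. In particular it fails for $\gamma\in(1,6/5)$, where $\vartheta=(1-2\kappa)/3$ is prescribed.

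\textbf{The missing idea: make the cut-off scales depend on $\Theta$.} The paper takes $\ell=N^{-2\kappa}\Theta^{-2\vartheta}$, $\varepsilon=\ell^{1/2}=N^{-\kappa}\Theta^{-\vartheta}$ and $R=N\ell/5$, so that $\ell^{1/2}\Theta=N^{-\kappa}\Theta^{1-\vartheta}$. The price is that the constant error worsens to $NR^{-\gamma}\simeq N^{-\kappa}\Theta^{2\gamma\vartheta}$.
\begin{itemize}
\item Requiring $2\gamma\vartheta\leqslant 1-\vartheta$, that is $\vartheta\leqslant(2\gamma+1)^{-1}=(1-2\kappa)/3$, is the constraint that binds when $\kappa<1/17$, i.e.\ $\gamma<6/5$. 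This, and not the interplay you guessed, is the origin of the regime split.
\item For $\gamma\geqslant 6/5$ the binding constraint is instead $\vartheta<1-12\kappa$. It comes from keeping $CM^{1/2}N^{-1/2}\ell^{-1}(\numop+1)$ of order at most $N^{-\kappa}\Theta^{1-\vartheta}$ up to the top of the range $\Theta=N^{\kappa/(1-\vartheta)}$, with $M=\delta_\beta N$ also $\Theta$-dependent.
\end{itemize}

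\textbf{Secondary gap: the a priori moment bounds.} You assume $\langle(\numop+1)^k\rangle\leqslant C_k\Theta^k$ on the whole spectral subspace below $\Theta$. This does not follow from the linear bound in \cref{bec:lem:main}, and your sketch gives no argument for it. The scheme the paper follows avoids such bounds by localizing in $\numop$ at the scale $M_0=\delta_\alpha N$. The part with $\numop$ of order $M_0$ or larger is discarded via \cref{bec:lem:main}. On the remaining part, $(\numop+1)^{j+1}M^{-j}\leqslant C(M_0/M)^j(\numop+1)=CN^{-\eta j}(\numop+1)$, and this is where the condition $\eta\geqslant 3\kappa/j$ enters.
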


\begin{proof}
We denote $R=N\ell/5$, $M_0=\delta_\alpha N$, and $M=\delta_\beta N$. 
Let $1\leqslant\Theta\leqslant N^{\kappa/(1-\vartheta)}$. 
We differentiate between two cases. 
\begin{enumerate}[left=0pt, label=\textit{Case \arabic*:}, ref=case \arabic*]
\item\label{case1} $\gamma\in[6/5,14/11)$. 
We choose $0<\eta\leqslant(1-13\kappa)/2$ and 
\[
	\vartheta=1-\frac{12\kappa}{1-2\eta}\in[1/13,1-12\kappa). 
\]
\item\label{case2} $\gamma\in(1,6/5)$. 
We choose
\[
	\eta=\frac{6-5\gamma}{2\gamma}\quad\text{and}\quad \vartheta=1-\frac{12\kappa}{1-2\eta}=\frac{1-2\kappa}{3}. 
\]
\end{enumerate} 
In both cases, let
\begin{align*}
	\delta_\alpha&=N^{(-2-4\kappa-\eta)/5}\Theta^{-2(1+2\vartheta)/5},
	\qquad \delta_\beta=N^{(-2-4\kappa+4\eta)/5}\Theta^{-2(1+2\vartheta)/5}, \\
	\ell&=N^{-2\kappa}\Theta^{-2\vartheta},
	\qquad\varepsilon=N^{-\kappa}\Theta^{-\vartheta},
\end{align*}
and $j\geqslant1$ such that
\[
	\eta\geqslant\frac{3\kappa}{j}. 
\]

With this choice of parameters, we can conclude the proof using the strategy used for \cite[Lemma~32]{Nam_2023} and \cite[Theorem~1]{Hainzl_2022}, which consists of first proving the lower bound on $\lambda_L(\widetilde{H}_N)$ and then proving the matching upper bound on $\lambda_L(\widetilde{H}_N)$. 
We skip the details of the proof. 
\end{proof}

To conclude the proof of \cref{thm:mainthm1}, we note that $\lambda_1(\sq(E_\textnormal{Bog}))=0$. Hence, by \cref{ch:proof:lem:main}, we have $\lambda_1(H_N)=E_N+\lambda_1(\widetilde{H}_N)=E_N+\mathcal{O}(N^{-\kappa})$ for any admissible $0\leqslant\kappa<1/13$, which shows the first part of \cref{thm:mainthm1}. This, in combination with \cref{ch:proof:lem:main}, also implies the second part of the theorem, since the spectrum of $\widetilde{H}_N$ below a threshold $\Theta\leqslant N^{\kappa/(1-\vartheta)}$ consists of eigenvalues of the form 
\[
	\sum_{p\in 2\pi\Z^3\setminus\{0\}}n_p\sqrt{p^4+16\pi\mathfrak{a}p^2}+\mathcal{O}(N^{-\kappa}\Theta^{1-\vartheta})
\]
with $n_p\in\N_0$ for all $p\in2\pi\Z^3\setminus\{0\}$, where only finitely many $n_p$ are non-zero, and we have 
\[
	\lambda_L(H_N-\lambda_1(H_N))=\lambda_L(\widetilde{H}_N)+\mathcal{O}(N^{-\kappa})
\]
for the $L$-th lowest eigenvalue $\lambda_L(H_N-\lambda_1(H_N))$ of $H_N-\lambda_1(H_N)$ below $\Theta$. 

\subsection*{Acknowledgements}

LR thanks Lea Boßmann for helpful feedback on the manuscript. This work was partially funded by the Deutsche Forschungsgemeinschaft (DFG, German Research Foundation) – Project-ID 470903074 – TRR 352.

\emergencystretch=1em
\printbibliography[title={References},heading=bibintoc]

\end{document}